\newcommand{\boldtheta}{\boldsymbol{\theta}}
\newcommand{\boldTheta}{\boldsymbol{\Theta}}
\newcommand{\tmax}{t_{\text{max}}}
\newtheorem{theorem}{Theorem}
\newtheorem{lemma}{Lemma}
\newtheorem{corollary}{Corollary}
\title{Correcting for sampling variability in maximum likelihood-based one-sample log-rank tests}
\author[1,*]{Moritz Fabian Danzer}
\author[1]{Rene Schmidt}
\affil[1]{Institute of Biostatistics and Clinical Research, University of Münster, Germany}
\affil[*]{Corresponding author, mail: moritzfabian.danzer@ukmuenster.de}
\begin{document}
\begin{bibunit}

\maketitle

\begin{abstract}
	Single-arm studies in the early development phases of new treatments are not uncommon in the context of rare diseases or in paediatrics. If an assessment of efficacy is to be made at the end of such a study, the observed endpoints can be compared with reference values that can be derived from historical data. For a time-to-event endpoint, a statistical comparison with a reference curve can be made using the one-sample log-rank test. In order to ensure the interpretability of the results of this test, the role of the reference curve is crucial. This quantity is often estimated from a historical control group using a parametric procedure. Hence, it should be noted that it is subject to estimation uncertainty. However, this aspect is not taken into account in the one-sample log-rank test statistic. We analyse this estimation uncertainty for the common situation that the reference curve is estimated parametrically using the maximum likelihood method, and indicate how the variance estimation of the one-sample log-rank test can be adapted in order to take this variability into account. The resulting test procedures are illustrated using a data example and analysed in more detail using simulations, particularly in comparison with established two-sample methods.
\end{abstract}

\section{Introduction}\label{sec:introduction}
Single-arm clinical trials are still frequently conducted as part of the development process of a new treatment. They are used in particular in earlier phases in which the initial effectiveness of the treatment is to be determined (Phase II), but also in general in the case of rare diseases. If the primary endpoint is a time-to-event endpoint, the one-sample log-rank test, which has been introduced in \cite{Breslow:1975}, is used. The primary endpoint data collected in the study is compared with a reference curve. This curve can be determined based on historic data, for example registers or data from previous studies that reflect the current standard of care. In particular, the choice arises as to whether the reference curve should be estimated parametrically or non-parametrically from the available data.\\
In both cases, the one-sample log-rank test has to be viewed critically when it comes to the role of the reference curve. To ensure that the comparison is still clinically interpretable, it must first be ensured that the distribution of important baseline characteristics in the historic data corresponds to that expected for the new cohort to be recruited. In addition, there is a risk that the historic data might not reflect recent advances in diagnostics and/or concomitant therapy for standard of care. A careful selection of the historical cohort, taking into account the relevant literature (see e.g. \cite{Pocock:1976, Ghadessi:2020}), is urgently needed to address these issues. From a statistical point of view, the sampling variability of the estimation process of the reference curve from the historical data should be scrutinised intensively. This uncertainty is ignored in the one-sample log-rank test as it is assumed that the reference is fixed and deterministic. As we will also point out below, the resulting test addresses a different hypothesis from the one in which the researcher is usually interested.\\
This issue has been addressed by \cite{Danzer:2022} in the case where the reference curve is estimated by a non-parametric procedure from the historical control data, i.e. by the Kaplan-Meier or the Nelson-Aalen estimator for the survival and the cumulative hazard function, respectively. In this manuscript, we want to investigate the effects on the properties of the test when the reference curve is estimated parametrically. A large number of distribution families are available for this purpose (see e.g. Section 2.5 of \cite{Klein:2005}), which are also implemented in software (e.g. the R package \texttt{flexsurv} \cite{Jackson:2016}). In \cite{Szurewsky:2025}, several application examples have been presented. As performed there and also suggested in Chapter 12.4 of \cite{Klein:2005}, criteria such as the Akaike information criterion (AIC) can be used to find the best model among the possible candidates. For these parametric methods of deriving the reference curve, we would now also like to analyse how this procedure affects the variability of the test statistics of the one-sample log-rank test.\\
We first introduce some notation in Section \ref{sec:notation} and briefly discuss the hypotheses already mentioned, which are of interest in this context and are to be tested statistically. In \ref{sec:par_est}, we introduce some basic quantities and results of maximum likelihood estimation. We require them in order to quantify the effects of the estimation procedures on the statistical testing methods for the different hypotheses in Section \ref{sec:testing}. Afterwards, we demonstrate the different procedures in an example based on real data (Section \ref{sec:application}) and investigate small sample properties of the asymptotically valid procedures in a simulation study (Section \ref{sec:sim_study}). Finally, the results are discussed in Section \ref{sec:discussion}. Further simulation results and a complete derivation of the asymptotic results can be found in the Supplementary Material. The code underlying the case study and the simulation study is available at \url{https://github.com/moedancer/MLEVar_OSLR}.

\section{Notation}\label{sec:notation}
As in \cite{Danzer:2022}, we assume that historic data on the control treatment (e.g. the current standard of care) is available. In addition, there is further data from a single-arm study in which patients undergo an experimental therapy that is to be compared with the control therapy. These two groups are referred to below as group A and group B, respectively.\\ 
Let $\mathcal{N}_x$ denote the set of patients from group $x \in \{A,B\}, \; n_x \coloneqq |\mathcal{N}_x|$ the number of patients in the two groups and $n\coloneqq n_A + n_B$ the total number of patients. We denote the ratio between the two group sizes as $\pi=n_B/n_A$.\\
We are interested in the distribution of a time-to-event variable $T$. We assume that the observations in both groups are distributed as the two random variables $T_A$ and $T_B$, respectively. However, observation of this variable might be censored due to an independent censoring mechanism through the random variable $C_A$ in the historic control group or $C_B$ in the prospective group. In any case, we can only observe $X_x \coloneqq \min(T_x,C_x)$ and $\delta_x \coloneqq \mathbbm{1}(T_x\leq C_x)$ for $x \in \{A,B\}$. We assume that all patients are observed for a maximum time $\tmax < \infty$, i.e. $X\leq \tmax$ with probability 1. Density, cumulative distribution, survival, hazard and cumulative hazard functions will be denoted by $f,F,S,\lambda$ and $\Lambda$, respectively. The variable to which the function refers is specified in the index. For example, $f_{T_A}$ denotes the density function of the random variable $T_A$, i.e. the time-to-event variable in the historical control cohort. For $x \in \{A,B\}$, we assume that $n_x$ independent and identically distributed replicates of the tuple $(X_x, \delta_x)$ are given which we will denote by $(X_{x,i}, \delta_{x,i})$ for $i\in \{1,\dots,n_x\}$. We also assume that the data from both groups are independent of each other. By the functions $N_{x,i}, Y_{x,i} \colon \mathbb{R}_+ \to \{0,1\}$ we denote the time-dependent event and at-risk indicators for patient $i \in \{1,\dots,n_x\}$ in group $x \in \{A,B\}$, i.e.
\begin{equation*}
N_{x,i}(t) \coloneqq \mathbbm{1}(T_{x,i} \leq t \wedge C_{x,i}) \qquad \text{and} \qquad Y_{x,i}(t) \coloneqq \mathbbm{1}(X_{x,i} \geq t)
\end{equation*}
for all $t \geq 0$. By $N_x$ and $Y_x$ we denote the accumulated quantities over the whole group.\\
When comparing the data from these two groups, we are of course interested in the examination of the null hypothesis
\begin{equation}\label{eq:H0}
	H_0\colon \Lambda_{T_B}(t) = \Lambda_{T_A}(t) \quad \text{for all }t\in[0,\tmax].
\end{equation}
In practice, such a comparison is carried out by comparing the prospectively collected data of group B with a reference curve derived from the historic data of group A. Let this reference curve in terms of the cumulative hazard function be given by $\Lambda_{T_A, \text{ref}}$. The corresponding test, which is called the one-sample log-rank test, investigates the hypothesis
\begin{equation*}
	H_{0,\text{OSLR}}\colon \Lambda_{T_B}(t) = \Lambda_{T_A, \text{ref}}(t) \quad \text{for all }t\in[0,\tmax].
\end{equation*}
As this test only makes a comparison with the reference curve, the uncertainty inherent in its determination is not taken into account.  Hence, if we interpret the one-sample log-rank test as a test of the hypothesis $H_0$ from \eqref{eq:H0}, the type I error rate can be much higher than the nominal level of the one-sample log-rank test. The function $\Lambda_{T_A, \text{ref}}$ can be determined non-parametrically by the Nelson-Aalen estimator. In such settings, the type I error rate inflation that occurs when the one-sample log-rank test is regarded as a test of $H_0$ is determined in \cite{Danzer:2022}. Based on this, a test statistic is proposed in \cite{Feld:2024} that can be calculated solely from summary statistics from group A.\\ 
We will now focus on the case where $\Lambda_{T_A, \text{ref}}$ is estimated parametrically from the control group data. Our purpose is to provide a corrected one-sample log-rank test for this setting, too. Therefore, we consider a parametrised family of distributions of the variable $T_A$. The parameter space $\boldTheta$ is an open subset of the $q$-dimensional Euclidean space and we assume $T_A$ to be distributed according to the element $\boldtheta_A \in \boldTheta$. The functions mentioned above describing the distribution of $T_A$ will now be functions of two arguments and hence map from $\boldTheta \times \mathbb{R}_+$. If $\hat{\boldtheta}_A$ denotes the estimated parameter from historic control data, the reference curve will thus be determined by $\Lambda_{T_A, \text{ref}} = \Lambda_{T_A}(\hat{\boldtheta}_A, \cdot)$.\\
Our aim here is to quantify the inflation of the type I error level of the one-sample log-rank test when a one-sample log-rank test with $\Lambda_{T_A, \text{ref}} = \Lambda_{T_A}(\hat{\boldtheta}_A, \cdot)$ is used to test $H_0$ instead of $H_{0,\text{OSLR}}$ and to adjust the test in such a way that it can be used as a valid test for $H_0$. 

\section{Parametric estimation}\label{sec:par_est}
A first look at the impact of the estimation variability on the type I error rate of the one-sample log-rank test has already been taken in \cite{Szurewsky:2025} via simulation. There, however, only exponentially distributed variables were considered, whereby the parameter of the exponential distribution was derived via the median survival time in the historical control cohort. We now want to cover as large a number of distributions as possible, whereby the parameters are estimated using the maximum likelihood procedure. This estimation is probably the most common method. For our considerations, the well-known asymptotic results for this method are also extremely relevant in order to quantify the effects on the test statistics. We will first briefly discuss these results and the necessary conditions.\\
For maximum likelihood estimation of the parameter $\boldtheta_A = (\theta_{A,1}, \dots, \theta_{A,q})$ we require the standard regularity conditions to hold s.t. we have consistency of the maximum likelihood estimator $\hat{\boldtheta}_A$, i.e. $\hat{\boldtheta}_A \overset{\mathbb{P}}{\to} \boldtheta_A$ and asymptotic normality of this estimator, i.e.
\begin{equation}
\sqrt{n_A} \left( \hat{\boldtheta}_A - \boldtheta_A \right) \overset{\mathcal{D}}{\to} \mathcal{N}\left( 0, \mathcal{I}(\boldtheta_a)^{-1} \right)
\end{equation}
where $\mathcal{I}(\boldtheta_A)$ is the expected Fisher information matrix which is a $q \times q$-matrix with entries
\begin{equation}
\mathcal{I}_{kl}(\boldtheta_A) = \mathbb{E}\left[ -\frac{\partial^2}{\partial \theta_k \partial \theta_l} l(\boldtheta, (X_{A}, \delta_{A})) \Big|_{\boldtheta=\boldtheta_A} \right]
\end{equation}
with the log-likelihood $l$ given by
\begin{equation}
l(\boldtheta, (X, \delta))\coloneqq 
\begin{cases}
\log f_{T}(\boldtheta, X)&\text{if }\delta=1\\
\log S_{T}(\boldtheta, X)&\text{else}.
\end{cases}
\end{equation}
These conditions can be found e.g. in Section VI.1.2 of \cite{Andersen:1993}. The consistency and asymptotic normality of the estimator itself and the consistency of the estimator of the covariance matrix follow from Theorems VI.1.1 and VI.1.2 of \cite{Andersen:1993}. The proofs thereof follow the lines of \cite{Borgan:1984}.\\
The empirical counterpart $\mathcal{J}$ of the Fisher information matrix is given by its components
\begin{align*}
	\mathcal{J}_{kl}(\boldtheta) \coloneqq -\frac{1}{n_A} \sum_{i \in \mathcal{N}_A} \left( \int_0^{t_{\max}} \frac{\partial^2}{\partial \theta_k \partial \theta_l} \log \lambda(\boldtheta, u) dN_{A,i}(u) -  \int_0^{t_{\max}} \frac{\partial^2}{\partial \theta_k \partial \theta_l} \lambda(\boldtheta, u) Y_{A,i}(u) du \right)
\end{align*}
for $k,l \in \{1,\dots,q\}$. For our purposes, we explicitly require the following:
\begin{enumerate}[label = (A\arabic*)]
	\item \label{item:continuity} The function $\Lambda \colon \Theta \times \mathbb{R}_+ \to \mathbb{R}_+$ is continuous
	\item \label{item:differentiability} The function $\boldtheta \mapsto \Lambda(\boldtheta,s)$ is continuously differentiable at $\boldtheta_A$ for any $s\in [0,\tmax]$.
	\item \label{item:exchange_expectation_differentiation} There exists a real-valued function $g$ with $||\nabla_{\boldtheta} \Lambda(\boldtheta, s)|| \leq g(s)$ for all $\boldtheta \in \overline{B_{\delta}}(\boldtheta_A)$ for some $\delta > 0$ and all $s \in (0,\tmax)$ s.t. $\mathbb{E}[g(t\wedge X_B)] < \infty$. 
\end{enumerate}
These assumptions are fulfilled for common parametric models, which we will consider here. As already mentioned information criteria as the AIC can be used to determine the best candidate among possible distributions. The AIC is given by
\begin{equation*}
	2\cdot q - 2 \cdot \sum_{i \in \mathcal{N}_A} l(\hat{\boldtheta}_A, (X_{A,i}, \delta_{A,i})).
\end{equation*}
The candidate with the lowest AIC value is usually chosen. Nevertheless, other selection criteria and a visual inspection of the fit can also play a major role in the selection process.

\section{Testing procedures}\label{sec:testing}
The hypothesis $H_{0,\text{OSLR}}$ can basically be tested by subtracting the expected number of events before some time $t$ from the observed number of events before that time $t$ which is given by $N_B(t)$. The number of expected events under the assumption that $\Lambda_{T_A, \text{ref}}$ is the true cumulative hazard function in group B is given by
\begin{equation*}
	E_{B,\text{OSLR}}(t) \coloneqq \sum_{i \in \mathcal{N}_B} \Lambda_{T_A, \text{ref}}(X_{B,i} \wedge t) = \int_0^t Y_B(s)\, d\Lambda_{T_A, \text{ref}}(s).
\end{equation*}
Under $H_{0,\text{OSLR}}$, the stochastic process $(M_{\text{OSLR}}(t))_{t \geq 0}$ defined by $M_{\text{OSLR}}(t) \coloneqq n_B^{-1/2} (N_B(t) - E_{B,\text{OSLR}}(t))$ is a martingale w.r.t. its natural filtration. Based on a central limit theorem for martingales (see Theorem II.5.1 of \cite{Andersen:1993}), the value at some time $t$ of this process asymptotically follows a normal distribution with mean 0 and variance 
\begin{equation*}
	V_1(t) \coloneqq \mathbb{P}_{H_{0,\text{OSLR}}}[T_B \leq C_B \wedge t] = \mathbb{E}_{H_{0,\text{OSLR}}}[\Lambda_{T_A, \text{ref}}(X_B \wedge t)]
\end{equation*}
under the null hypothesis $H_{0,\text{OSLR}}$. As $V_1(t)$ can be consistently estimated by $n_B^{-1} \cdot N_B(t)$ as well as by $n_B^{-1} \cdot E_{B,\text{OSLR}}(t)$, we can estimate $V_1$ consistently by
\begin{equation}\label{eq:oslr_var}
	\hat{V}_1(w,t)\coloneqq \frac{1}{n_B}\left(w \cdot N_B(t) + (1-w) \cdot E_{B,\text{OSLR}}(t)\right)
\end{equation}
for any $w \in [0,1]$. In the original one-sample log-rank test, $w=0$ has been chosen implicitly \cite{Breslow:1975}. Other choices have been discussed by \cite{Wu:2014} who choose $w=0.5$ and \cite{Danzer:2023} who propose an approach that takes the anticipated censoring mechanism into account.\\ 
In order to use all available data to test $H_{0,\text{OSLR}}$ we can use the test statistic
\begin{equation*}
	Z_{\text{OSLR}} \coloneqq \frac{N_B(t_{\max}) - E_{B,\text{OSLR}}(t_{\max})}{\sqrt{w\cdot N_B(t_{\max}) + (1-w) \cdot E_{B,\text{OSLR}}(t_{\max})}}
\end{equation*}
for some $w \in [0,1]$. Asymptotically, the choice of $w$ is not relevant and traditionally, a value of $w=0$ has been chosen \cite{Breslow:1975}. However, small sample properties can be improved by choosing $w=0.5$ as in \cite{Wu:2014} or adapting the choice to the anticipated censoring pattern \cite{Danzer:2023}. By the asymptotic normality and the consistency of the estimators of $V_1$, an asymptotically correct two-sided test with type I error level $\alpha$ of $H_{0,\text{OSLR}}$ is given by rejecting it if
\begin{equation*}
	|Z_{\text{OSLR}}| \geq \Phi^{-1}\left( 1 - \frac{\alpha}{2} \right).
\end{equation*} 
When the cumulative reference hazard function $\Lambda_{T_A, \text{ref}}$ is determined by a parametric estimate from historic data, however, the asymptotic considerations made above and the corresponding test statistic do not take the resulting estimation uncertainty into account.\\
However, there is usually a higher interest in the investigation of $H_0$. Unfortunately, $H_{0,\text{OSLR}}$ and $H_0$ do not coincide in general as $\Lambda_{T_A, \text{ref}}$ only attempts to mimic $\Lambda_{T_A}$. If a family of parametric distributions appears appropriate, the emulation of $\Lambda_{T_A, \text{ref}}$ by $\Lambda_{T_A}(\hat{\boldtheta}_A, \cdot)$ often employs a maximum likelihood estimate of the parameter $\boldtheta_A$ from the historic data of group A. In this case $\Lambda_{T_A, \text{ref}}$ will be replaced by $\Lambda_{T_A}(\hat{\boldtheta}_A,\cdot)$. But even if the parametric family is correct, the estimated parameter $\hat{\boldtheta}_A$ will be subject to random variability around the true value. This additional variability has to be taken into account when estimating the variance of $M_{\text{OSLR}}(t)$ under the more general hypothesis $H_0$. It is shown in the Supplementary Material that 
\begin{equation*}
	M_{\text{OSLR}}(t) \overset{\mathcal{D}}{\to} \mathcal{N}(0, V_1(t) + V_2(t))
\end{equation*}
under $H_0$. As this formula suggests, we can decompose the total variance under $H_0$ into the part that constitutes the variance $V_1$ under $H_{0,\text{OSLR}}$ and some additional variance $V_2$ that is given by
\begin{equation}\label{eq:add_var}
V_2(t)\coloneqq \pi \cdot\mathbb{E}[\nabla_{\boldtheta} \Lambda(\boldtheta_A, t \wedge X_B)]^T \mathcal{I}(\boldtheta_A)^{-1} \mathbb{E}[\nabla_{\boldtheta} \Lambda(\boldtheta_A, t \wedge X_B)]
\end{equation}
where $\nabla_{\boldtheta}$ denotes the gradient w.r.t. the parameter of the distribution. Similar to the results of \cite{Danzer:2022} we see a clear dependence from the allocation ratio $\pi$. In particular for fixed $n_B$ we can see that the variance component $V_2$ vanishes as the size of the historic control cohort $n_A$ increases. This variance component can also be estimated consistently by
\begin{equation}\label{eq:mle_var}
	\hat{V}_2(t) \coloneqq \pi \cdot \left(\frac{1}{n_B} \sum_{i=1}^{n_B} \nabla_{\boldtheta} \Lambda(\hat{\boldtheta}_A, t \wedge X_{B,i}) \right)^T \mathcal{J}(\hat{\boldtheta}_A)^+ \left( \frac{1}{n_B} \sum_{i=1}^{n_B} \nabla_{\boldtheta} \Lambda(\hat{\boldtheta}_A, t \wedge X_{B,i}) \right)
\end{equation}
where $\mathcal{J}(\hat{\boldtheta}_A)^+$ denotes the Moore-Penrose inverse of the estimator of the Fisher information matrix. This allows us to construct a corrected test that maintains the aspired type I error rate under $H_0$. This test statistic is given by
\begin{equation*}
	Z_{\text{corr}} \coloneqq \frac{n_B^{-\frac{1}{2}}\left(N_B(t) - \sum_{i \in \mathcal{N}_B} \Lambda(\hat{\boldtheta}_A, t \wedge X_{B,i})\right)}{\sqrt{\hat{V}_1(w,t) + \hat{V}_2(t)}}.
\end{equation*}
Hence, we would reject $H_0$ at the two-sided significance level of $\alpha$ if
\begin{equation*}
	|Z_{\text{corr}}| \geq \Phi^{-1}\left( 1 - \frac{\alpha}{2} \right).
\end{equation*}
Corresponding two-sided $p$-values $p_{\text{OSLR}}$ of the original one-sample log-rank test and $p_{\text{corr}}$ of the corrected test are thus given by
\begin{equation*}
	p_{\text{OSLR}} \coloneqq 2 \cdot (1 - \Phi ( |Z_{\text{OSLR}}| )) \quad \text{and} \quad  p_{\text{corr}} \coloneqq 2 \cdot (1 - \Phi ( |Z_{\text{corr}}| )),
\end{equation*}
respectively. Since $\hat{V}_1(w,t) + \hat{V}_2(t) > \hat{V}_1(w,t)$, we will always obtain $p_{\text{OSLR}} < p_{\text{corr}}$ for the two-sided p-values making the difference between the one-sample log-rank statistic and our corrected version once more obvious. The one-sided p-values are given by
\begin{equation*}
	p_{\text{OSLR}} \coloneqq \Phi ( Z_{\text{OSLR}} ) \quad \text{and} \quad  p_{\text{corr}} \coloneqq \Phi ( Z_{\text{corr}} )
\end{equation*}
where smaller p-values correspond to a larger indication of superiority of the experimental group.

\section{Application example}\label{sec:application}
In this section, we want to demonstrate our correction procedure in a practical example based on real data. For this purpose, we consider the same example as in Section 7.2 of \cite{Szurewsky:2025}. It is based on the data of a phase II trial in paediatric patients with relapsed or refractory neuroblastoma. In this single-arm trial, the inhibitor ABT-751 was evaluated. The results in this experimental arm were compared to that of a historic control group from previous studies. Among several other endpoints, also overall survival was evaluated. The results can be found in \cite{Fox:2014}.\\
\begin{figure}
	\centering
	\includegraphics[width = .8\textwidth]{"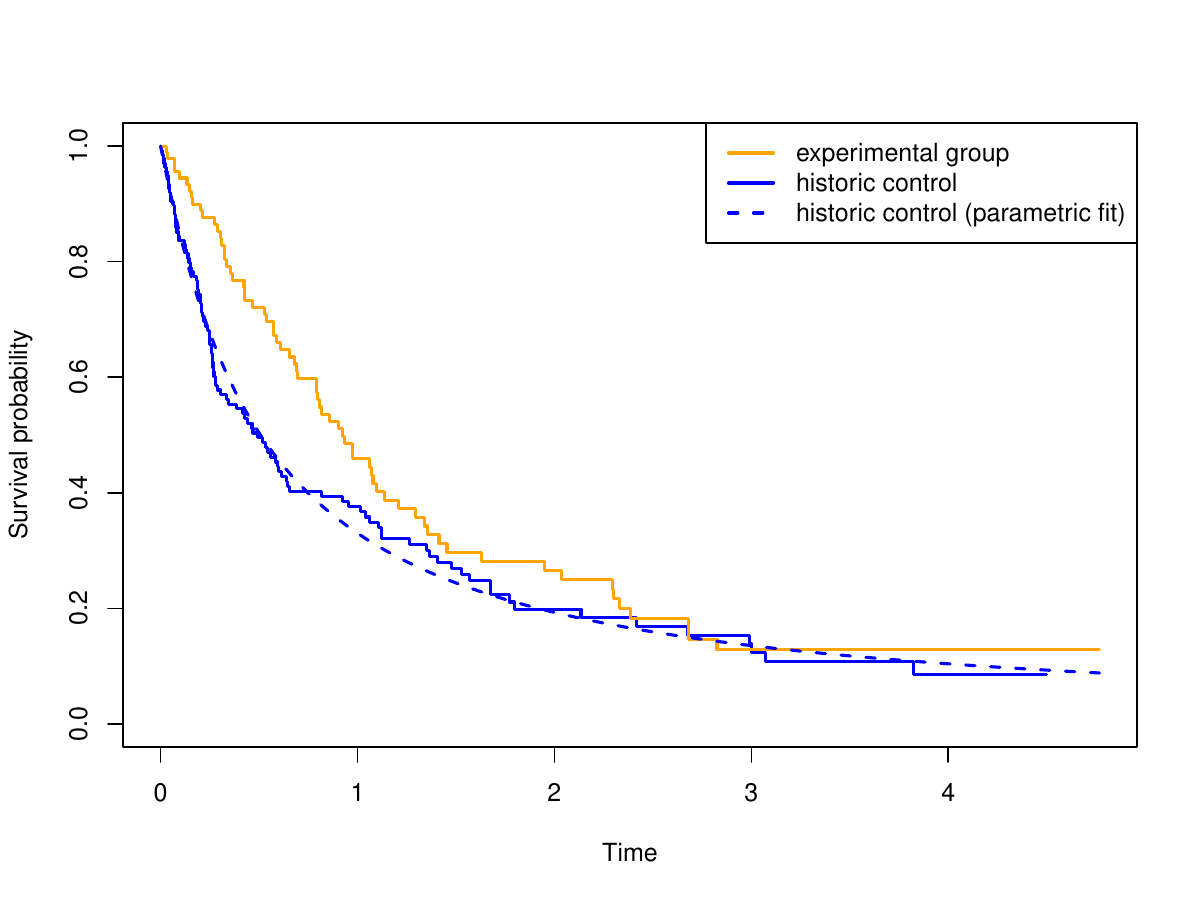"}
	\caption{Kaplan-Meier estimates from simulated data inspired by reconstructed data and the survival curve of a log-logistic distribution fitted to the historic control group data.}
	\label{fig:surv_curves_casestudy}
\end{figure}%
As individual patient data was not available, we reconstructed this data from the Kaplan-Meier curves for overall survival in Figures 2B and 2D of \cite{Fox:2014}. For this purpose we applied the R package \texttt{IPDfromKM} \cite{Liu:2021}. We cut off the data at 4.5 years, as no more events can be observed after that. As shown in \cite{Szurewsky:2025}, the log-logistic distribution fits the data best as it has the lowest AIC among several candidates. Hence, we also applied this distribution to the control data. For our reconstructed data, we obtained the following maximum likelihood estimate of the reference cumulative hazard function:
\begin{equation*}
	\Lambda_{T_A, \text{ref}}(t) = \log \left( 1 + \left(\frac{t}{0.5328} \right)^{1.0760} \right).
\end{equation*}
The variance estimate of our maximum likelihood point estimator $\hat{\boldtheta}_A \coloneqq (1.0760,\, 0.5328)$ is given by
\begin{equation*}
n_A \cdot \mathcal{J}(\hat{\boldtheta}_A) = 
\begin{pmatrix}
	0.9581 & -0.0403\\
	-0.0403 & 0.7242
\end{pmatrix}.
\end{equation*}
Kaplan-Meier curves derived from simulated data based on the data obtained from the reconstruction, as well as the fitted log-logistic survival function, can be found in Figure \ref{fig:surv_curves_casestudy}. As indicated by \eqref{eq:mle_var}, we also require the gradient of the cumulative hazard function w.r.t. the parameter components. For the log-logistic distribution, this is given by
\begin{equation*}
	\nabla_{\boldtheta} \Lambda(\hat{\boldtheta}_A, s) = 
	\begin{pmatrix}
		\left(\frac{s}{\hat{\theta}_{A,2}}\right)^{\hat{\theta}_{A,1}} \cdot \frac{\log\left(\frac{s}{\hat{\theta}_{A,2}}\right)}{\left( 1 + \left(\frac{s}{\hat{\theta}_{A,2}}\right)^{\hat{\theta}_{A,1}} \right)}\\
		-\hat{\theta}_{A,1} \cdot s \cdot \frac{\left(\frac{s}{\hat{\theta}_{A,2}}\right)^{\hat{\theta}_{A,1} - 1}}{\hat{\theta}_{A,2}^2\cdot \left( 1 + \left(\frac{s}{\hat{\theta}_{A,2}}\right)^{\hat{\theta}_{A,1}} \right)}
	\end{pmatrix}.
\end{equation*}
We will now take a closer look at four different test statistics. All of these are based on the compensated counting process $M_{\text{OSLR}}$. However, we are left with two choices when estimating its variances. On the one hand, we can choose between different variance estimators for $M_{\text{OSLR}}$ under $H_{0, \text{OSLR}}$. For the sake of simplicity, we only consider the standard choice of $w=0$ in \eqref{eq:oslr_var} and Wu's suggestion of $w=0.5$. On the other hand, we can also take into account the additional variability that emerges when we want to regard this test as a test of $H_0$ and not only of $H_{0, \text{OSLR}}$. In this case we can add the variance estimate $\hat{V}_2$ as defined in \eqref{eq:mle_var}. For both issues, the options named here can be combined and after adding the variance components, we can divide the compensated counting process by its square root to obtain a $Z$-score and compute one- or two-sided $p$-values. The one-sided $p$-values, for which small values indicate superiority of the experimental group, and the components from which these can be computed can be found in Table \ref{table:case_study}.\\
\begin{table}
	\centering
	\begin{tabular}{l|c|c|c|c|c}
		Testing method & $M_{\text{OSLR}}(4.5)$ & $\hat{V}_1(w, 4.5)$ & $\hat{V}_2(4.5)$ & $Z$ & $p$-value\\
		\hline
		uncorrected OSLR test ($w=0$) & -2.9855 & 1.0492 & --- & -2.9146 & 0.0018\\
		uncorrected OSLR test ($w=0.5$) & -2.9855 & 0.8927 & --- & -3.1598 & 0.0008\\
		corrected OSLR test ($w=0$) & -2.9855 & 1.0492 & 0.7105 & -2.2506 & 0.0122\\
		corrected OSLR test ($w=0.5$) & -2.9855 & 0.8927 & 0.7105 & -2.3579 & 0.0092
	\end{tabular}
	\caption{Description of components of the different test statistics and their results. The total variance estimate is given as the sum of $\hat{V}_1$ and $\hat{V}_2$. The $Z$-score is given by dividing $M_{\text{OSLR}}$ by the square root of the total variance. The resulting $p$-values are one-sided.}
	\label{table:case_study}
\end{table}%
The variance estimate $\hat{V}_1(0.5,\cdot)$ leads to smaller estimates than $\hat{V}_1(0,\cdot)$ which leads to more liberal tests. However, as elaborated in \cite{Danzer:2023}, the resulting test of $H_{0,\text{OSLR}}$ still maintains the significance level for high event rates which is the case in this example. As already lined out in Section \ref{sec:testing}, the $p$-values of the testing procedures that take the estimation uncertainty into account are higher than those of the uncorrected procedures. However, all four procedures would come to the same result if they were used with the standard one-sided significance level of $0.025$.

\section{Simulation study}\label{sec:sim_study}
The aim of this simulation study is to quantify the type I error rate inflation of the one-sample log-rank test when it is interpreted as a test of the null hypothesis $H_0$ instead of $H_{0,\text{OSLR}}$ in settings where the reference curve is given by a parametric estimate from historic control data. At the same time, we want to investigate whether the corrected test, that accounts for the additional variability $V_2$ introduced by the estimation procedure, maintains the nominal type I error rate. Furthermore, we want to examine whether the corrected test has power advantages over the two-sample log-rank test due to the parametric assumption for the historical control cohort.

\subsection{Data generation}
For the historic control arm as well as for the experimental arm, we assume that the patients were recruited uniformly between the start and the end of the accrual period after $a=2$ years. Each patient remains in the trial until the event of interest occurs or the end of the follow-up period of length $f=3$ years that follows the accrual period. Hence, we have $C_x \sim \mathcal{U}[f, a+f]$  for $x \in \{A,B\}$. The recruitment date and hence also the date of administrative censoring are simulated independently from the time of the event of interest. As our aim is to analyse the adherence to the nominal type I error rate, we use the same distribution for $T_A$ and $T_B$. We assume a Weibull distribution for the historical control group with shape parameter $\kappa \in \{0.5, 1, 2\}$ with a fixed one-year survival rate of 50\%. Hence, the cumulative hazard function at time $s$ is given by $\Lambda_{T_A}(s) = -\log(0.5) \cdot s^{\kappa}$. The cumulative hazard function for the experimental group is given by a proportional hazards model via $\Lambda_{T_B}(s) = \omega \cdot\Lambda_{T_A}(s)$. In order to investigate type I error rates as well as power, we considered considered hazard ratios $\omega \in \{1, 0.8\}$. Please note that a Weibull distribution with shape parameter $\kappa=1$ corresponds to an exponential distribution.\\
We used different sample sizes $n_B \in \{25,50,100,200\}$ and allocation ratios $\pi \in \{1, 1/2, 1/4, 1/8, 1/16\}$ to study the impact of these two variables on the different error rates. As it can be seen from \eqref{eq:add_var}, the allocation ratio $\pi$ plays an essential role when it comes to the additional variability introduced by the estimation process.\\
For each combination of those parameters we conduct 100,000 simulation runs. In each run, we use the R package \texttt{flexsurv} to fit a distribution to the data from group $A$ and conduct the standard one-sample log-rank test to compare the data from group $B$ with the resulting reference curve. At the same time, we also compute the corrected test statistics and the two-sample log-rank test statistic comparing the two groups directly. For all scenarios, we use Weibull MLE estimates for the procedures that require parametric estimates. If $\kappa = 1$, the time-to-event endpoint is exponentially distributed. In these scenarios, we will also consider estimates for an exponential distribution. In this way, we can also uncover potential advantages of a more sparse parametric modeling of the refrence survival curve. The Monte Carlo error interval for 100,000 simulation runs for the nominal (one-sided) rate of 0.025 is $[0.0240; 0.0260]$. For the (two-sided) rate of 0.05 it amounts to $[0.0486; 0.0514]$.
\begin{table}[h]
	\centering
	\begin{tabularx}{\textwidth}{l|X}
		Abbreviation  &  Description\\ 
		\hline 
		\textbf{OSLR} & One-sample log-rank test with true (but in practice unknown) reference function.\\
		\textbf{OSLR (Wu)} & As above, but with variance estimation according to \cite{Wu:2014}, i.e. $w=0.5$ in \eqref{eq:oslr_var}.\\ 
		\textbf{Uncorrected OSLR} & One-sample log-rank test with parametrically estimated reference curve, but without correction for additional variability. If $\kappa=1$, we also consider a version with estimation from exponential distribution.\\
		\textbf{Uncorrected OSLR (Wu)} & As above, but with variance estimation according to \cite{Wu:2014}, i.e. $w=0.5$ in \eqref{eq:oslr_var}. If $\kappa=1$, we also consider a version with estimation from exponential distribution.\\
		\textbf{Corrected OSLR} & One-sample log-rank test with parametrically estimated reference curve and corrected variance estimation. If $\kappa=1$, we also consider a version with estimation from exponential distribution.\\
		\textbf{Corrected OSLR (Wu)} & As above, but with variance estimation for the OSLR contribution according to \cite{Wu:2014}, i.e. $w=0.5$ in \eqref{eq:oslr_var}. If $\kappa=1$, we also consider a version with estimation from exponential distribution.\\
		\textbf{TSLR} & Direct comparison of the two groups via two-sample log-rank test. 
	\end{tabularx}
	\caption{Overview of the different testing procedures.}
	\label{table:testing_procedures}
\end{table}%

\subsection{Results}
\subsubsection{Type I error rate}\label{subsec:sim_study_t1e}
In the following results, we will compare empirical type I error rates of the seven testing procedures given in Table \ref{table:testing_procedures}.\\
\begin{figure}[h]
	\centering
	\includegraphics[width=\textwidth]{"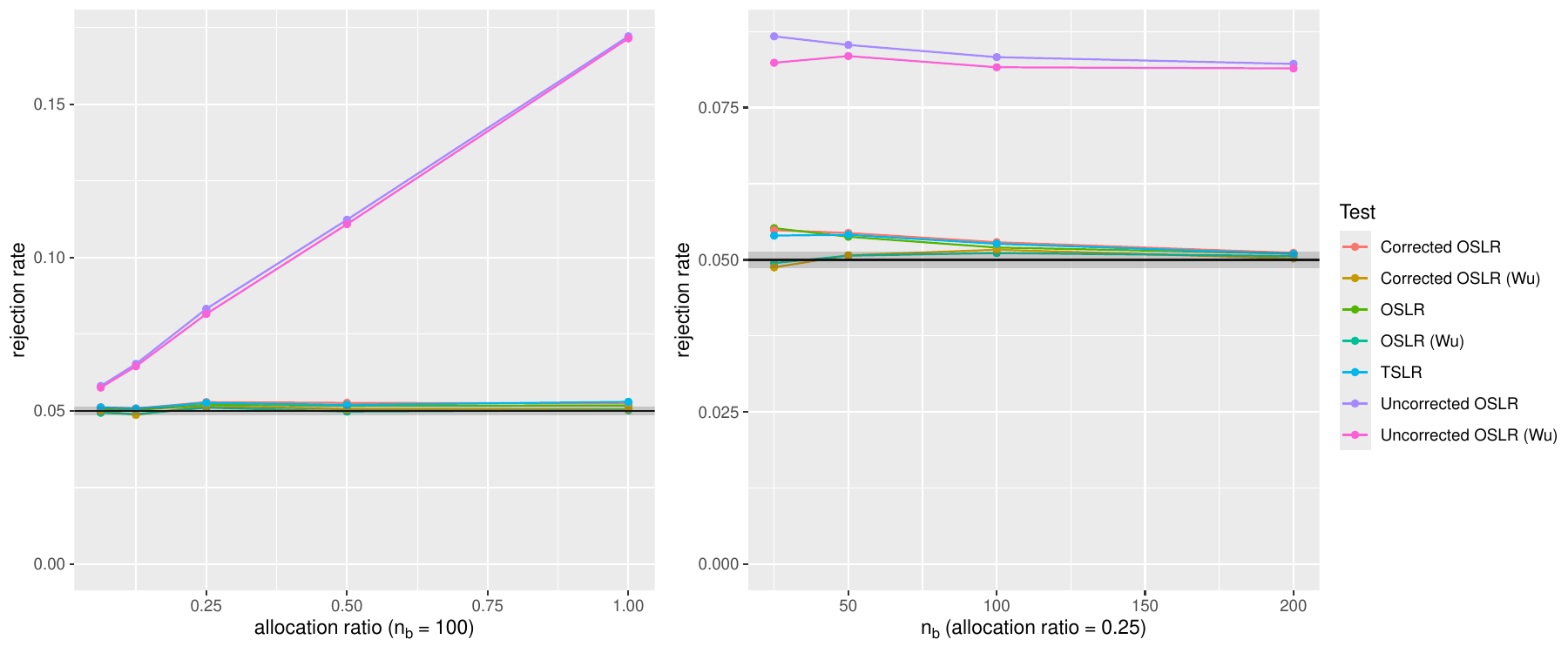"}
	\caption{Empirical two-sided type I error rates for seven different testing procedures with fixed sample size in the experimental group and varying allocation ratios (left) or varying sample size in the experimental group with fixed allocation ratio (right).}
	\label{fig:ts_rates_k1}
\end{figure}%
In Figure \ref{fig:ts_rates_k1} the two-sided empirical type I error rates for all procedures are shown for $\kappa=1$. On the left side, the sample size of the experimental group is fixed at $n_B=100$. Obviously, the two uncorrected procedures that use the parametrically estimated reference curve do not at all comply with the nominal type I error level when regarded as a test of $H_0$. As in \cite{Danzer:2023}, the discrepancy increases drastically as the allocation ratio increases. Within the considered range of allocation ratios, the increase appears to be almost linear in the allocation ratio. All other procedures match to a greater or lesser extent the nominal rate. This includes the two procedures with the additional correction for the application of the estimated reference curve. However, probably due to small sample sizes, not all empirical rates lie within the (grey shaded) Monte Carlo error interval. This is also the case for the standard testing procedure for two-sample survival data, the two-sample log-rank test. On the right hand side of Figure \ref{fig:ts_rates_k1}, the allocation ratio is fixed at $\pi= 0.25$ and the sample size is varied. The type I error rate inflation of the uncorrected procedures is hardly affected by this variation. This resembles the results in \cite{Danzer:2023} and can also be explained by our theory as none of the factors of the additional variance $V_2$ in \eqref{eq:add_var} is affected by the variation of the sample size while the allocation ratio is held constant. However, for the uncorrected as well as for the corrected procedures and the two-sample log-rank test we can observe a slightly larger inflation for small sample sizes. This may be because the asymptotics are not yet taking effect here. As can be seen from the results in the Supplementary Material, there is no remarkable difference in these rates depending on whether an exponential or a Weibull distribution is assumed in the case of $\kappa=1$.\\
\begin{figure}[h]
	\centering
	\includegraphics[width=\textwidth]{"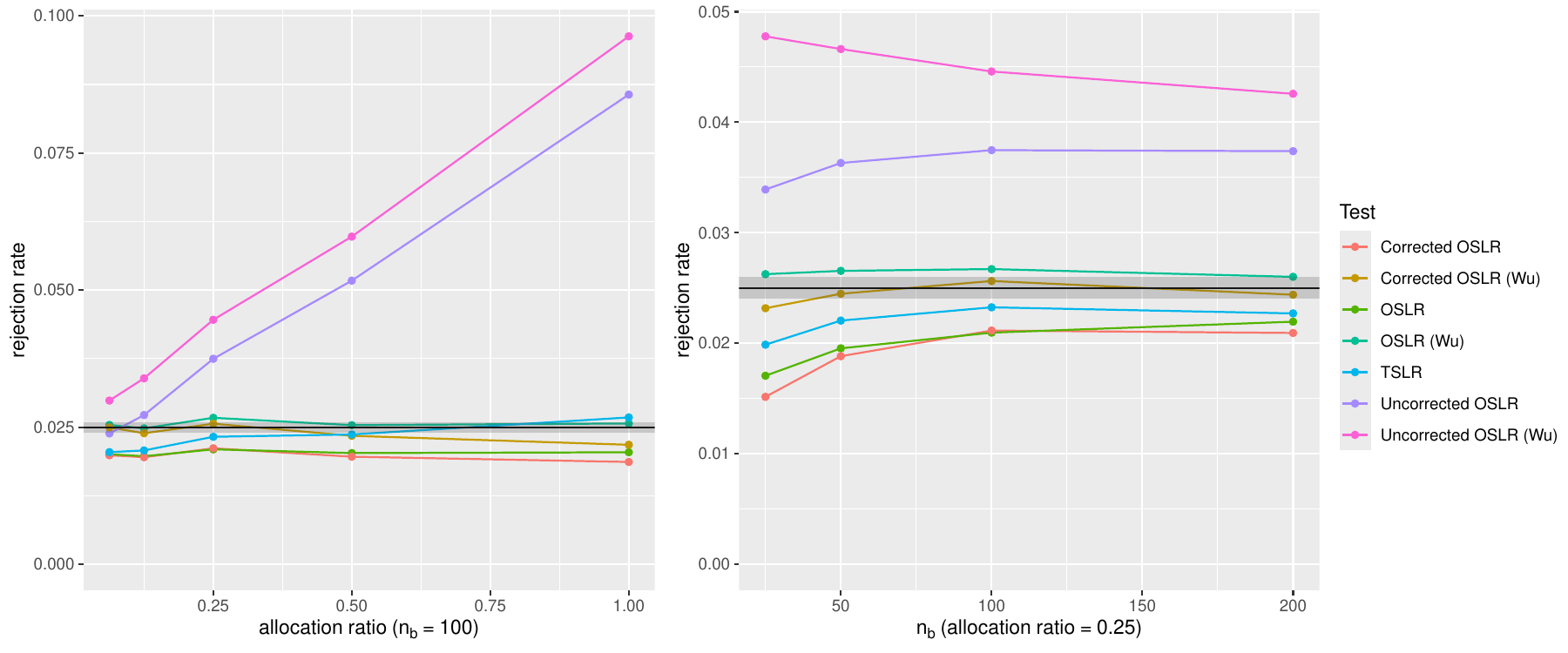"}
	\caption{Empirical one-sided type I error rates for seven different testing procedures with fixed sample size in the experimental group and varying allocation ratios (left) or varying sample size in the experimental group with fixed allocation ratio (right).}
	\label{fig:os_left_rates_k1}
\end{figure}%
Similar observations can be made for the one-sided rates shown in Figure \ref{fig:os_left_rates_k1}. These rates show the rejection rates of $H_0$ in favour of the experimental treatment. The choice of the weight $w$ in \eqref{eq:oslr_var} plays a much more important role here. While the rejection rates for the procedures using $w=0$ and $w=0.5$ were almost indistinguishable in Figure \ref{fig:ts_rates_k1}, this is not the case here anymore. As also found in \cite{Wu:2014, Danzer:2023} the standard variance estimation leads to strongly conservative tests when trying to prove superiority of the new treatment. Choosing $w=0.5$ as in \cite{Wu:2014} yields a noticeably less conservative testing procedure. In particular, the corrected one-sample log-rank test performs quite well for unbalanced sample sizes but the performance deteriorates when the allocation ratio increases and the overall sample size decreases. This development is contrary to that for the standard two-sample test. As already shown in \cite{Kellerer:1983}, the two-sample log-rank test is conservative in unbalanced settings. However, performance improved if the sample size of the control cohort decreases.\\
As with the two-sided rates, it makes no noticeable difference whether the more restrictive assumption of an exponential distribution is made for the case $\kappa=1$. Results for other settings are very similar to those shown here. Corresponding plots can be found in the Supplementary Material.

\subsubsection{Power}
As the results for $\omega=1$, i.e. under the null hypothesis $H_0$, have shown, the corrected versions of the one-sample log-rank test are indeed suitable as a test for this hypothesis. Consequently, we now want to evaluate whether the corrected test achieves efficiency gains by making a parametric assumption about one of the two groups instead of using a completely non-parametric method.\\
\begin{figure}[h]
	\centering
	\includegraphics[width=\textwidth]{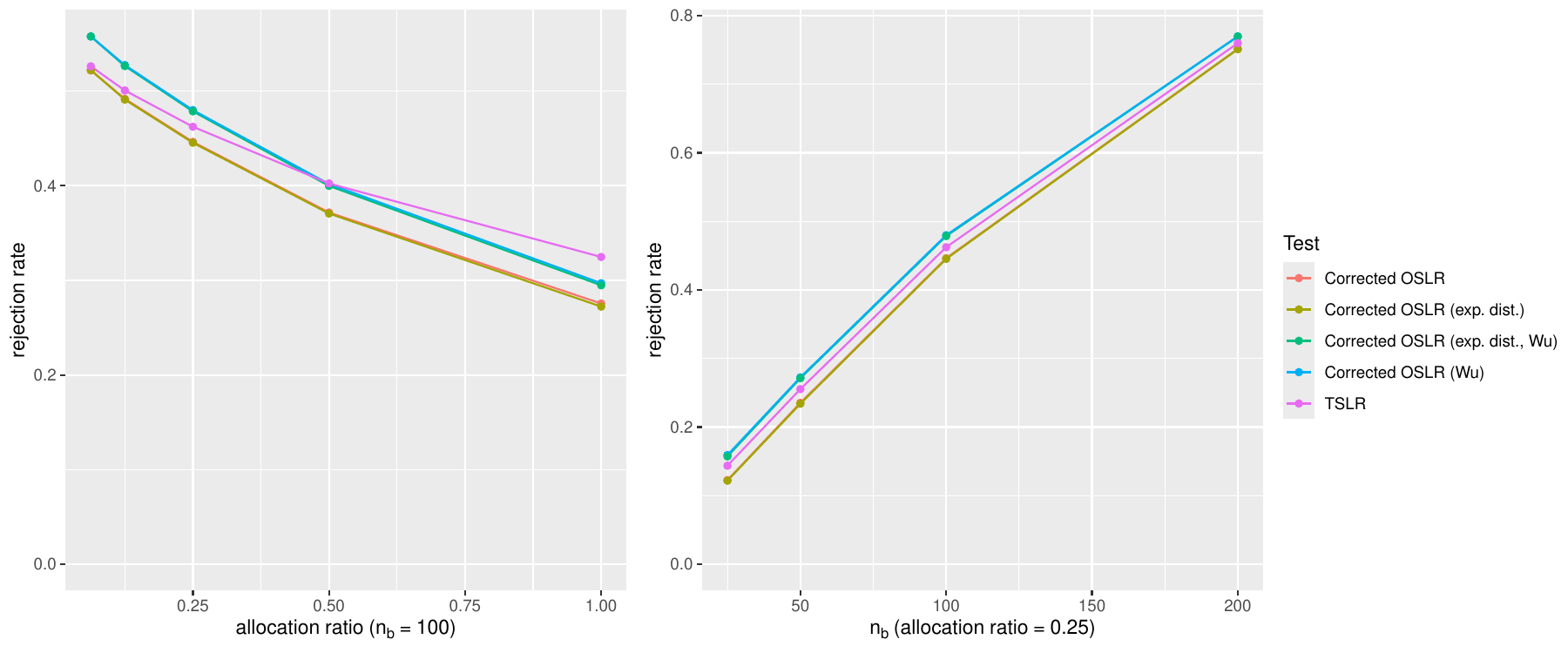}
	\caption{Empirical power for five procedures that showed acceptable adherence to the nominal type I error rate with fixed sample size in the experimental group and varying allocation ratios (left) or varying sample size in the experimental group with fixed allocation ratio (right).}
	\label{fig:power_k1}
\end{figure}%
For the hazard ratio $\omega=0.8$ between the two groups, Figure \ref{fig:power_k1} shows the power of the corrected procedures and the two-sample log-rank test. In this figure, we consider the case $\kappa=1$ so that we can also meaningfully compare tests in which Weibull MLEs are used with those in which exponential MLEs are used. Both suffice the standard assumption discussed in Section \ref{sec:par_est} and hence are consistent and asymptotically normal. However, as the shape parameter has to be estimated for a Weibull distribution, the exponential estimate might be more efficient and thus yield a higher power of the associated test. However, our results reveal that these expectations are not being met. The corresponding lines in Figure \ref{fig:power_k1} overlap almost completely. This is not only the case for the presented results with $n_B = 100$ and $\pi = 0.25$, but also for the other scenarios which are presented in the Supplementary Material.\\
Compared to the two-sample log-rank test, there are no clear advantages resulting from the parametric assumption. If Wu's variance estimation is used, there are advantages of a few percentage points for small allocation ratios. However, for allocation ratios close to 1, this effect is reversed. These observations are consistent with the conservativeness of the two-sample log-rank test for small allocation ratios observed in the previous section and described in \cite{Kellerer:1983}, and with the conservativeness of the corrected one-sample log-rank test for allocation ratios close to 1, which was also observed above.\\
Results for other settings support these findings an can also be found in the Supplementary Material.

\section{Discussion}\label{sec:discussion}
As we have demonstrated here, a considerable inflation of the type I error rate occurs if the one-sample log-rank test is to be interpreted as a test of the practically highly relevant null hypothesis $H_0$. This is the case as the calculation of the test statistic does not take the estimation uncertainty of the reference curve into account. With the analytical derivation and a simulation study, we were able to demonstrate that we can decompose the variance of the test statistic under the hypothesis $H_0$ into two parts, one of which corresponds to the well-known variance estimator for the one-sample log-rank test and the other addresses the uncertainty in the reference curve. As can be seen from formula \eqref{eq:mle_var}, the allocation ratio between the historical control cohort and the experimental cohort is of fundamental importance. While the one-sample log-rank test in its current form can still play an important role as a helpful tool, particularly in the early stages of treatment development, especially in rare diseases or in paediatrics, we would like to draw attention to the fact that there are ways of taking a more critical statistical view of the role of the reference curve. For example, the test statistics developed here could be considered as an additional sensitivity analysis to the one-sample log-rank test.\\
Of course, it can be argued that the resulting comparison is ultimately a two-group comparison between the historical cohort and the study cohort, which could also be achieved with a two-sample log-rank test. By making an additional parametric assumption about the distribution of the endpoint in the historical control group, it seems conceivable at first glance that the resulting tests are more efficient and allow for gains in terms of power in comparison with the two-sample log-rank test, which does not require such assumptions. However, our simulation study revealed that no systematic gains can be achieved in a proportional hazards scenario. In comparison with the two-sample log-rank test there is a slight advantage in cases where the historical control cohort is relatively large in comparison with the experimental group. For groups of similar size, on the other hand, there is a slight disadvantage. These findings are consistent with the observed conservativeness of the respective methods described in Section \ref{subsec:sim_study_t1e}. Analogously to the uncorrected one-sample procedures, the choice of the estimator for the variance component $V_1$ plays a crucial role for the corrected tests in terms of power. We refer to \cite{Danzer:2023} for a more extensive discussion of this issue.\\
In summary, based on our simulation results, it can be said that conducting the two-sample log-rank test is preferable if the necessary data is available and a test of the hypothesis $H_0$ is desired instead of a comparison with a fixed reference curve. However, it should be noted that our simulations are limited to proportional hazards scenarios, in which the two-sample log-rank test is known to be very efficient. If the one-sample log-rank test shall still be applied for the primary analysis, our proposed method could be still be used as an additional sensitivity analysis or as a tool to determine how much of the total variance of the test statistics can be attributed to the uncertainty inherent in the estimate of the reference curve.\\  
Going beyond the standard one-sample log-rank test, the techniques presented and evaluated here are in general also applicable to variations of the one-sample log-rank test. This includes weighted tests \cite{Chu:2020}, and resulting combination tests \cite{Szurewsky:2025} as well as multi-stage procedures \cite{Kwak:2014, Schmidt:2018}.\\
Concerning the estimation process in the historical control cohort, we assume that the parametric model is correctly specified. If this is not the case, the estimator will converge against a 'pseudo-true' value and the variance can instead be estimated by a sandwich estimator as demonstrated in \cite{White:1982}. Use of such methods could improve the robustness of the presented method. In practice, however, the determination of the reference curve might include a selection procedure to decide which candidate among several parametric families fits the data best. As already mentioned, this selection can be accomplished using the AIC. Such a selection has also been made in \cite{Szurewsky:2025} for the data example that is also used by us in Section \ref{sec:application}. However, this selection procedure can introduce additional variability that has not been accounted for in our approach.\\
Further improvements concerning the adherence to the nominal type I error level could be achieved by resampling procedures. In order to account for the variability introduced by the estimation process, we applied the Delta Method which yields asymptotic normality of the hazards accumulated by each patient in the experimental group. However, for this particular transformation a normal distribution might not yet be met for small sample sizes. The actual distribution of the transformed parameters could be mimicked by a resampling approach of the historical control cohort.

\section*{Acknowledgments}
Funded by the Deutsche Forschungsgemeinschaft (DFG, German Research Foundation) - 413730122. Parts of the calculations for this publication were performed on the HPC cluster PALMA II of the University of Münster, subsidised by the DFG (INST 211/667-1).

\section*{Data and code availability}
The code and results of the simulation study, simulated data based on the case study, and the code we used to reconstruct and evaluate the case study data are available in a public repository that can be accessed at \url{https://github.com/moedancer/MLEVar_OSLR}.

\putbib
\end{bibunit}

\newpage 

\appendix
\renewcommand{\thesection}{\Alph{section}}
\setcounter{figure}{0}
\setcounter{table}{0}
\renewcommand{\figurename}{Supplementary Figure}
\renewcommand{\tablename}{Supplementary Table}
\renewcommand{\thetable}{S\arabic{table}}
\renewcommand{\thefigure}{S\arabic{figure}}

\begin{bibunit}
	
\begin{center}
	\huge
	Supplementary Material for\\
	'Correcting for sampling variability in maximum likelihood-based one-sample log-rank tests'
\end{center}

\vspace{12pt}

\section{Derivation of the asymptotic distribution}

\begin{lemma}\label{lemma:convergence_in_distribution_independence}
Let $(X_n)_{n\in\mathbb{N}}$ and $(Y_n)_{n\in\mathbb{N}}$ two sequences of $\mathbb{R}$-valued random variables where $X_n$ and $Y_n$ are independent for any $n \in \mathbb{N}$. If $X_n \overset{\mathcal{D}}{\to} X$ and $Y_n \overset{\mathcal{D}}{\to} Y$ as $n \to \infty$,  then $X_n + Y_n \overset{\mathcal{D}}{\to} X^{\star}+Y^{\star}$ as $n\to\infty$, where $X^{\star}$ and $Y^{\star}$ are copies of $X$ resp. $Y$ s.t. $X^{\star}$ and $Y^{\star}$ are independent.
\end{lemma}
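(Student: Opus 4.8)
The plan is to argue via characteristic functions together with the Lévy continuity theorem, since Slutsky's theorem is not available here (neither $X$ nor $Y$ need be degenerate). Write $\varphi_Z$ for the characteristic function of an $\mathbb{R}$-valued random variable $Z$. The first step is to exploit independence: for each fixed $n$, independence of $X_n$ and $Y_n$ gives the factorisation $\varphi_{X_n+Y_n}(t) = \varphi_{X_n}(t)\,\varphi_{Y_n}(t)$ for all $t\in\mathbb{R}$.

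Next I would use the hypotheses $X_n \overset{\mathcal{D}}{\to} X$ and $Y_n \overset{\mathcal{D}}{\to} Y$. By the Lévy continuity theorem these are equivalent to pointwise convergence $\varphi_{X_n}(t)\to\varphi_X(t)$ and $\varphi_{Y_n}(t)\to\varphi_Y(t)$ for every $t$. Multiplying, $\varphi_{X_n+Y_n}(t)=\varphi_{X_n}(t)\varphi_{Y_n}(t)\to\varphi_X(t)\varphi_Y(t)$ pointwise in $t$.

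It then remains to identify this limit as a characteristic function. Choosing independent $X^\star$ and $Y^\star$ with $X^\star\overset{d}{=}X$ and $Y^\star\overset{d}{=}Y$ (realised, say, on a product probability space), independence once more yields $\varphi_{X^\star+Y^\star}(t)=\varphi_{X^\star}(t)\varphi_{Y^\star}(t)=\varphi_X(t)\varphi_Y(t)$. Hence the pointwise limit of $\varphi_{X_n+Y_n}$ is exactly $\varphi_{X^\star+Y^\star}$, which is in particular continuous at $0$. Applying the converse direction of the Lévy continuity theorem gives $X_n+Y_n \overset{\mathcal{D}}{\to} X^\star+Y^\star$, as claimed.

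I expect no serious obstacle; the only point requiring a little care is precisely the one the statement highlights, namely that we are only given marginal convergence plus independence rather than joint convergence of $(X_n,Y_n)$. The characteristic-function route handles this automatically by turning independence into a product and never needing joint weak convergence explicitly. (Alternatively, one could first show that the product measures $\mathcal{L}(X_n)\otimes\mathcal{L}(Y_n)$ converge weakly to $\mathcal{L}(X)\otimes\mathcal{L}(Y)$ and then apply the continuous mapping theorem to the addition map $(x,y)\mapsto x+y$; this is essentially the same argument in disguise.) One should also make sure the limiting function is genuinely a characteristic function before invoking the continuity theorem, which is immediate here since we exhibit $X^\star+Y^\star$ realising it.
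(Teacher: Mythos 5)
Your proposal is correct and follows essentially the same route as the paper's own proof: factor the characteristic function of $X_n+Y_n$ using independence, pass to the pointwise limit via Lévy's continuity theorem, and identify the limit as the characteristic function of $X^\star+Y^\star$. Your additional remark that the limit must be verified to be a genuine characteristic function (here exhibited explicitly) is a small point of extra care that the paper's proof leaves implicit.
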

\begin{proof}
As $X_n$ and $Y_n$ are independent for any $n \in \mathcal{N}$, the characteristic function of $X_n + Y_n$ is given by
\begin{equation*}
\Phi_{X_n + Y_n}(t) = \Phi_{X_n}(t) \cdot \Phi_{Y_n}(t) \quad \forall t\in\mathbb{R}.
\end{equation*}
By Lévy's continuity theorem, convergence in distribution is equivalent to pointwise convergence of the corresponding characteristic functions $\Phi$.  Hence, by the convergence of the two sequences we have for all $t \in \mathbb{R}$
\begin{align*}
\lim_{n\to\infty} \Phi_{X_n + Y_n}(t) &=\lim_{n\to\infty} \Phi_{X_n}(t) \cdot \Phi_{Y_n}(t)\\
&=\Phi_{X}(t) \cdot \Phi_{Y}(t)\\
&=\Phi_{X^{\star}}(t) \cdot \Phi_{Y^{\star}}(t)\\
&=\Phi_{X^{\star}+Y^{\star}}(t).
\end{align*}
Therefore, $X_n + Y_n \overset{\mathcal{D}}{\to} X^{\star}+Y^{\star}$ as $n\to\infty$.
\end{proof}

\begin{lemma}\label{lemma:uniform_convergence}
	Let $(f_n)_{n \in \mathbb{N}}$ be a sequence of non-decreasing and continuous functions from a bounded interval $[a,b]$ to $\mathbb{R}$. This sequence converges pointwise to a non-decreasing continuous function $f$. Then, the convergence is also uniform.
\end{lemma}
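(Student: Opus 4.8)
The plan is to exploit the uniform continuity of the limit function $f$ on the compact interval $[a,b]$ together with the monotonicity of each $f_n$, reducing the claim to pointwise convergence at finitely many points; this is essentially Pólya's theorem. First I would fix $\varepsilon>0$. Since $f$ is continuous on the compact set $[a,b]$, it is uniformly continuous there, so I can choose a partition $a=x_0<x_1<\dots<x_N=b$ fine enough that $f(x_j)-f(x_{j-1})<\varepsilon$ for every $j\in\{1,\dots,N\}$.

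The key step is a sandwich estimate. Take an arbitrary $x\in[a,b]$ and let $j$ be such that $x\in[x_{j-1},x_j]$. Because $f_n$ is non-decreasing we have $f_n(x_{j-1})\le f_n(x)\le f_n(x_j)$, and because $f$ is non-decreasing we have $f(x_{j-1})\le f(x)\le f(x_j)$. Combining these with the bound on the increments of $f$ gives
\[
f_n(x_{j-1})-f(x_{j-1})-\varepsilon \;\le\; f_n(x)-f(x) \;\le\; f_n(x_j)-f(x_j)+\varepsilon ,
\]
so the deviation at an arbitrary point is controlled, up to $\varepsilon$, by the deviations at the two neighbouring partition points. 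Next I would use pointwise convergence only on the finite set $\{x_0,\dots,x_N\}$: there exists $n_0$ with $|f_n(x_j)-f(x_j)|<\varepsilon$ for all $n\ge n_0$ and all $j\in\{0,\dots,N\}$. Inserting this into the sandwich bound yields $|f_n(x)-f(x)|<2\varepsilon$ for all $n\ge n_0$ and all $x\in[a,b]$, which is exactly uniform convergence since $\varepsilon$ was arbitrary.

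I do not expect a serious obstacle. The one point to get right is that continuity of the \emph{limit} $f$ (not of the $f_n$) is what makes a partition with uniformly small increments possible, and that monotonicity of the $f_n$ is precisely what propagates the control from the finitely many partition points to the whole interval; continuity of the individual $f_n$ is not actually used. It is also worth noting that the hypothesis on $f$ cannot be dropped: if $f$ had a jump, the conclusion would fail.
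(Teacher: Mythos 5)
Your proof is correct and follows essentially the same route as the paper: uniform continuity of the limit $f$ yields a finite partition with small increments, monotonicity of the $f_n$ sandwiches the deviation at an arbitrary point between the deviations at neighbouring partition points, and pointwise convergence at the finitely many partition points finishes the argument. Your observation that continuity of the individual $f_n$ is never used is accurate and also applies to the paper's version.
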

\begin{proof}
	As $f$ is a continuous function on a bounded interval, it is also uniformly continuous. Hence, for any $\varepsilon > 0$, we can find a $\delta > 0$ such that $|u-v| < \delta$ implies $|f(u) - f(v)| < \varepsilon/2$ for all $u,v \in [a,b]$. In particular, we can find a finite subdivision $a = t_0 < t_1 < \dots < t_m = b$ of the interval such that
	\begin{equation}\label{eq:bound_subdivision_f}
		f(t_i) -  \varepsilon/2 \leq f(t) \leq f(t_{i-1}) + \varepsilon/2
	\end{equation} for any $t \in [t_{i-1}, t_i]$. We consider the points of the subdivision and we can find some $N$ large enough such that $|f_n(t_i) - f(t_i)|$ for all $n \geq N$ and all subdivision points $t_i$.\\
	For an arbitrary $t \in [a,b]$, let $t_{i-1}$ and $t_i$ be the subdivision points between which $t$ lies. From the monotonicity of $f_n$ and the distance between $f_n$ and $f$ on the subdivision points, it follows that 
	\begin{equation*}
		f(t_{i-1}) - \frac{\varepsilon}{2} \leq f_n(t_{i-1}) \leq f_n(t) \leq f_n(t_i) \leq f(t_i) + \frac{\varepsilon}{2}.
	\end{equation*}
	Application of \eqref{eq:bound_subdivision_f} yields
	\begin{equation*}
		f(t) - 2\frac{\varepsilon}{2} \leq f_n(t) \leq f(t) + 2\frac{\varepsilon}{2}, 
	\end{equation*}
	i.e. $|f_n(t) - f(t)| \leq \varepsilon$ which finishes the proof as $n$ was chosen s.t. it exceeded $N$ and $t$ and $\varepsilon$ were chosen arbitrarily.
\end{proof}

\begin{lemma}\label{lemma:empirical_processes}
	Let $(\hat{\boldtheta}_n)_{n \in \mathbb{N}}$ be a sequence of $\Theta$-valued random variables s.t. $\hat{\boldtheta}_n \overset{\mathbb{P}}{\to} \boldtheta_0$ for some constant $\boldtheta$ and let $(X_n)_{n \in \mathbb{N}}$ be a sequence of i.i.d. $\mathbb{R}$-valued random variables which is also independent of $\hat{\boldtheta}_m$ for any $m \in \mathbb{N}$. Let also $G\colon \Theta \times \mathbb{R}_+ \to \mathbb{R}_+$ be a function that fulfills the following properties
	\begin{enumerate}[label = (\roman*)]
		\item \label{item:cont} $G$ is continuous,
		\item \label{item:monotonicity} $G(\boldtheta,\cdot)$ is a monotone function for each $\boldtheta \in \Theta$,
		\item \label{item:boundedness} $G(\boldtheta, s) \leq C$ for all $\boldtheta \in \Theta$ and $s \in [0,\tmax]$ for some $0<C<\infty$.
	\end{enumerate}
	If $\Theta$ contains a neighbourhood $U$ of $\boldtheta_0$, then it holds
	\begin{enumerate}[label = (\arabic*)]
		\item \label{item:lln} a weak law of large numbers, i.e.
		\begin{equation}
			\frac{1}{n}\sum_{i=1}^n G(\hat{\boldtheta}_n, X_i) \overset{\mathbb{P}}{\to} \mathbb{E}[G(\boldtheta_0, X_1)].
		\end{equation}
		as $n \to \infty$ and
		\item \label{item:asymp_equicont} the asymptotic continuity property
		\begin{align*}
			&n^{-\frac{1}{2}} \left( \sum_{i=1}^n (G(\hat{\boldtheta}_n, X_i) - \mathbb{E}[G(\hat{\boldtheta}_n, X_1)]) -\sum_{i=1}^n (G(\boldtheta_0, X_i) - \mathbb{E}[G(\boldtheta_0, X_1)])\right)\\
			&\overset {\mathbb{P}}{\to}0.
		\end{align*}
		as $n \to \infty$.
	\end{enumerate} 
\end{lemma}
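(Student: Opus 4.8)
The plan is to treat the two assertions separately but with a common engine: reduce everything to statements about the empirical distribution of the i.i.d.\ sequence $(X_i)$, and then insert the random parameter $\hat{\boldtheta}_n$ using its consistency together with the independence of $(X_i)$ from $\hat{\boldtheta}_m$. For part \ref{item:lln}, I would first note that by the ordinary weak law of large numbers, $\frac1n\sum_{i=1}^n G(\boldtheta, X_i) \overset{\mathbb{P}}{\to} \mathbb{E}[G(\boldtheta, X_1)]$ for each fixed $\boldtheta$; moreover, by boundedness \ref{item:boundedness} and dominated convergence, $\boldtheta \mapsto \mathbb{E}[G(\boldtheta, X_1)]$ is continuous at $\boldtheta_0$. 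The real work is to make the convergence of the empirical averages locally uniform in $\boldtheta$ near $\boldtheta_0$. Here I would exploit \ref{item:monotonicity}: for each fixed $\boldtheta$ the map $s \mapsto G(\boldtheta,s)$ is monotone and bounded, so $\frac1n \sum_{i=1}^n G(\boldtheta, X_i)$ is an average of a monotone bounded function of $X_i$, and one can reduce uniformity over a neighbourhood $U$ of $\boldtheta_0$ to a finite grid of parameter values by a bracketing/monotonicity argument in the spirit of the classical Glivenko--Cantelli proof (which is essentially the content of Lemma~\ref{lemma:uniform_convergence}, applied to the functions $s\mapsto G(\boldtheta,s)$ along the grid). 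Once we have $\sup_{\boldtheta \in U'} \left| \frac1n\sum_{i=1}^n G(\boldtheta, X_i) - \mathbb{E}[G(\boldtheta, X_1)]\right| \overset{\mathbb{P}}{\to} 0$ for a small enough neighbourhood $U' \ni \boldtheta_0$, the conclusion follows: on the event $\{\hat{\boldtheta}_n \in U'\}$ (which has probability tending to $1$ by consistency) we bound $\left|\frac1n\sum G(\hat{\boldtheta}_n,X_i) - \mathbb{E}[G(\boldtheta_0,X_1)]\right|$ by the uniform deviation plus $\left|\mathbb{E}[G(\hat{\boldtheta}_n,X_1)] - \mathbb{E}[G(\boldtheta_0,X_1)]\right|$, and the latter goes to $0$ in probability by the continuity of the mean function and the continuous mapping theorem.

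For part \ref{item:asymp_equicont}, the key object is the centred empirical process $\mathbb{G}_n(\boldtheta) \coloneqq n^{-1/2}\sum_{i=1}^n \big(G(\boldtheta, X_i) - \mathbb{E}[G(\boldtheta, X_1)]\big)$, and we must show $\mathbb{G}_n(\hat{\boldtheta}_n) - \mathbb{G}_n(\boldtheta_0) \overset{\mathbb{P}}{\to} 0$. The standard route is stochastic equicontinuity of $\mathbb{G}_n$ at $\boldtheta_0$: it suffices to show that for every $\eta>0$,
\begin{equation*}
\lim_{\rho \downarrow 0}\ \limsup_{n\to\infty}\ \mathbb{P}\!\left( \sup_{\|\boldtheta - \boldtheta_0\| \le \rho} \big| \mathbb{G}_n(\boldtheta) - \mathbb{G}_n(\boldtheta_0)\big| > \eta \right) = 0,
\end{equation*}
because then, combining with $\hat{\boldtheta}_n \overset{\mathbb{P}}{\to}\boldtheta_0$ (so $\|\hat{\boldtheta}_n - \boldtheta_0\| \le \rho$ with probability $\to 1$) and the independence of $(X_i)$ from $\hat{\boldtheta}_n$ — which lets us condition on $\hat{\boldtheta}_n$ and apply the uniform bound — gives the claim. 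To establish the equicontinuity I would again lean on the monotonicity and boundedness of $G(\cdot,s)$: the class $\{s \mapsto G(\boldtheta,s) : \boldtheta \in U\}$ of functions indexed by a parameter in a finite-dimensional neighbourhood, each of which is monotone and uniformly bounded by $C$, is a uniformly bounded subclass of the monotone functions and hence a Glivenko--Cantelli and indeed Donsker class; continuity \ref{item:cont} of $G$ forces the $L^2$-oscillation $\mathbb{E}\big[(G(\boldtheta,X_1) - G(\boldtheta_0,X_1))^2\big] \to 0$ as $\boldtheta \to \boldtheta_0$ (by dominated convergence, using the $C$ bound), which is exactly the condition that turns the Donsker property into stochastic equicontinuity at $\boldtheta_0$. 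Alternatively, one can give a hands-on chaining-free argument: decompose the increment over the small ball using finitely many brackets built from monotone functions (as in Lemma~\ref{lemma:uniform_convergence}), control each bracket by a bounded-difference / maximal inequality, and let the bracket width shrink.

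The main obstacle, and the step I would be most careful about, is the uniformity in $\boldtheta$ — both the uniform law of large numbers in part \ref{item:lln} and, more delicately, the stochastic equicontinuity in part \ref{item:asymp_equicont}. Pointwise statements are immediate from classical i.i.d.\ limit theorems, but the substitution $\boldtheta \rightsquigarrow \hat{\boldtheta}_n$ is only legitimate once we have control that is uniform over a shrinking neighbourhood of $\boldtheta_0$, and this is where assumptions \ref{item:cont}--\ref{item:boundedness} must all be used: monotonicity to get a usable (Glivenko--Cantelli / Donsker) entropy bound on the indexing class, boundedness to have a uniform envelope and to justify the dominated-convergence steps for the mean functionals, and continuity to make the $L^2$-oscillation vanish as $\boldtheta \to \boldtheta_0$. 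I would set up the whole argument on the high-probability event $\{\hat{\boldtheta}_n \in U\}$ and use the independence of $(X_i)$ and $\hat{\boldtheta}_n$ to reduce, by conditioning, any probability involving $\mathbb{G}_n(\hat{\boldtheta}_n)$ to a deterministic-parameter supremum bound; once that reduction is in place the remaining estimates are routine.
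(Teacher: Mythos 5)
Your proposal is correct and follows essentially the same route as the paper: both rest on the observation that the $\Theta$-indexed class of uniformly bounded monotone functions $s\mapsto G(\boldtheta,s)$ is Donsker (hence Glivenko--Cantelli), derive part (1) by splitting off the bias term $\mathbb{E}[G(\hat{\boldtheta}_n,X_1)]-\mathbb{E}[G(\boldtheta_0,X_1)]$ (handled by dominated convergence and the continuous mapping theorem) from a uniform-deviation term on the high-probability event $\{\hat{\boldtheta}_n\in U\}$, and derive part (2) from asymptotic (stochastic) equicontinuity of the empirical process at $\boldtheta_0$ combined with consistency of $\hat{\boldtheta}_n$ and the independence of $(X_i)$ from $\hat{\boldtheta}_n$. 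Your explicit remark that continuity plus the uniform bound forces the $L^2$-oscillation to vanish is the same ingredient the paper supplies via its Lemma~\ref{lemma:uniform_convergence}.
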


\begin{proof}
	These statements can be easily proven by application of empirical process theory to the empirical process $\mathbb{G}_n$ that is defined by its evaluation at $\boldtheta \in \Theta$ via
	\begin{equation*}
		\mathbb{G}_n (\boldtheta) \coloneqq	n^{-1/2} \sum_{i=1}^n (G(\boldtheta, X_i) - \mathbb{E}[G(\boldtheta, X_1)])
	\end{equation*}
	for each $\boldtheta \in \Theta$.\\
	The two statements will be consequences of the fact that the class of functions $G$ fulfilling the assumptions above is Donsker. In this context, it should be mentioned that subclasses of Donsker classes are also Donsker classes and that Donsker classes are also Glivenko-Cantelli classes (see e.g. \cite{vanderVaart:1996, vanderVaart:1998, Pollard:1990}).\\
	According to Theorem 2.7.9 of \cite{vanderVaart:1998}, the class of uniformly bounded, monotone functions is Donsker. The $\Theta$-indexed subclass examined here is a subclass and hence also a Donsker and a Glivenko-Cantelli class. Then, statement \ref{item:asymp_equicont} follows from the resulting property of asymptotic equicontinuity (see Section 2.1.2 of \cite{vanderVaart:1996} and Equation (10.4) of \cite{Pollard:1990}). For this, we also require the uniform convergence proven in Lemma \ref{lemma:uniform_convergence} that guarantees the requirement stated before Equation (10.4) of \cite{Pollard:1990}.\\
	For statement \ref{item:lln} the triangle inequality implies
	\begin{align*}
		&\mathbb{P}\left[ \left| \frac{1}{n}\sum_{i=1}^n G(\hat{\boldtheta}_n, X_i)- \mathbb{E}[G(\boldtheta_0, X_1)] \right| > \varepsilon \right]\\
		\leq& \mathbb{P}\left[ \left| \frac{1}{n} \sum_{i=1}^n \left(G(\hat{\boldtheta}_n, X_i) - \mathbb{E}[G(\hat{\boldtheta}_n, X_1)|\hat{\boldtheta}_n]\right) \right| > \frac{\varepsilon}{2} \right]\\
		&+\mathbb{P}\left[ \left| \mathbb{E}[G(\hat{\boldtheta}_n, X_1)|\hat{\boldtheta}_n] - \mathbb{E}[G(\boldtheta_0, X_1)] \right| > \frac{\varepsilon}{2} \right]
	\end{align*}
	for arbitrary $\varepsilon > 0$. We have to show that this quantity is smaller than any $\delta>0$ for $n \in \mathbb{N}$ large enough.\\
	For the second summand we want to establish convergence in probability of the sequence $(\mathbb{E}[G(\hat{\boldtheta}_n, X)|\hat{\boldtheta}_n])_{n\in\mathbb{N}}$ to $\mathbb{E}[G(\boldtheta, X)]$. By the continuous mapping theorem (see \cite{vanderVaart:1998}, Theorem 2.3), we need continuity of the mapping $\tilde{\boldtheta}\mapsto \mathbb{E}[G(\tilde{\boldtheta}, X)]$ at $\boldtheta_0$ for that. The continuity of $\tilde{\boldtheta}\mapsto G(\tilde{\boldtheta}, x)$ at $\boldtheta_0$ for any $x \in \mathbb{R}_+$ and the uniform bound allows application of the dominated convergence theorem to prove continuity. Hence, we can find $n^{(1)}$ large enough, s.t. the second summand summand is smaller than $\delta/2$ for any $n > n^{(1)}$.\\
	For the second summand, we note that
	\begin{align*}
		&\mathbb{P}\left[ \left| \frac{1}{n} \sum_{i=1}^n \left(G(\hat{\boldtheta}_n, X_i) - \mathbb{E}[G(\hat{\boldtheta}_n, X_1)|\hat{\boldtheta}_n]\right) \right| > \frac{\varepsilon}{2} \right]\\
		\leq&\mathbb{P} \left[ \sup_{\tilde{\boldtheta} \in U} \left| \frac{1}{n} \sum_{i=1}^n \left(G(\tilde{\boldtheta}, X_i) - \mathbb{E}[G(\tilde{\boldtheta}, X_1)]\right) \right| > \frac{\varepsilon}{2} \right]\\
		&+\mathbb{P}[\hat{\boldtheta}_n \notin U]\\
		\leq&\mathbb{P} \left[ \sup_{\tilde{\boldtheta} \in \Theta} \left| \frac{1}{n} \sum_{i=1}^n \left(G(\tilde{\boldtheta}, X_i) - \mathbb{E}[G(\tilde{\boldtheta}, X_1)]\right) \right| > \frac{\varepsilon}{2} \right]\\
		&+\mathbb{P}[\hat{\boldtheta}_n \notin U]
	\end{align*}
	As the class of functions examined here is Glivenko-Cantelli, the convergence $1/n \sum_{i=1}^n G(\boldtheta, X_i) \overset{\mathbb{P}}{\to} \mathbb{E}[G(\boldtheta, X_1)]$ is uniform over $\boldtheta \in \Theta$. Hence, we can find $n^{(2)}$ large enough, s.t. the corresponding probability is smaller than $\delta/4$ for any $n > n^{(2)}$. By the convergence in probability of $(\hat{\boldtheta}_n)_{n \in \mathbb{N}}$, there is also some $n^{(3)}$ large enough, s.t. the last summand here is smaller than $\delta/4$ for any $n > n^{(3)}$.\\
	Putting this together, we find that 
	\begin{equation*}
		\mathbb{P}\left[ \left| \frac{1}{n}\sum_{i=1}^n G(\hat{\boldtheta}_n, X_i)-\mathbb{E}[G(\boldtheta, X_1)] \right| > \varepsilon \right] < \delta \quad \forall n > \max(n^{(1)}, n^{(2)}, n^{(3)})
	\end{equation*}
	which concludes the proof of statement \ref{item:lln}.
\end{proof}

\begin{theorem}\label{thm:main_thm}
	We assume that the parametric family of distributions fulfills assumptions \ref{item:continuity}-\ref{item:exchange_expectation_differentiation}. Let $\hat{\boldtheta}_A$ be the maximum likelihood estimator of $\boldtheta_A$, which is estimated from the $n_A$ observations from group $A$ and $0 < t \leq \tmax$. If $T_A$ and $T_B$ have the same distribution and under the assumption of independent right-censoring in both groups, we have
	\begin{equation}
	n_B^{-1/2} \left(N_B(t) - \sum_{i \in \mathcal{N}_B} \Lambda(\hat{\boldtheta}_A, t \wedge X_{B,i}) \right) \overset{\mathcal{D}}{\to} \mathcal{N}(0, V_1(t) + V_2(t))
	\end{equation}
	where
	\begin{align}
	V_1(t)&\coloneqq \mathbb{P}[T_B \leq C_B \wedge t]=\mathbb{E}[\Lambda(\boldtheta_A, t \wedge X_B)],\\
	V_2(t)&\coloneqq \pi \cdot\mathbb{E}[\nabla_{\boldtheta} \Lambda(\boldtheta_A, t \wedge X_B)]^T \mathcal{I}(\boldtheta_A)^{-1} \mathbb{E}[\nabla_{\boldtheta} \Lambda(\boldtheta_A, t \wedge X_B)]
	\end{align}
as $n_A + n_B = n \to \infty$ with $n_B/n_A \to \pi > 0$.
	\end{theorem}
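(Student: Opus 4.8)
The plan is to decompose the centred, $n_B^{-1/2}$-normalised counting process into an ``oracle'' one-sample log-rank martingale — the version that would be used if $\boldtheta_A$ were known — plus a term that isolates the effect of having to estimate $\boldtheta_A$, and then to show that these two pieces are asymptotically \emph{independent} mean-zero Gaussians with variances $V_1(t)$ and $V_2(t)$. Concretely, write
\begin{equation*}
	n_B^{-1/2}\Big(N_B(t) - \sum_{i\in\mathcal{N}_B}\Lambda(\hat\boldtheta_A, t\wedge X_{B,i})\Big) = A_n - C_n,
\end{equation*}
with $A_n := n_B^{-1/2}\big(N_B(t) - \sum_{i\in\mathcal{N}_B}\Lambda(\boldtheta_A, t\wedge X_{B,i})\big)$ and $C_n := n_B^{-1/2}\sum_{i\in\mathcal{N}_B}\big(\Lambda(\hat\boldtheta_A, t\wedge X_{B,i}) - \Lambda(\boldtheta_A, t\wedge X_{B,i})\big)$. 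Because $T_A$ and $T_B$ have the same law, $\Lambda(\boldtheta_A,\cdot)$ is the true cumulative hazard of $T_B$, so under independent right-censoring $A_n$ is exactly the compensated one-sample log-rank process $M_{\text{OSLR}}(t)$ formed with a \emph{deterministic} reference. By the martingale central limit theorem (Theorem~II.5.1 of \cite{Andersen:1993}), precisely as recorded in Section~\ref{sec:testing}, $A_n\overset{\mathcal{D}}{\to}\mathcal{N}(0,V_1(t))$ with $V_1(t)=\mathbb{E}[\Lambda(\boldtheta_A,t\wedge X_B)]$; and $A_n$ is a functional of the group-$B$ data alone.

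The term $C_n$ is where the real work lies, and I would handle it with Lemma~\ref{lemma:empirical_processes} applied to $G(\boldtheta,s):=\Lambda(\boldtheta, t\wedge s)$, the i.i.d.\ observations $X_{B,i}$, and the estimator $\hat\boldtheta_A$ (consistent by the assumed regularity conditions and, being built from group $A$, independent of group $B$). Its hypotheses are easy to check: continuity of $G$ is \ref{item:continuity} plus continuity of the minimum; $s\mapsto G(\boldtheta,s)$ is monotone since $\Lambda(\boldtheta,\cdot)$ is; and the uniform bound is obtained by restricting attention to a compact ball $\overline{B_{\delta}}(\boldtheta_A)\subset\boldTheta$, on which the continuous $G$ is bounded, using $\mathbb{P}[\hat\boldtheta_A\in\overline{B_{\delta}}(\boldtheta_A)]\to 1$. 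Part~(2) of the lemma, together with the identity $\mathbb{E}[G(\hat\boldtheta_A, X_{B,1})\mid\hat\boldtheta_A]=\phi(\hat\boldtheta_A)$ for $\phi(\boldtheta):=\mathbb{E}[\Lambda(\boldtheta, t\wedge X_B)]$ (which uses the independence of the two groups), then gives
\begin{equation*}
	C_n = \sqrt{n_B}\,\big(\phi(\hat\boldtheta_A) - \phi(\boldtheta_A)\big) + o_{\mathbb{P}}(1).
\end{equation*}

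It then remains to linearise $\phi$ around $\boldtheta_A$. Assumption~\ref{item:differentiability} and the dominating function of~\ref{item:exchange_expectation_differentiation} allow differentiation under the expectation (dominated convergence), so $\phi$ is differentiable at $\boldtheta_A$ with gradient $c:=\mathbb{E}[\nabla_{\boldtheta}\Lambda(\boldtheta_A, t\wedge X_B)]$. Since $\sqrt{n_B}(\hat\boldtheta_A-\boldtheta_A)=\sqrt{n_B/n_A}\cdot\sqrt{n_A}(\hat\boldtheta_A-\boldtheta_A)=O_{\mathbb{P}}(1)$, the delta method yields $C_n = c^{T}\sqrt{n_B}(\hat\boldtheta_A-\boldtheta_A) + o_{\mathbb{P}}(1)=:D_n + o_{\mathbb{P}}(1)$. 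The asymptotic normality of the maximum likelihood estimator together with $n_B/n_A\to\pi$ and Slutsky's theorem then gives $D_n\overset{\mathcal{D}}{\to}\mathcal{N}\big(0,\pi\,c^{T}\mathcal{I}(\boldtheta_A)^{-1}c\big)=\mathcal{N}(0,V_2(t))$, and $D_n$ is a functional of the group-$A$ data alone.

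Putting the pieces together, the quantity of interest equals $A_n - D_n + o_{\mathbb{P}}(1)$, where $A_n$ depends only on group $B$, $D_n$ only on group $A$, and the two groups are independent, so $A_n$ and $-D_n$ are independent for every $n$ and converge marginally to $\mathcal{N}(0,V_1(t))$ and $\mathcal{N}(0,V_2(t))$. Lemma~\ref{lemma:convergence_in_distribution_independence} then shows $A_n-D_n$ converges in distribution to a sum of independent normals, that is, to $\mathcal{N}(0,V_1(t)+V_2(t))$, and a final use of Slutsky's theorem absorbs the $o_{\mathbb{P}}(1)$ remainder. The main obstacle is the step that turns the random curve difference $\Lambda(\hat\boldtheta_A,\cdot)-\Lambda(\boldtheta_A,\cdot)$ into the deterministic $\sqrt{n_B}$-scaled drift $\phi(\hat\boldtheta_A)-\phi(\boldtheta_A)$: this is a stochastic-equicontinuity statement, delivered by Lemma~\ref{lemma:empirical_processes}(2) through the Donsker/Glivenko--Cantelli machinery (and the uniform convergence of Lemma~\ref{lemma:uniform_convergence}), and the care needed lies in verifying its hypotheses for $\Lambda(\boldtheta,t\wedge\cdot)$ under~\ref{item:continuity}--\ref{item:exchange_expectation_differentiation} and in keeping the two distinct normalisations $\sqrt{n_A}$ and $\sqrt{n_B}$ straight throughout.
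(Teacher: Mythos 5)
Your proposal is correct and follows essentially the same route as the paper's own proof: the same split into the oracle martingale (handled by Rebolledo's CLT), a centred empirical-process remainder killed by the asymptotic-equicontinuity part of Lemma~\ref{lemma:empirical_processes}, a delta-method term for $\sqrt{n_B}\,(\phi(\hat{\boldtheta}_A)-\phi(\boldtheta_A))$ giving $V_2(t)$, and the combination of the two independent Gaussian limits via Lemma~\ref{lemma:convergence_in_distribution_independence}. The only cosmetic difference is that you group the paper's second and third summands into a single $C_n$ before centring, and you state the delta-method step as an explicit linearisation $D_n=c^{T}\sqrt{n_B}(\hat{\boldtheta}_A-\boldtheta_A)+o_{\mathbb{P}}(1)$; the substance is identical.
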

\begin{proof}
First, we split up the above term as follows
\begin{align}
&n_B^{-1/2} \left( N_B(t) - \sum_{i \in \mathcal{N}_B} \Lambda(\hat{\boldtheta}_A, t \wedge X_{B,i}) \right)\\
=&n_B^{-1/2} \left( N_B(t) - \sum_{i \in \mathcal{N}_B} \Lambda(\boldtheta_A, t \wedge X_{B,i}) \right)  \\
 &+ n_B^{-1/2} \left(\sum_{i \in \mathcal{N}_B} \left(\Lambda(\boldtheta_A, t \wedge X_{B,i})  - \mathbb{E}[\Lambda(\boldtheta_A, t \wedge X_B)] \right) \right) \\
 &- n_B^{-1/2} \left(\sum_{i \in \mathcal{N}_B} \left(\Lambda(\hat{\boldtheta}_A, t \wedge X_{B,i})  - \mathbb{E}[\Lambda(\hat{\boldtheta}_A, t \wedge X_B) | \hat{\boldtheta}_A] \right) \right) \\
 &- n_B^{1/2} \left( \mathbb{E}[\Lambda(\hat{\boldtheta}_A, t \wedge X_B) | \hat{\boldtheta}_A] - \mathbb{E}[\Lambda(\boldtheta_A, t \wedge X_B)] \right)
\end{align}
In the following, we will look at the first summand, the last summand and the two middle summands separately.\\
\underline{\textit{First summand:}}\\
If $T_A$ and $T_B$ have the same distribution, we can replace $\boldtheta_A$ by $\boldtheta_B$ s.t. the resulting process
\begin{equation}
M(t) \coloneqq n_B^{-1/2} \left( N_B(t) - \sum_{i \in \mathcal{N}_B} \Lambda(\boldtheta_A, t \wedge X_{B,i}) \right)
\end{equation}
is a martingale w.r.t. the filtration generated by the observations in group $B$. From Rebolledo's martingale central limit theorem (see e.g. \cite{Andersen:1993}, Theorem II.5.1) we know that the limiting process as $n_B \to \infty$ is a continuous Gaussian martingale with variance function $V_1(t)$ as defined in the statement above. In particular, for a fixed  $t \geq 0$ we have
\begin{equation}
n_B^{-1/2} \left( N_B(t) - \sum_{i \in \mathcal{N}_B} \Lambda(\boldtheta_A, t \wedge X_{B,i}) \right) \overset{\mathcal{D}}{\to} \mathcal{N}(0,V_1(t)).
\end{equation}
as $n_B \to \infty$ and hence also as $n \to \infty$.\\
\underline{\textit{Second and third summand:}}\\
We apply part \ref{item:asymp_equicont} of Lemma \ref{lemma:empirical_processes} to prove that the sum of the second and the third summand converges to 0 in probability. Without loss of generality, we can assume that assumption \ref{item:boundedness} of Lemma \ref{lemma:empirical_processes} is fulfilled. Because otherwise, by local compactness of $\Theta$, there is a compact neighbourhood $K_{\boldtheta_A}$ of $\boldtheta_A$. By its continuity, $\Lambda$ is bounded on $K_{\boldtheta_A} \times [0,\tmax]$. Hence, if necessary, we can restrict $\Theta$ to $K_{\boldtheta_A}$. Also, by consistency of $\hat{\boldtheta}_A$, we know that the probability, that $\hat{\boldtheta}_A$ will lie outside of $K_{\boldtheta_A}$ converges to $0$ as $n \to \infty$.\\
\underline{\textit{Last summand:}}\\
Regarding the last summand, we want to apply the multivariate delta method to $\hat{\boldtheta}_A$ together with the function
\begin{equation}
\boldtheta \mapsto \mathbb{E}[\Lambda(\boldtheta, t \wedge X_B)].
\end{equation}
By assumptions \ref{item:differentiability} and \ref{item:exchange_expectation_differentiation} we obtain continuous differentiability of this function at $\boldtheta_A$ and we can exchange differentiation and integration by application of the dominated convergence theorem, i.e.
\begin{equation}
\nabla_{\boldtheta} \mathbb{E}[\Lambda(\boldtheta_A, t \wedge X_{B,i})] = \mathbb{E}[\nabla_{\boldtheta} \Lambda(\boldtheta_A, t \wedge X_{B,i})].
\end{equation}
Together with the asymptotic normality of $\hat{\boldtheta}_A$, we can apply the multivariate delta method as e.g. given in \cite{vanderVaart:1998}, Theorem 3.1. This yields
\begin{align}
&n_A^{1/2} \left( \mathbb{E}[\Lambda(\hat{\boldtheta}_A, t \wedge X_B) | \hat{\boldtheta}_A] - \mathbb{E}_{X_B}[\Lambda(\boldtheta_A, t \wedge X_B)] \right)\\
\overset{\mathcal{D}}{\to} &\mathcal{N}(0, \mathbb{E}[\nabla_{\boldtheta} \Lambda(\boldtheta_A, t \wedge X_B)]^T \mathcal{I}(\boldtheta_A)^{-1} \mathbb{E}[\nabla_{\boldtheta} \Lambda(\boldtheta_A, t \wedge X_B)])
\end{align}
and hence
\begin{align}
n_B^{1/2} \left( \mathbb{E}[\Lambda(\hat{\boldtheta}_A, t \wedge X_B) | \hat{\boldtheta}_A] - \mathbb{E}_{X_B}[\Lambda(\boldtheta_A, t \wedge X_B)] \right) \overset{\mathcal{D}}{\to} \mathcal{N}(0, V_2(t))
\end{align}
as $n_B=\pi n_A$.\\
\underline{\textit{Conclusion:}}
As the first summand only depends from observations in group $B$ and the last summand only depends from $\hat{\boldtheta}_A$ which is computed from observations in group $A$, they are independent. Hence it follows from Lemma \ref{lemma:convergence_in_distribution_independence} that 
\begin{align}
&n_B^{-1/2} \left( N_B(t) - \sum_{i \in \mathcal{N}_B} \Lambda(\boldtheta_A, t \wedge X_{B,i}) \right) \\
&\qquad - n_B^{1/2} \left( \mathbb{E}_{X_B}[\Lambda(\hat{\boldtheta}_A, t \wedge X_B)] - \mathbb{E}_{X_B}[\Lambda(\boldtheta_A, t \wedge X_B)] \right)\\
\overset{\mathcal{D}}{\to}&\mathcal{N}(0, V_1(t) + V_2(t)).
\end{align}
As the sum of second and the third summand converges in probability to 0, we can conclude.
\end{proof}

\begin{corollary}\label{cor:var_est_pv}
	Let $\hat{\theta}_A$ be the maximum likelihood estimator of $\theta_A$, which is estimated from the $n_A$ observations from group $A$. If $T_A$ and $T_B$ have the same distribution, we have
	\begin{equation}
		\frac{1}{n_B}\sum_{i\in \mathcal{N}_B}\Lambda(\hat{\boldtheta}_A, t\wedge X_{B,i}) \overset{\mathbb{P}}{\to} V_1(t).
	\end{equation}
	as $n_A + n_B = n \to \infty$ with $n_B/n_A \to \pi > 0$ where $V_1(t)\coloneqq\mathbb{P}[T_B \leq C_B \wedge t]=\mathbb{E}[\Lambda(\theta_A, t \wedge X_B)]$.
\end{corollary}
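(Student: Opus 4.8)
The plan is to obtain this as a direct consequence of part \ref{item:lln} (the weak law of large numbers) of Lemma \ref{lemma:empirical_processes}, applied with its sample size $n$ replaced by $n_B$; note that $n = n_A + n_B \to \infty$ together with $n_B/n_A \to \pi > 0$ forces $n_B \to \infty$. First I would fix $t \in (0, \tmax]$ and set $G(\boldtheta, s) \coloneqq \Lambda(\boldtheta, t \wedge s)$. The relevant sequences are the i.i.d.\ group-$B$ observations $(X_{B,i})_i$ and the maximum likelihood estimator $\hat{\boldtheta}_A$, which is computed exclusively from the group-$A$ data and is therefore independent of all the $X_{B,i}$ by the independence of the two groups assumed in Section \ref{sec:notation}. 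Consistency of the maximum likelihood estimator provides $\hat{\boldtheta}_A \overset{\mathbb{P}}{\to} \boldtheta_A$, so the hypotheses of Lemma \ref{lemma:empirical_processes} on the sequences are met, and $\boldTheta$, being open and containing $\boldtheta_A$, contains a neighbourhood of it.

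The next step is to check that $G$ satisfies conditions \ref{item:cont}--\ref{item:boundedness} of Lemma \ref{lemma:empirical_processes}. Continuity follows from assumption \ref{item:continuity}, since $G$ is the composition of the continuous map $\Lambda$ with the continuous map $(\boldtheta, s) \mapsto (\boldtheta, t \wedge s)$. Monotonicity of $G(\boldtheta, \cdot)$ is immediate because $\Lambda(\boldtheta, \cdot)$ is a cumulative hazard function, hence non-decreasing, and $s \mapsto t \wedge s$ is non-decreasing. For the uniform boundedness in \ref{item:boundedness} I would repeat the localisation argument already used in the proof of Theorem \ref{thm:main_thm}: by local compactness of $\boldTheta$ there is a compact neighbourhood $K_{\boldtheta_A}$ of $\boldtheta_A$ on which the continuous function $\Lambda$ is bounded over $K_{\boldtheta_A} \times [0, \tmax]$, and restricting $\boldTheta$ to $K_{\boldtheta_A}$ costs nothing since, by consistency, $\mathbb{P}[\hat{\boldtheta}_A \notin K_{\boldtheta_A}] \to 0$.

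Part \ref{item:lln} of Lemma \ref{lemma:empirical_processes} then gives
\begin{equation*}
\frac{1}{n_B}\sum_{i \in \mathcal{N}_B} \Lambda(\hat{\boldtheta}_A, t \wedge X_{B,i}) \overset{\mathbb{P}}{\to} \mathbb{E}[\Lambda(\boldtheta_A, t \wedge X_B)].
\end{equation*}
Finally, since $T_A$ and $T_B$ have the same distribution we have $\Lambda_{T_B} = \Lambda(\boldtheta_A, \cdot)$, and the standard compensator identity for the counting process $N_B$ under independent right-censoring yields $\mathbb{E}[\Lambda(\boldtheta_A, t \wedge X_B)] = \mathbb{P}[T_B \leq C_B \wedge t] = V_1(t)$, which closes the argument.

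There is essentially no genuine obstacle here: the statement is a corollary in the strict sense, and all the substantive work is contained in Lemma \ref{lemma:empirical_processes}. The only point requiring a little care is the localisation step securing hypothesis \ref{item:boundedness}, and that is a verbatim reprise of the reduction already carried out in the proof of Theorem \ref{thm:main_thm}.
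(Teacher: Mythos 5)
Your proposal is correct and follows essentially the same route as the paper's own proof: both obtain the result as a direct application of part \ref{item:lln} of Lemma \ref{lemma:empirical_processes}, with the boundedness hypothesis \ref{item:boundedness} secured by the same localisation-to-a-compact-neighbourhood argument used in the proof of Theorem \ref{thm:main_thm}. Your write-up merely spells out the verification of the lemma's hypotheses and the identification of the limit with $V_1(t)$, which the paper leaves implicit.
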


\begin{proof}
	This is a consequence of part \ref{item:lln} of Lemma \ref{lemma:empirical_processes} because without loss of generality, we can assume that assumption \ref{item:boundedness} of Lemma \ref{lemma:empirical_processes} is fulfilled. The reasoning is the same as in the proof of Theorem \ref{thm:main_thm}.
\end{proof}

\begin{theorem}\label{thm:var_est}
Under the set of assumptions made in Section VI.1.2 of \cite{Andersen:1993} and our additional assumptions \ref{item:continuity} - \ref{item:exchange_expectation_differentiation}, the overall variance
$V_1(t) + V_2(t)$ at some time $0 < t \leq \tmax$ can be consistently estimated by 
\begin{align*}
	&\overbrace{\frac{1}{n_B}\cdot \left(w \cdot N_B(t) +(1-w) \cdot \sum_{i \in \mathcal{N}_B} \Lambda(\hat{\boldtheta}_A, t \wedge X_{B,i}) \right)}^{\eqqcolon \hat{V}_1(w,t)}\\
	+&\underbrace{\pi \cdot \left(\frac{1}{n_B} \sum_{i=1}^{n_B} \nabla_{\boldtheta} \Lambda(\hat{\boldtheta}_A, t \wedge X_{B,i}) \right)^T \mathcal{J}(\hat{\boldtheta}_A)^+ \left( \frac{1}{n_B} \sum_{i=1}^{n_B} \nabla_{\boldtheta} \Lambda(\hat{\boldtheta}_A, t \wedge X_{B,i}) \right)}_{\eqqcolon \hat{V}_2(t)}\\ 
\end{align*}
for any $w \in [0,1]$ as $n_A + n_B = n \to \infty$ with $n_B/n_A \to \pi > 0$.
\end{theorem}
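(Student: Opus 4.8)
The plan is to prove that the two blocks $\hat V_1(w,t)$ and $\hat V_2(t)$ converge in probability to $V_1(t)$ and $V_2(t)$ separately and then add the two limits, which is legitimate since $(a,b)\mapsto a+b$ is continuous. Since $n_B/n_A\to\pi$, the deterministic prefactor of $\hat V_2(t)$ converges to the constant $\pi$ and may be treated as such by Slutsky's theorem.

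\emph{Convergence of $\hat V_1(w,t)$.} Write $\hat V_1(w,t)=w\cdot n_B^{-1}N_B(t)+(1-w)\cdot n_B^{-1}\sum_{i\in\mathcal{N}_B}\Lambda(\hat\boldtheta_A,t\wedge X_{B,i})$. The first term is an average of i.i.d.\ Bernoulli variables $\mathbbm{1}(T_{B,i}\le t\wedge C_{B,i})$ with mean $\mathbb{P}[T_B\le C_B\wedge t]=V_1(t)$, so $n_B^{-1}N_B(t)\overset{\mathbb{P}}{\to}V_1(t)$ by the weak law of large numbers; the second term converges in probability to $V_1(t)$ by Corollary~\ref{cor:var_est_pv}. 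Since $|wa_n+(1-w)b_n-V_1(t)|\le|a_n-V_1(t)|+|b_n-V_1(t)|$, we obtain $\hat V_1(w,t)\overset{\mathbb{P}}{\to}V_1(t)$ uniformly in $w\in[0,1]$.

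\emph{Convergence of $\hat V_2(t)$.} I would establish two ingredients. (a) \emph{Empirical gradient mean}: $n_B^{-1}\sum_{i\in\mathcal{N}_B}\nabla_{\boldtheta}\Lambda(\hat\boldtheta_A,t\wedge X_{B,i})\overset{\mathbb{P}}{\to}\mathbb{E}[\nabla_{\boldtheta}\Lambda(\boldtheta_A,t\wedge X_B)]$. Componentwise, decompose the difference from this limit into $n_B^{-1}\sum_i\bigl(\nabla_{\boldtheta}\Lambda(\hat\boldtheta_A,t\wedge X_{B,i})-\nabla_{\boldtheta}\Lambda(\boldtheta_A,t\wedge X_{B,i})\bigr)$ plus $n_B^{-1}\sum_i\nabla_{\boldtheta}\Lambda(\boldtheta_A,t\wedge X_{B,i})-\mathbb{E}[\nabla_{\boldtheta}\Lambda(\boldtheta_A,t\wedge X_B)]$. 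The second piece tends to $0$ by the weak law of large numbers, the summands being i.i.d.\ and integrable by the envelope $g$ of assumption~\ref{item:exchange_expectation_differentiation}. For the first piece, by continuity of $\boldtheta\mapsto\nabla_{\boldtheta}\Lambda(\boldtheta,s)$ at $\boldtheta_A$ (assumption~\ref{item:differentiability}) and domination by $g$, dominated convergence gives $\rho(\delta')\coloneqq\mathbb{E}\bigl[\sup_{\boldtheta\in\overline{B_{\delta'}}(\boldtheta_A)}\|\nabla_{\boldtheta}\Lambda(\boldtheta,t\wedge X_B)-\nabla_{\boldtheta}\Lambda(\boldtheta_A,t\wedge X_B)\|\bigr]\to 0$ as $\delta'\downarrow 0$; on the event $\{\hat\boldtheta_A\in\overline{B_{\delta'}}(\boldtheta_A)\}$ the first piece is bounded in norm by $n_B^{-1}\sum_i\sup_{\boldtheta\in\overline{B_{\delta'}}(\boldtheta_A)}\|\nabla_{\boldtheta}\Lambda(\boldtheta,t\wedge X_{B,i})-\nabla_{\boldtheta}\Lambda(\boldtheta_A,t\wedge X_{B,i})\|$, which converges in probability to $\rho(\delta')$, while this event has probability tending to $1$ by consistency of $\hat\boldtheta_A$; letting $\delta'\downarrow 0$ completes (a). (Alternatively one writes each coordinate of $s\mapsto\nabla_{\boldtheta}\Lambda(\boldtheta,s)$ on $[0,\tmax]$ as a difference of two monotone, continuous, uniformly bounded functions and applies part~\ref{item:lln} of Lemma~\ref{lemma:empirical_processes} to each.) (b) \emph{Information matrix}: $\mathcal{J}(\hat\boldtheta_A)\overset{\mathbb{P}}{\to}\mathcal{I}(\boldtheta_A)$ by Theorem~VI.1.2 of \cite{Andersen:1993}; as $\mathcal{I}(\boldtheta_A)$ is invertible, with probability tending to $1$ so is $\mathcal{J}(\hat\boldtheta_A)$, whence $\mathcal{J}(\hat\boldtheta_A)^{+}=\mathcal{J}(\hat\boldtheta_A)^{-1}$, and since matrix inversion is continuous at invertible matrices, $\mathcal{J}(\hat\boldtheta_A)^{+}\overset{\mathbb{P}}{\to}\mathcal{I}(\boldtheta_A)^{-1}$. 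Combining (a) and (b) via the continuous mapping theorem applied to $(v,M)\mapsto\pi\,v^{T}Mv$ yields $\hat V_2(t)\overset{\mathbb{P}}{\to}V_2(t)$, and adding this to the convergence of $\hat V_1(w,t)$ proves the theorem.

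\emph{Main obstacle.} I expect ingredient~(a) to be the delicate point: the random $\hat\boldtheta_A$ enters a nonlinear, not-necessarily-monotone functional of the group-$B$ data, so Lemma~\ref{lemma:empirical_processes} does not apply verbatim and one needs the uniform law of large numbers over a shrinking neighbourhood of $\boldtheta_A$ sketched above (or the monotone decomposition). The remaining ingredients---the Bernoulli law of large numbers, Corollary~\ref{cor:var_est_pv}, consistency of $\mathcal{J}(\hat\boldtheta_A)$ from \cite{Andersen:1993}, and the continuous mapping theorem---are routine.
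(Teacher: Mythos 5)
Your proposal is correct and follows essentially the same route as the paper's proof: the same split into $\hat{V}_1(w,t)$ and $\hat{V}_2(t)$, Corollary~\ref{cor:var_est_pv} plus the law of large numbers for $N_B(t)/n_B$ and the convex combination for the first block, and consistency of $\mathcal{J}(\hat{\boldtheta}_A)$ together with the continuous mapping/Slutsky theorems for the second. The only difference is one of detail rather than of method: where the paper disposes of the empirical gradient mean $n_B^{-1}\sum_{i}\nabla_{\boldtheta}\Lambda(\hat{\boldtheta}_A,t\wedge X_{B,i})$ with a brief appeal to the continuous mapping and dominated convergence theorems, and of the pseudo-inverse via a citation, you supply an explicit uniform law of large numbers over a shrinking neighbourhood of $\boldtheta_A$ and a direct continuity-of-inversion argument, which is a legitimate (indeed more self-contained) justification of the same steps; your parenthetical alternative via monotone decompositions of the gradient components is the only part not clearly covered by the stated assumptions, but it is not needed.
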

\begin{proof}
As convergence in probability is conserved under summation, we can consider $V_1(t)$ and $V_2(t)$ separately.\\
\underline{$V_2(t)$:}\\
As stated in Theorem VI.1.2 of \cite{Andersen:1993}, $\mathcal{I}(\boldtheta_A)$ can be consistently estimated by its empirical counterpart $\mathcal{J}(\hat{\boldtheta}_A)$. From \cite{Puri:1984}, we know that invertibility of $\mathcal{I}(\boldtheta_A)$ also yields
\begin{equation*}
	\mathcal{J}(\hat{\boldtheta}_A)^+ \overset{\mathbb{P}}{\to} \mathcal{I}(\boldtheta_A)^{-1}
\end{equation*}
as $n \to \infty$ where $M^+$ denotes the generalized Moore-Penrose inverse of a matrix $M$.\\
Using our assumptions \ref{item:differentiability} and \ref{item:exchange_expectation_differentiation}, we can apply the Continuous Mapping Theorem (see e.g. Theorem 2.3 in \cite{vanderVaart:1998}) and the dominated convergence theorem to state
\begin{equation*}
	\frac{1}{n_B} \sum_{i=1}^{n_B} \nabla_{\boldtheta} \Lambda(\hat{\boldtheta}_A, t \wedge X_{B,i}) \overset{\mathbb{P}}{\to} \mathbb{E}[\nabla_{\boldtheta} \Lambda(\boldtheta_A, t \wedge X_B)]
\end{equation*} 
With Slutsky's theorem (see e.g. Lemma 2.8 in \cite{vanderVaart:1998}) we can conclude this part of the proof.\\
\underline{$V_1(t)$:}\\
As lined out in \cite{Danzer:2023}, the variance component $V_1(t)$ can be consistently estimated via the quadratic variation and the predictable variation process, i.e.
\begin{equation}
	\frac{1}{n_B} N_B(t) \overset{\mathbb{P}}{\to} V_1(t) \quad \text{and} \quad \frac{1}{n_B} \sum_{i=1}^{n_B} \Lambda(\boldtheta_A, t \wedge X_{B,i}) \overset{\mathbb{P}}{\to} V_1(t).
\end{equation}
Following Corollary \ref{cor:var_est_pv}, we can replace the true parameter by the estimate $\hat{\boldtheta}_A$ an combine the two convexly to obtain the convergence
\begin{equation}
\frac{1}{n_B}\cdot \left(w \cdot N_B(t) +(1-w) \cdot \sum_{i \in \mathcal{N}_B} \Lambda(\hat{\boldtheta}_A, t \wedge X_{B,i}) \right) \overset{\mathbb{P}}{\to} V_1(t)
\end{equation} 
for any $w \in [0,1]$. Hence, for any such $w$, we have a consistent estimator of $V_1(t)$ under the null hypothesis. Please note that theoretically, a weighting factor $w \notin [0,1]$ would also be allowed. However, we do not recommend choosing a non-convex combination of the two consistent estimators of $V_1(t)$.
\end{proof}

\begin{corollary}\label{cor:test_stat}
	Under the assumptions of Theorem \ref{thm:main_thm} and \ref{thm:var_est}, the counting process that is compensated by a parametrically estimated cumulative hazard function can be standardised by the estimated variance to obtain an asymptotically standard normally distributed test statistics, i.e.
	\begin{equation}
		\frac{n_B^{-\frac{1}{2}}\left(N_B(t) - \sum_{i \in \mathcal{N}_B} \Lambda(\hat{\boldtheta}_A, t \wedge X_{B,i})\right)}{\sqrt{\hat{V}_1(w,t) + \hat{V}_2(t)}} \overset{\mathcal{D}}{\to} \mathcal{N}(0,1).
	\end{equation}
	for any $w \in [0,1]$.
\end{corollary}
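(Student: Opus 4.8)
The plan is to derive this corollary as an immediate consequence of the two preceding theorems via Slutsky's theorem, so the proof is short; the only point requiring care is the strict positivity of the limiting variance.

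First I would invoke Theorem~\ref{thm:main_thm} to handle the numerator: under assumptions \ref{item:continuity}--\ref{item:exchange_expectation_differentiation} and the stated censoring/equal-distribution conditions,
\begin{equation*}
	n_B^{-1/2}\left(N_B(t) - \sum_{i \in \mathcal{N}_B} \Lambda(\hat{\boldtheta}_A, t \wedge X_{B,i})\right) \overset{\mathcal{D}}{\to} \mathcal{N}(0, V_1(t) + V_2(t)).
\end{equation*}
Next I would invoke Theorem~\ref{thm:var_est} to handle the denominator: for every $w \in [0,1]$, $\hat{V}_1(w,t) + \hat{V}_2(t) \overset{\mathbb{P}}{\to} V_1(t) + V_2(t)$.

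The step that needs a word of justification is that the limiting variance $V_1(t) + V_2(t)$ is strictly positive, so that $x \mapsto \sqrt{x}$ is continuous in a neighbourhood of it and division by its square root is a legitimate continuous operation at the limit. Here $V_2(t) = \pi \cdot \mathbb{E}[\nabla_{\boldtheta}\Lambda(\boldtheta_A, t\wedge X_B)]^T \mathcal{I}(\boldtheta_A)^{-1} \mathbb{E}[\nabla_{\boldtheta}\Lambda(\boldtheta_A, t\wedge X_B)] \geq 0$ as a quadratic form in the positive semidefinite matrix $\mathcal{I}(\boldtheta_A)^{-1}$, and $V_1(t) = \mathbb{P}[T_B \leq C_B \wedge t] > 0$ for $0 < t \leq \tmax$ under the mild non-degeneracy that an event in group $B$ is observable before $t$ with positive probability (which holds in particular for the practically relevant choice $t = \tmax$). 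Hence $V_1(t) + V_2(t) > 0$, and the continuous mapping theorem (Theorem~2.3 in \cite{vanderVaart:1998}) yields $\sqrt{\hat{V}_1(w,t) + \hat{V}_2(t)} \overset{\mathbb{P}}{\to} \sqrt{V_1(t) + V_2(t)} > 0$.

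Finally I would apply Slutsky's theorem (Lemma~2.8 in \cite{vanderVaart:1998}): the quotient of a sequence converging in distribution to $\mathcal{N}(0, V_1(t) + V_2(t))$ and a sequence converging in probability to the positive constant $\sqrt{V_1(t) + V_2(t)}$ converges in distribution to $\mathcal{N}(0, V_1(t) + V_2(t)) / \sqrt{V_1(t) + V_2(t)} = \mathcal{N}(0,1)$, which is exactly the claim. I do not expect a genuine obstacle: the only bookkeeping is the positivity of the limiting variance just discussed; everything else is routine given Theorems~\ref{thm:main_thm} and~\ref{thm:var_est}.
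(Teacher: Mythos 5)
Your proposal is correct and follows essentially the same route as the paper, whose proof is a one-line appeal to Slutsky's theorem combined with Theorems \ref{thm:main_thm} and \ref{thm:var_est}. Your additional remark on the strict positivity of $V_1(t)+V_2(t)$ is a reasonable piece of bookkeeping that the paper leaves implicit, but it does not change the argument.
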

\begin{proof}
	This result is a consequence of Slutsky's theorem in combination with Theorem \ref{thm:main_thm} and \ref{thm:var_est}.
\end{proof}

\newpage

\section{Simulation results}

\subsection{Type I error rate}

\begin{figure}[h!]
	\centering
	\begin{tabular}{c|c|c}
		$n_B$ & two-sided rates & one-sided rates \\
		\hline
		25 &\includegraphics[width=.44\textwidth]{"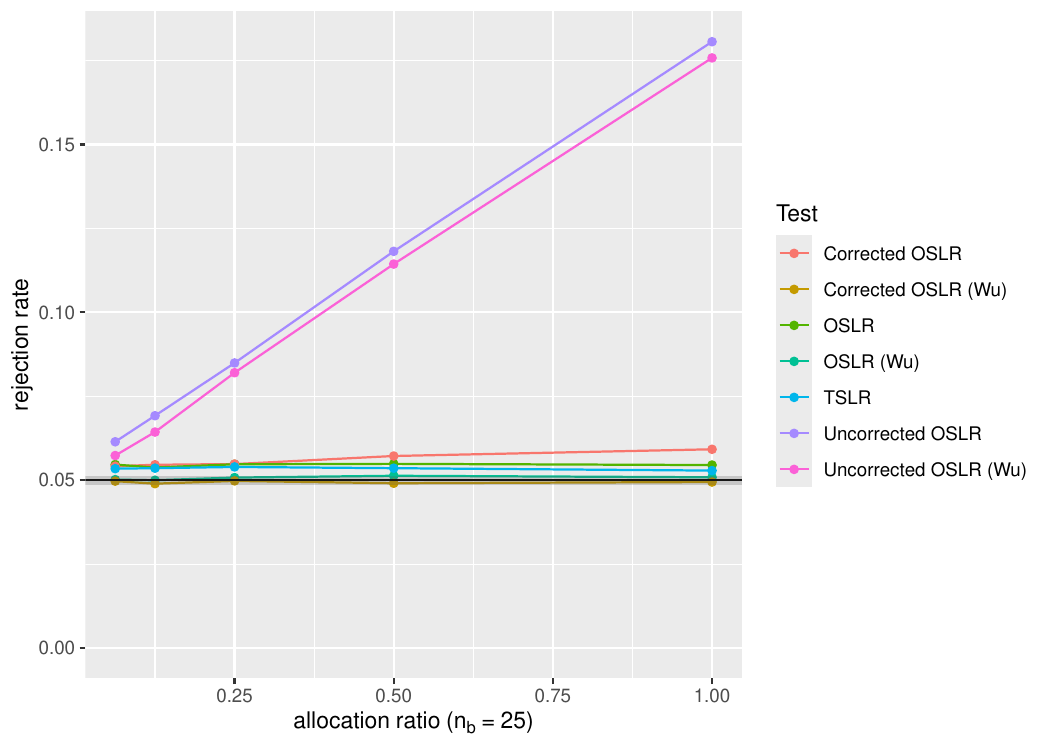"} &   \includegraphics[width=.44\textwidth]{"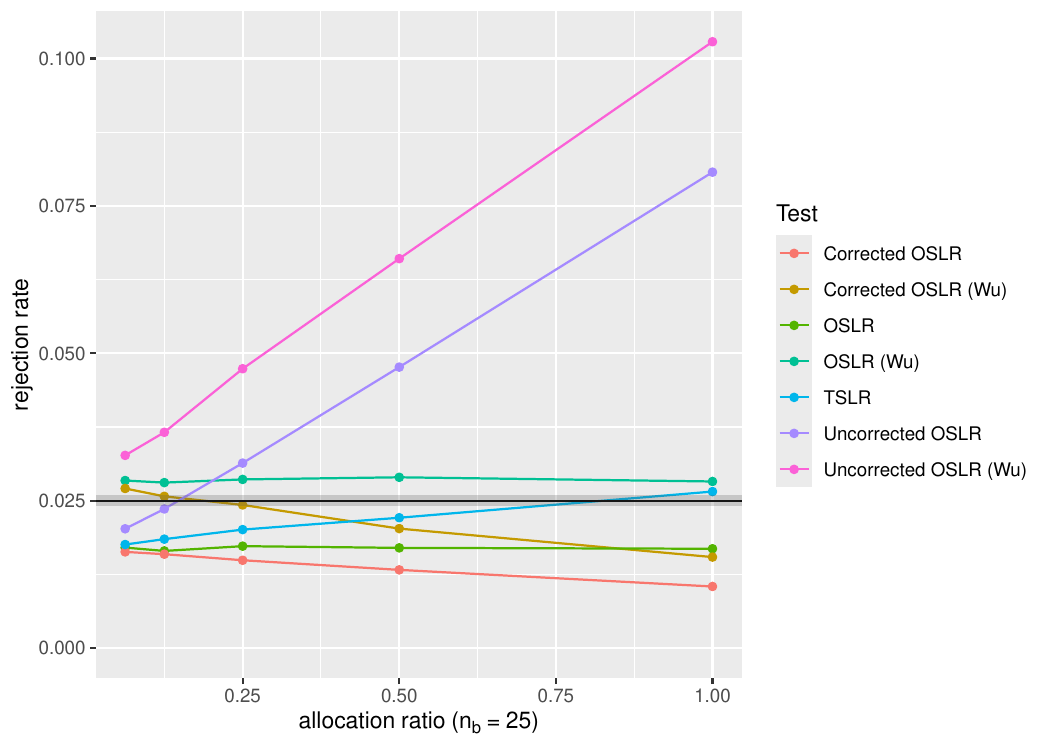"} \\
		\hline
		50 &\includegraphics[width=.44\textwidth]{"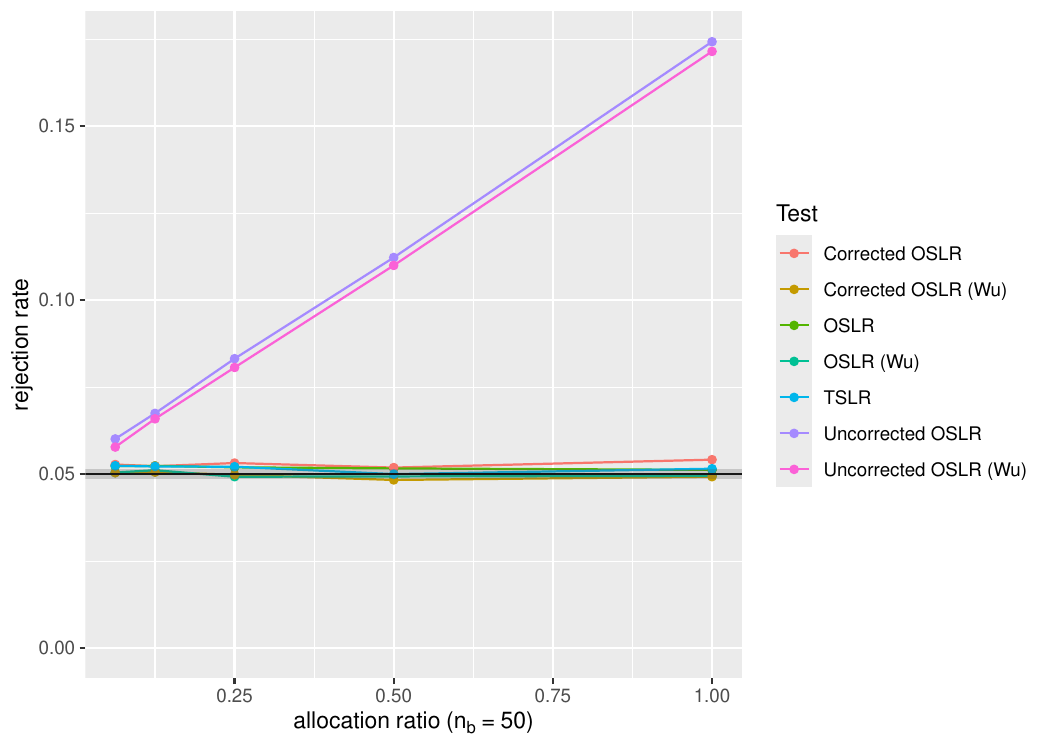"} &   \includegraphics[width=.44\textwidth]{"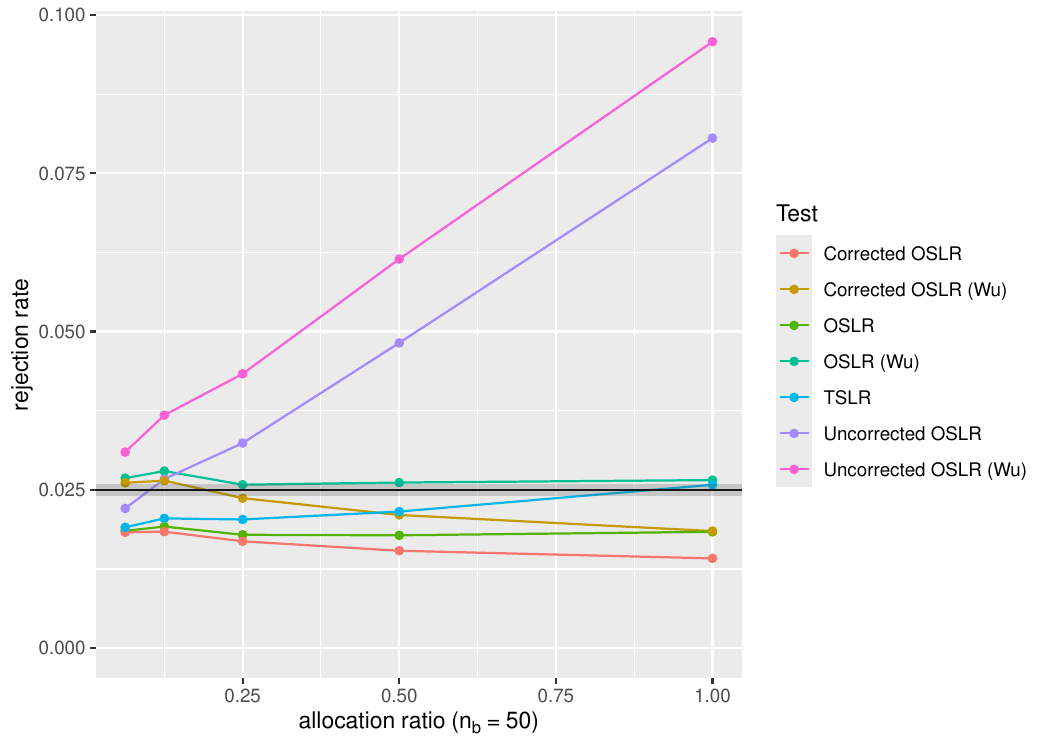"} \\
		\hline
		100 &\includegraphics[width=.44\textwidth]{"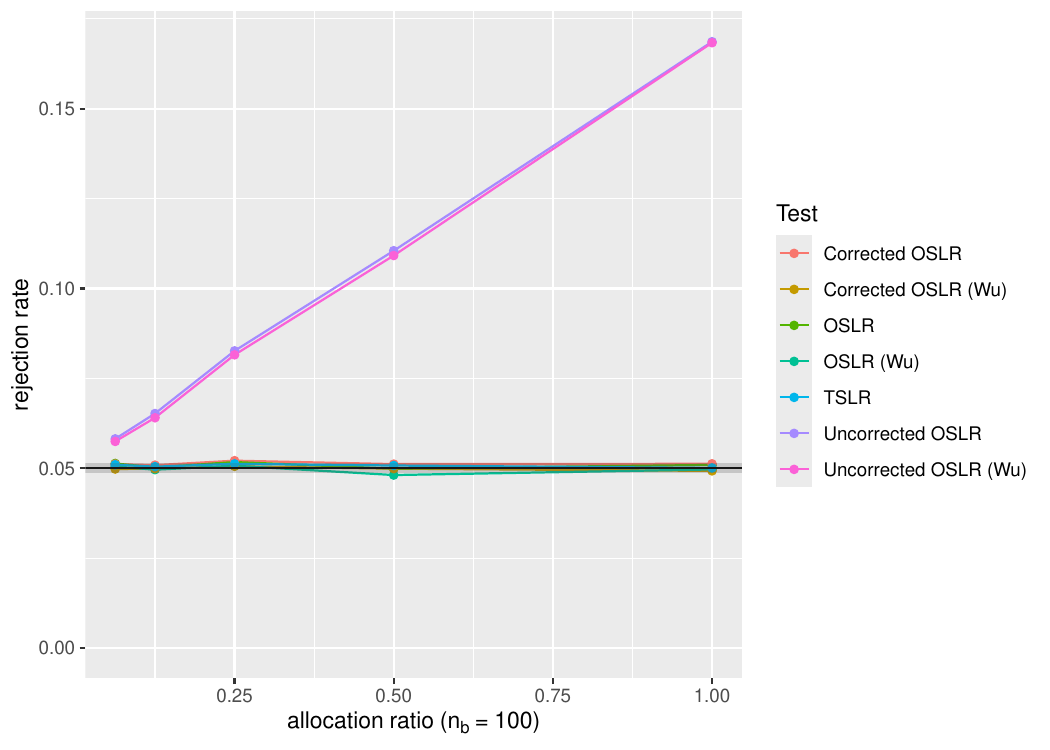"} &   \includegraphics[width=.44\textwidth]{"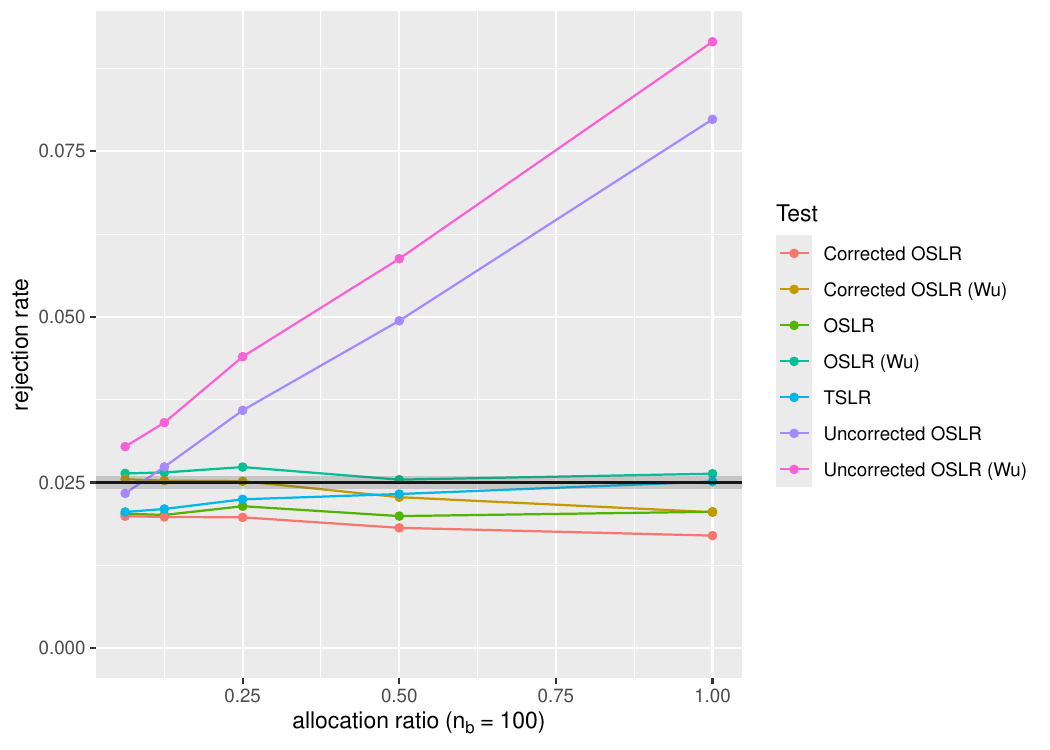"} \\
		\hline
		200 &\includegraphics[width=.44\textwidth]{"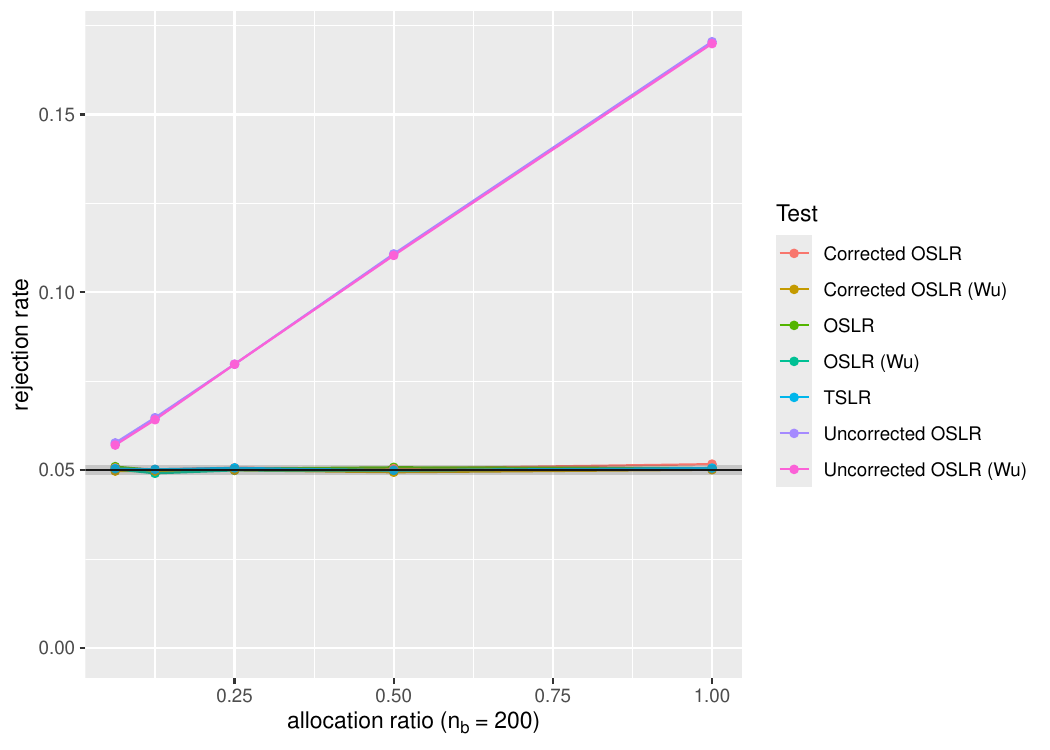"} &   \includegraphics[width=.44\textwidth]{"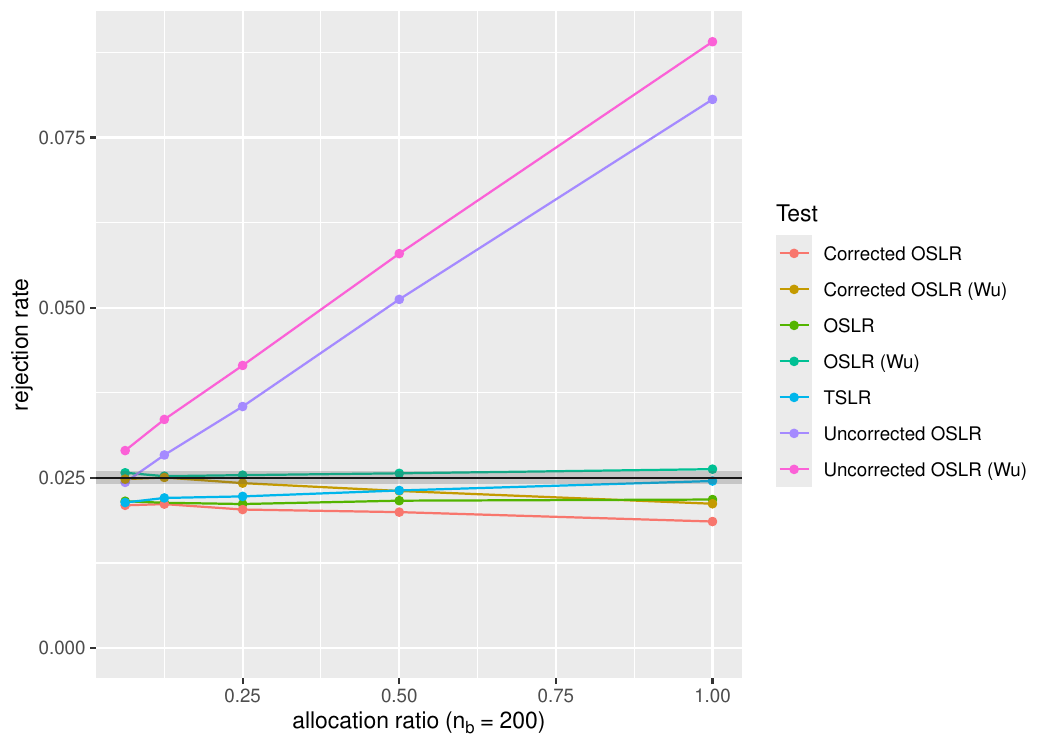"}
	\end{tabular}
	\caption{Two- and one-sided empirical rejection rates of the null hypothesis $H_0$ with Weibull-distributed data with shape parameter $\kappa = 0.5$ for four different sample sizes in the experimental cohort in dependence of the allocation ratio.}
\end{figure}

\begin{figure}[h!]
	\centering
	\begin{tabular}{c|c|c}
		$n_B$ & two-sided rates & one-sided rates \\
		\hline
		25 &\includegraphics[width=.44\textwidth]{"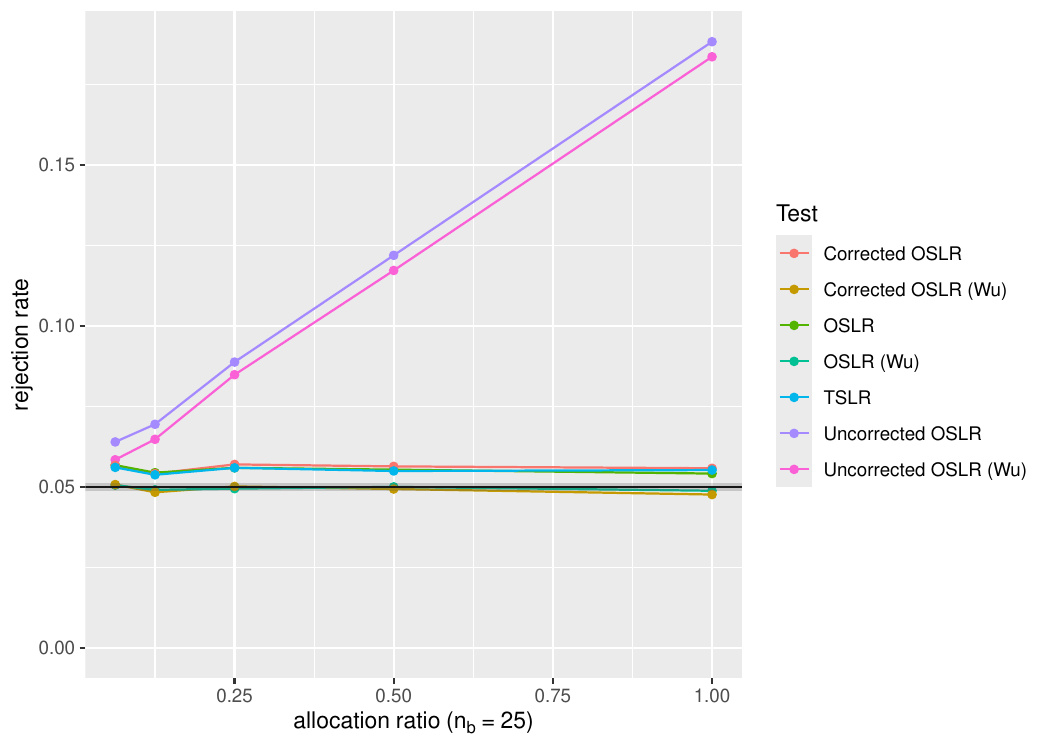"} &   \includegraphics[width=.44\textwidth]{"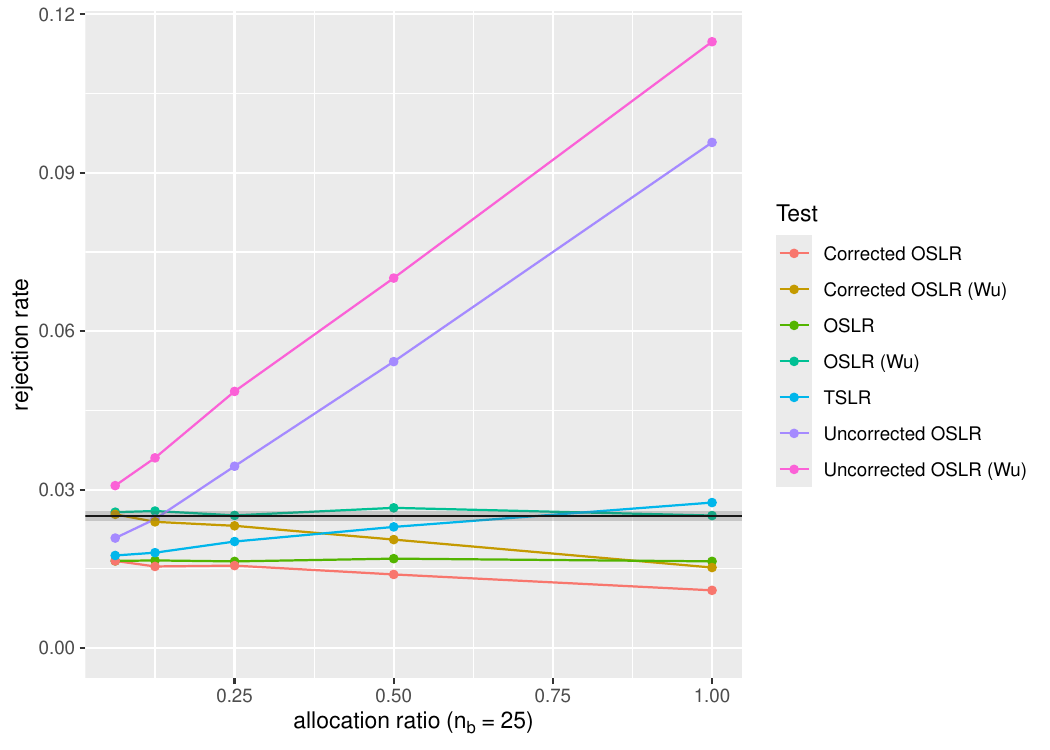"} \\
		\hline
		50 &\includegraphics[width=.44\textwidth]{"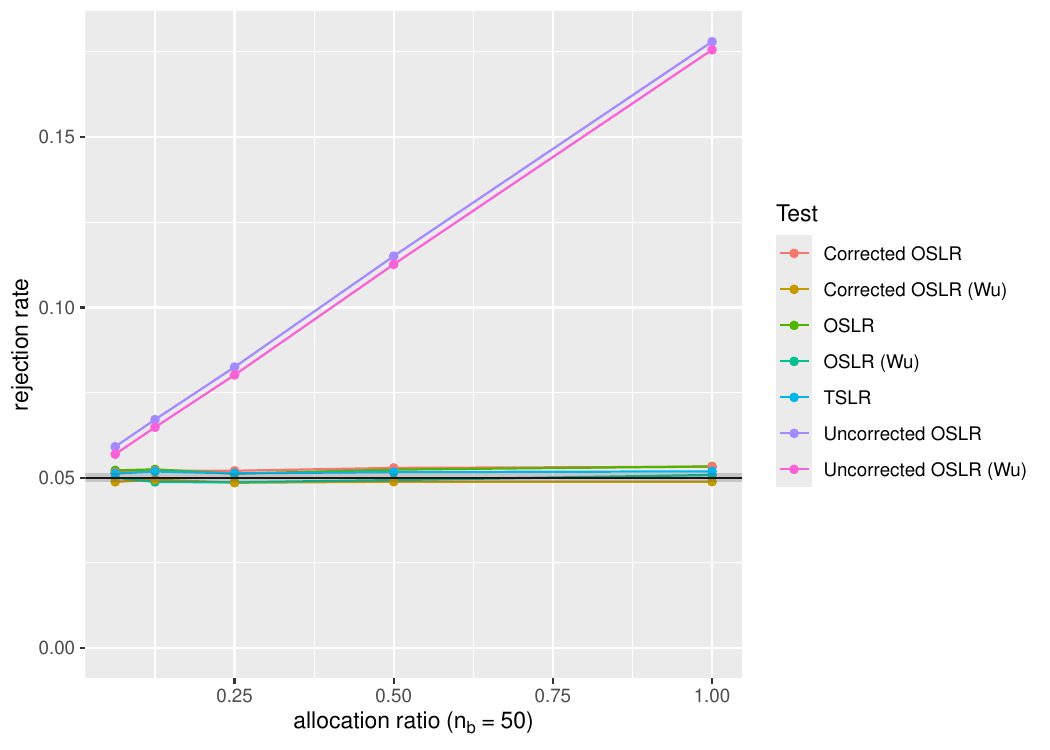"} &   \includegraphics[width=.44\textwidth]{"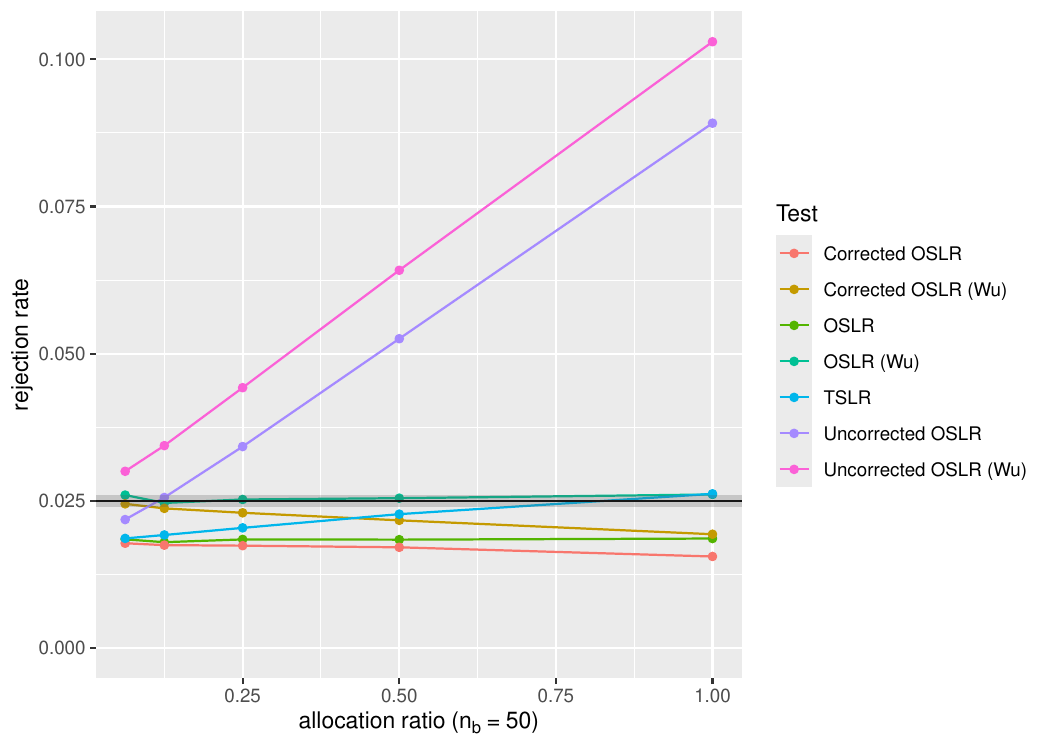"} \\
		\hline
		100 &\includegraphics[width=.44\textwidth]{"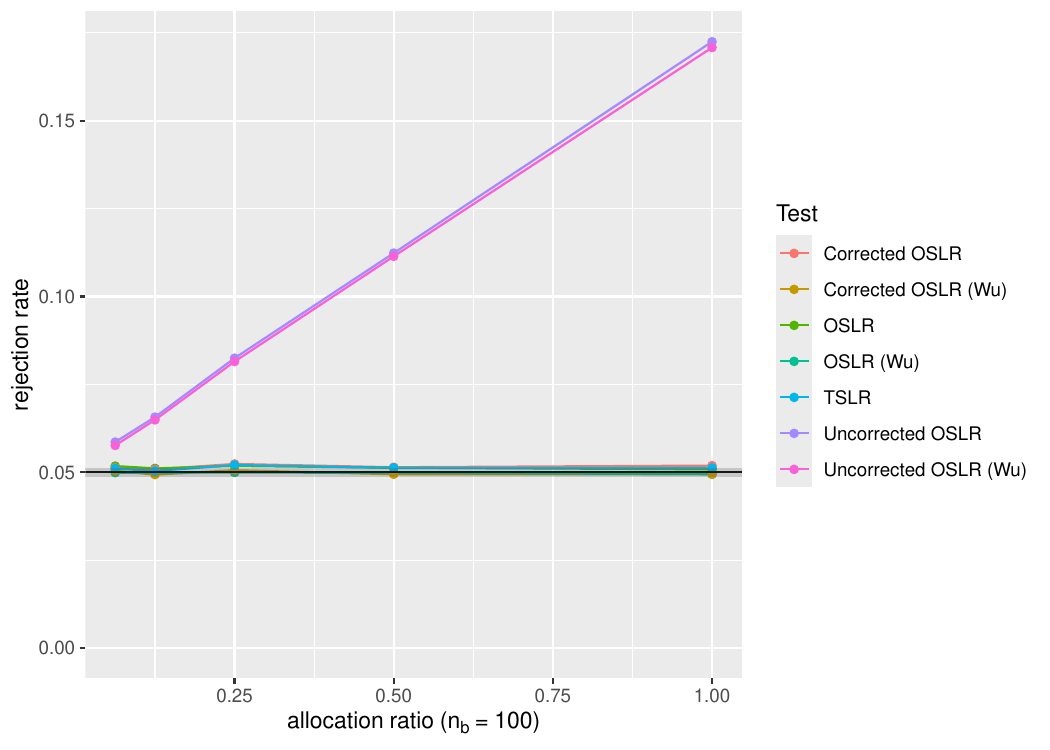"} &   \includegraphics[width=.44\textwidth]{"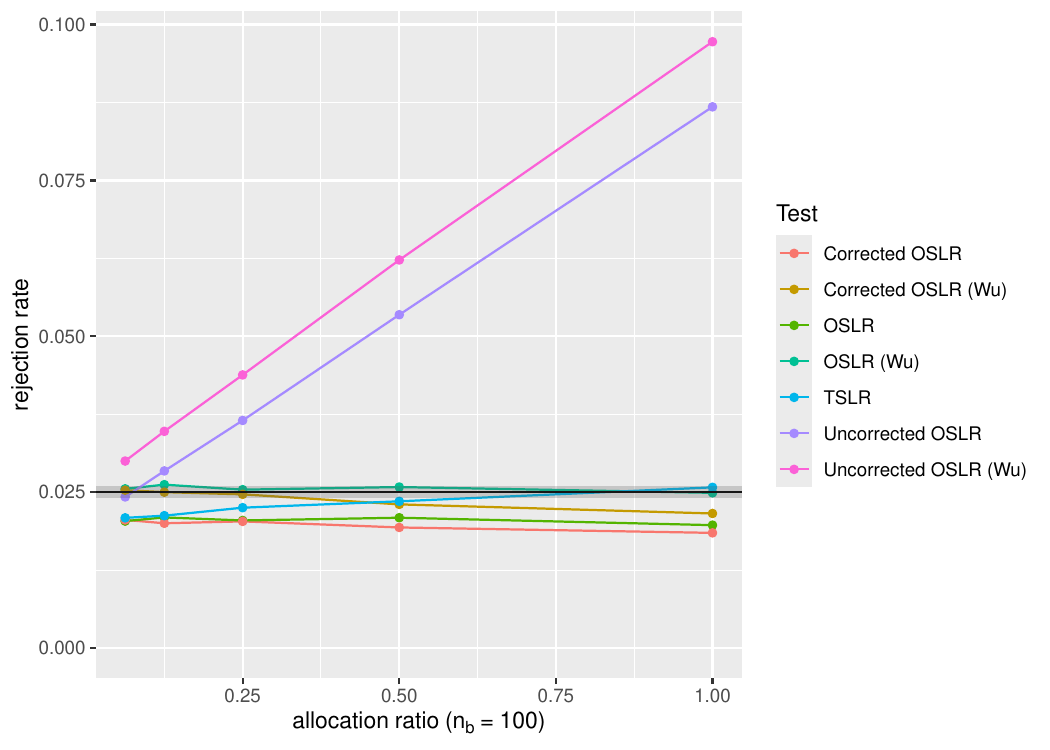"} \\
		\hline
		200 &\includegraphics[width=.44\textwidth]{"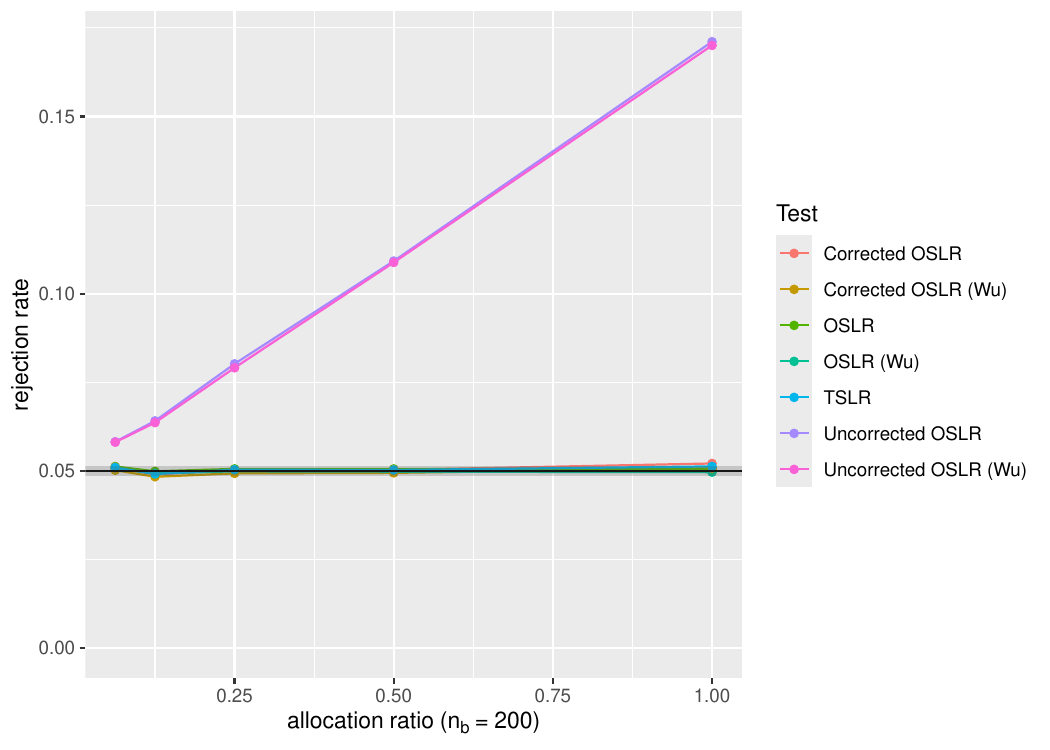"} &   \includegraphics[width=.44\textwidth]{"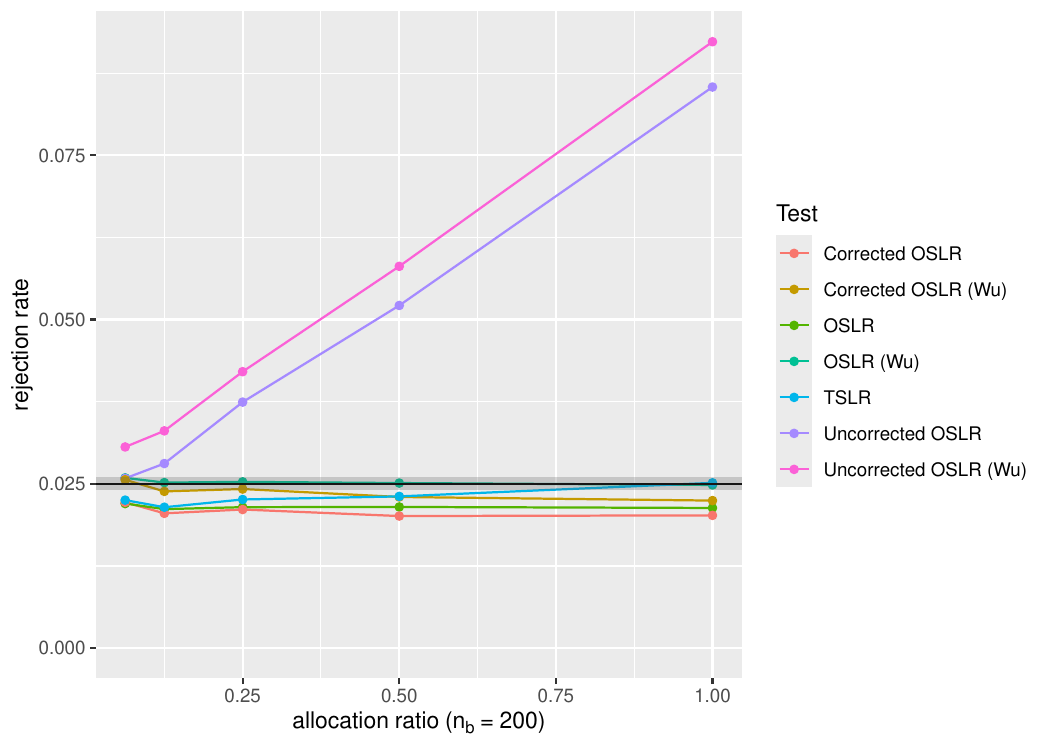"}
	\end{tabular}
	\caption{Two- and one-sided empirical rejection rates of the null hypothesis $H_0$ with Weibull-distributed data with shape parameter $\kappa = 1$ for four different sample sizes in the experimental cohort in dependence of the allocation ratio.}
\end{figure}

\begin{figure}[h!]
	\centering
	\begin{tabular}{c|c|c}
		$n_B$ & two-sided rates & one-sided rates \\
		\hline
		25 &\includegraphics[width=.44\textwidth]{"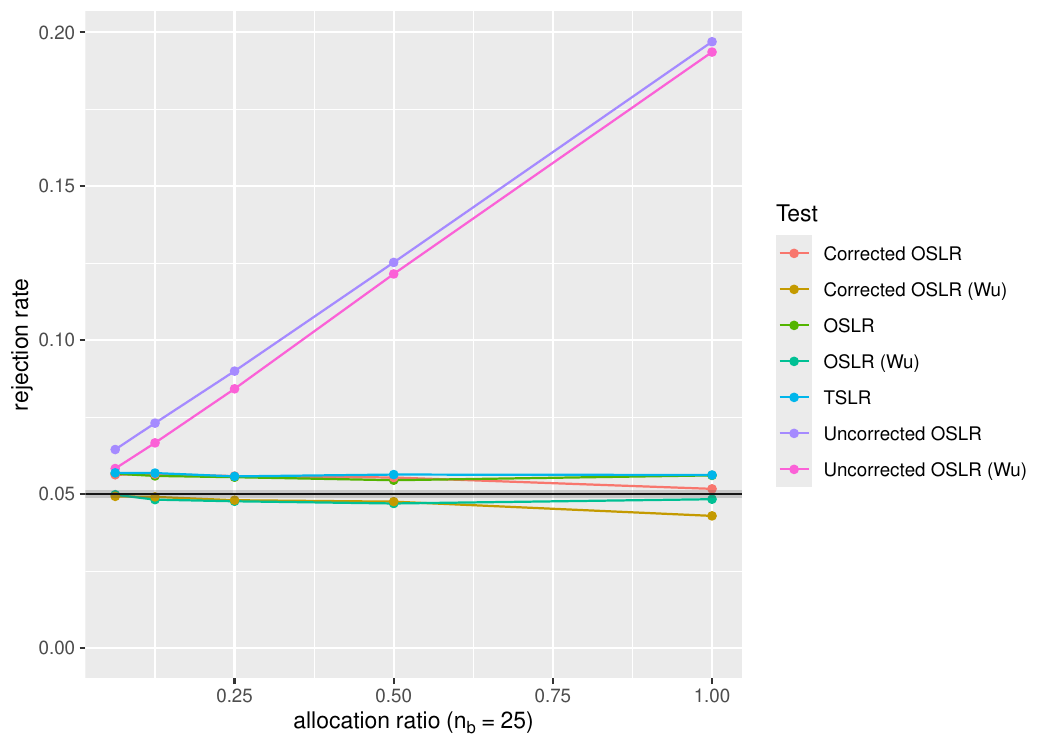"} &   \includegraphics[width=.44\textwidth]{"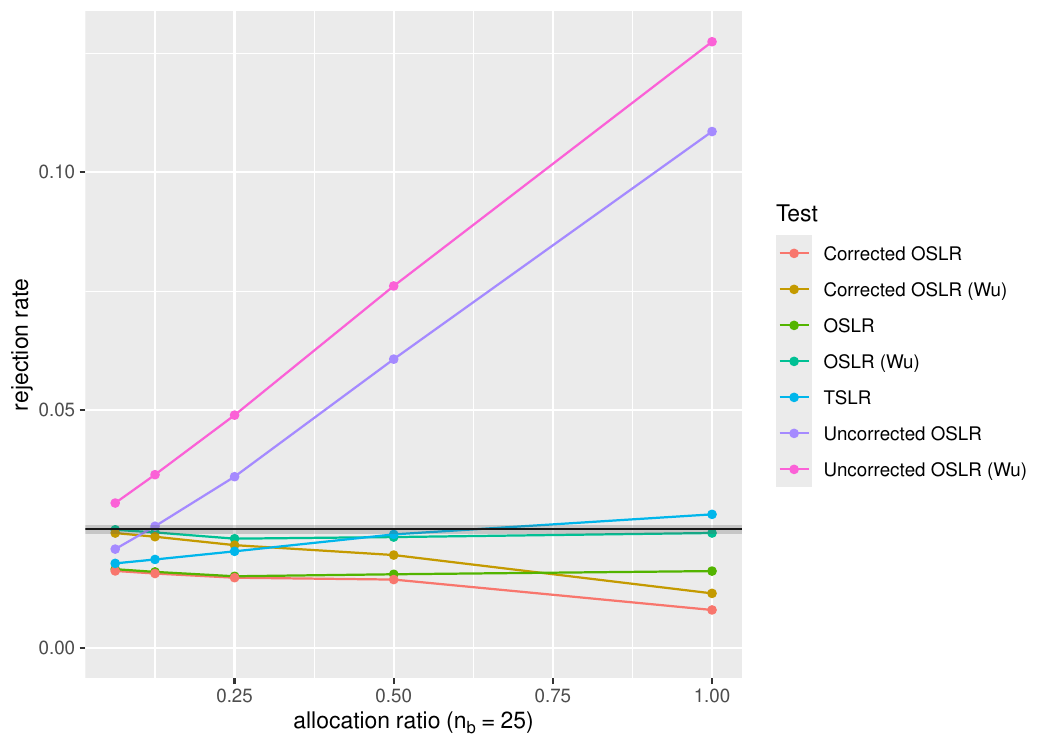"} \\
		\hline
		50 &\includegraphics[width=.44\textwidth]{"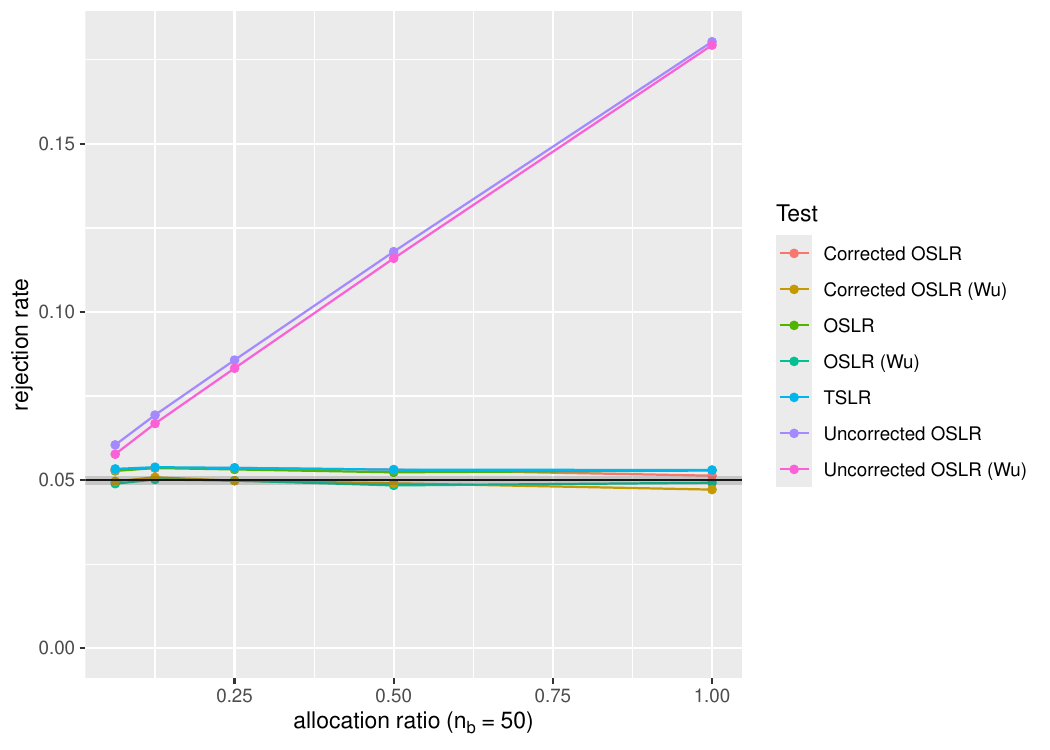"} &   \includegraphics[width=.44\textwidth]{"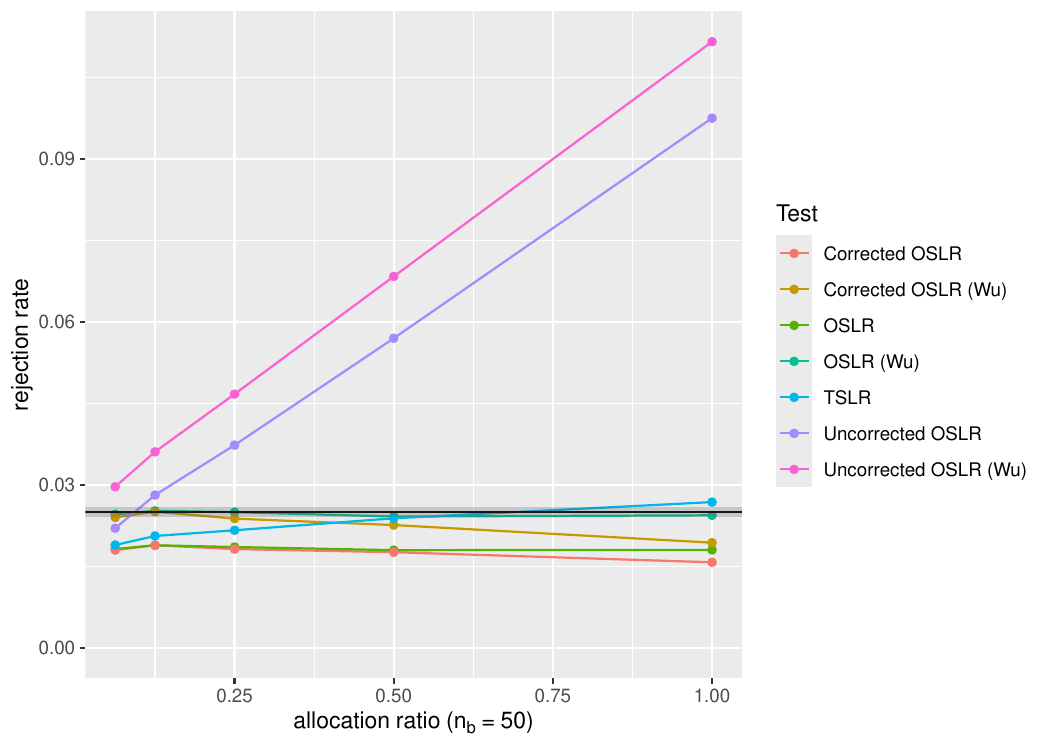"} \\
		\hline
		100 &\includegraphics[width=.44\textwidth]{"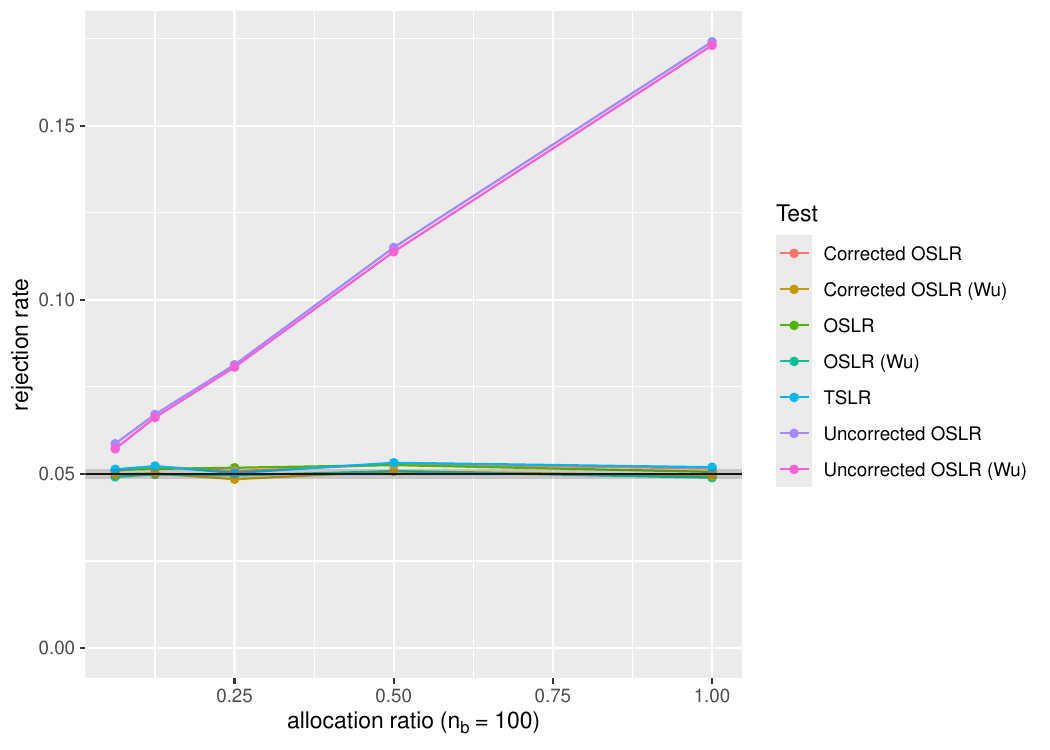"} &   \includegraphics[width=.44\textwidth]{"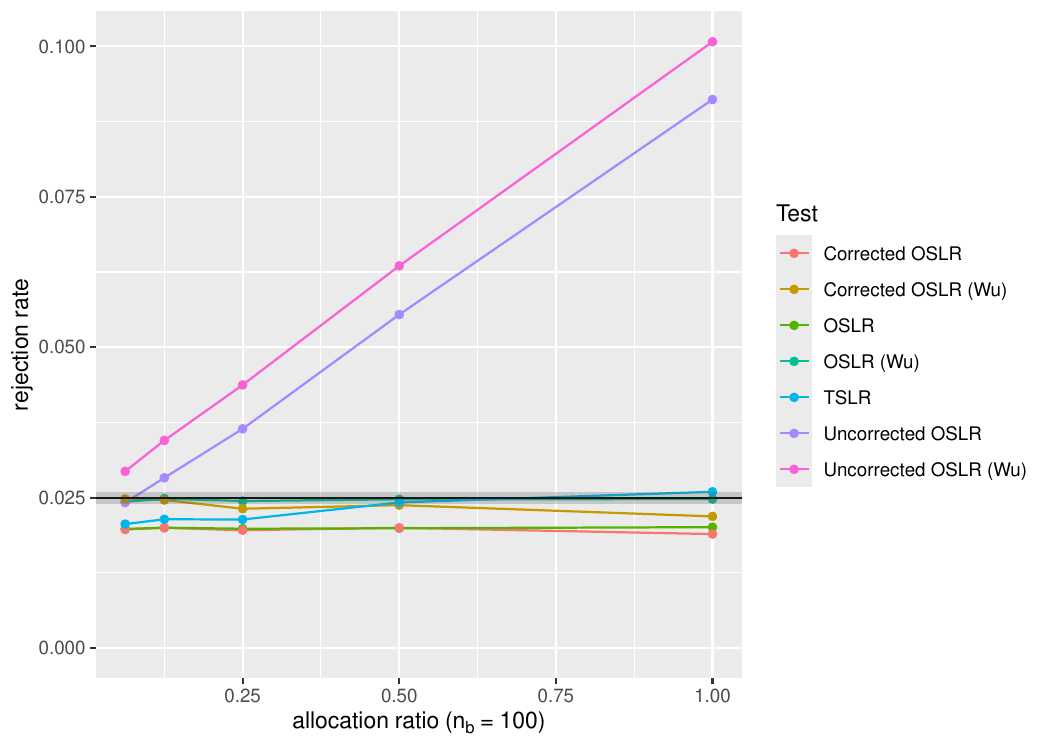"} \\
		\hline
		200 &\includegraphics[width=.44\textwidth]{"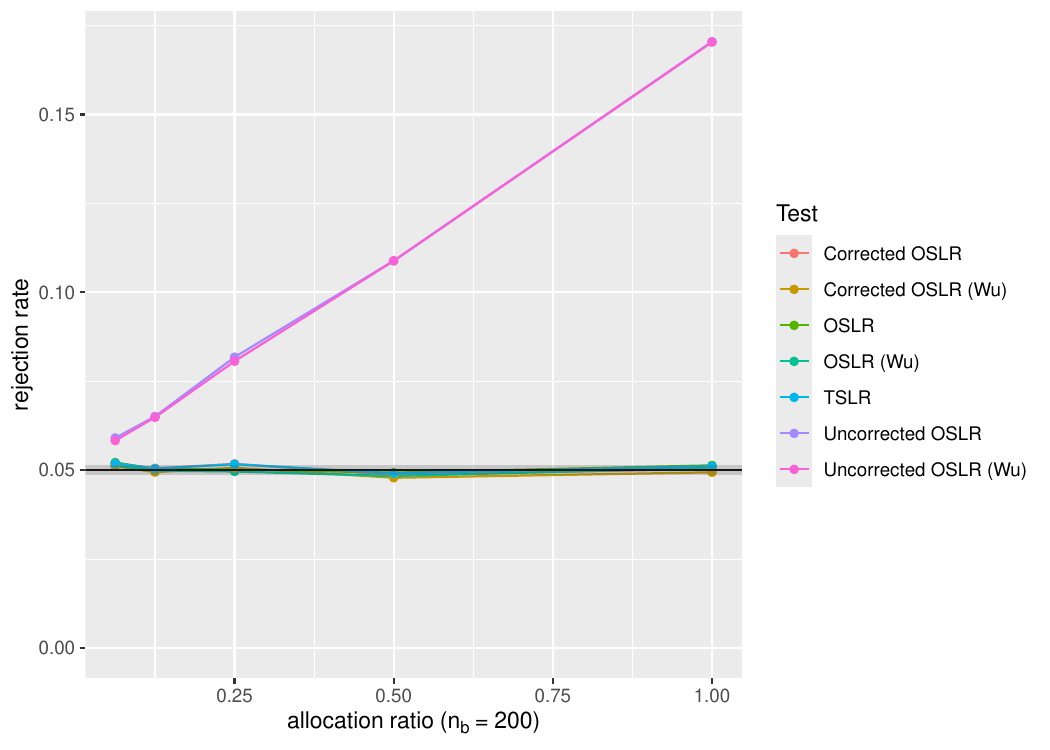"} &   \includegraphics[width=.44\textwidth]{"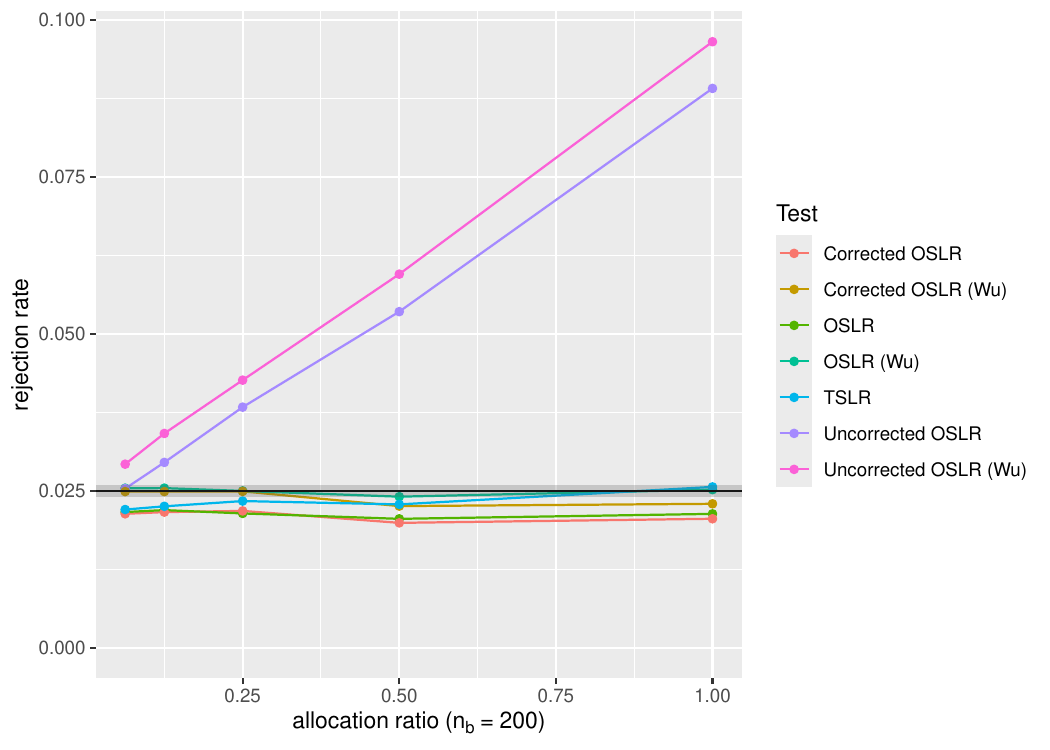"}
	\end{tabular}
	\caption{Two- and one-sided empirical rejection rates of the null hypothesis $H_0$ with Weibull-distributed data with shape parameter $\kappa = 2$ for four different sample sizes in the experimental cohort in dependence of the allocation ratio.}
\end{figure}

\clearpage

\subsection{Type I error rate (Exponential and Weibull MLEs)}

\begin{figure}[h!]
	\centering
	\begin{tabular}{c|c|c}
		$n_B$ & two-sided rates & one-sided rates \\
		\hline
		25 &\includegraphics[width=.44\textwidth]{"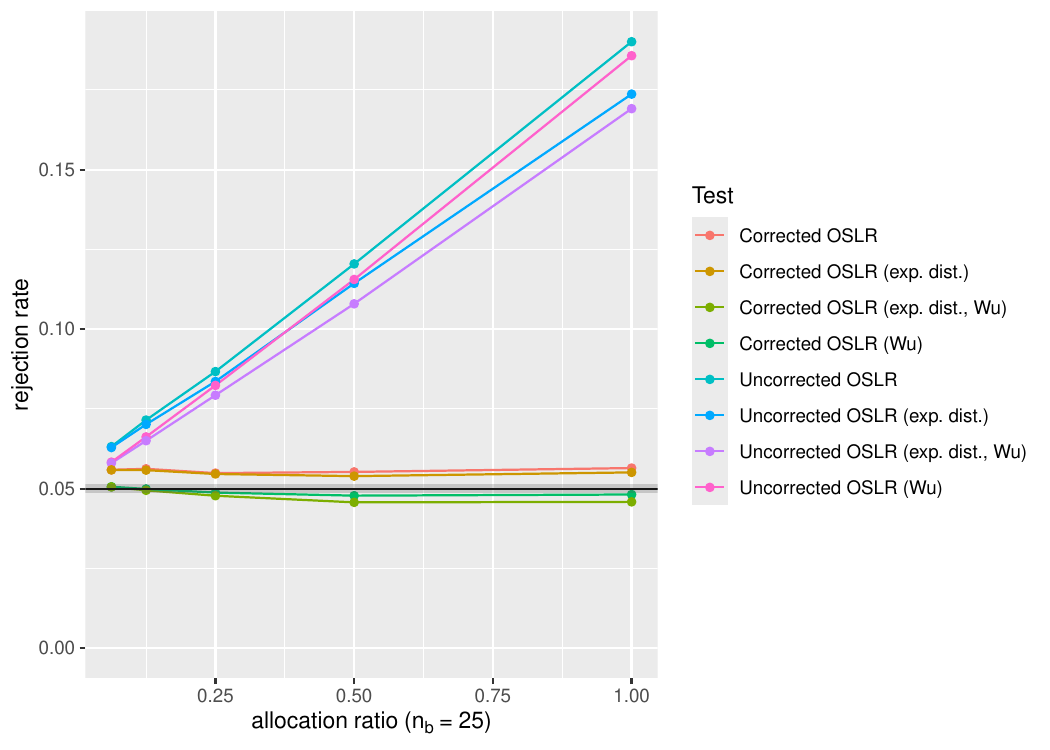"} &   \includegraphics[width=.44\textwidth]{"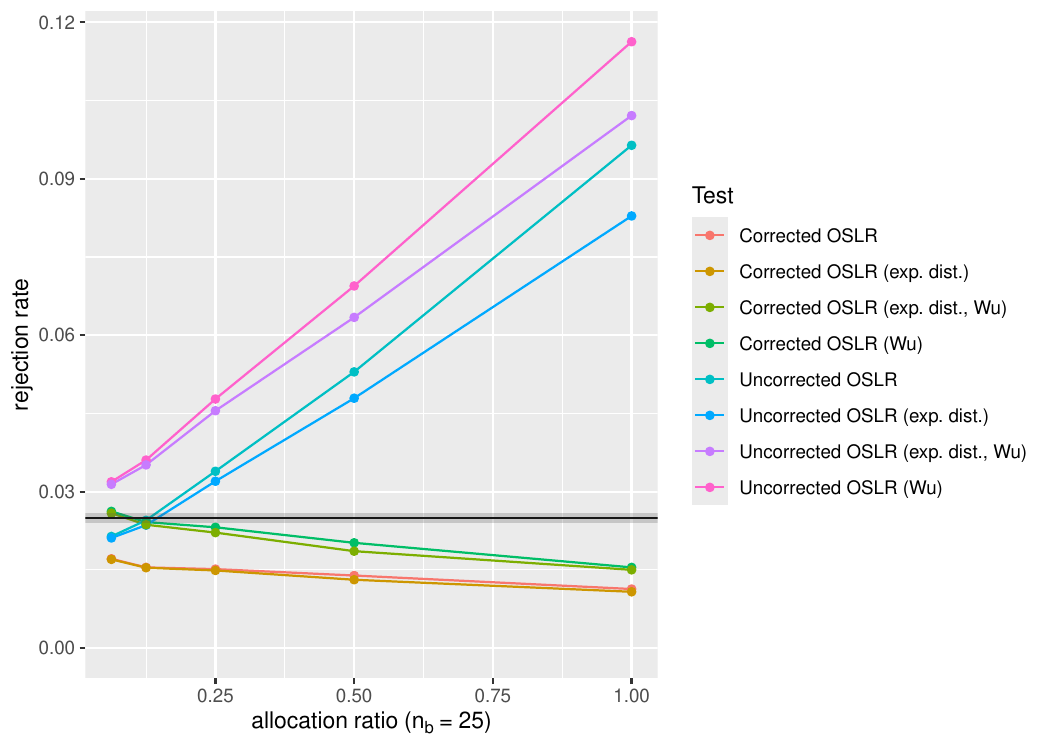"} \\
		\hline
		50 &\includegraphics[width=.44\textwidth]{"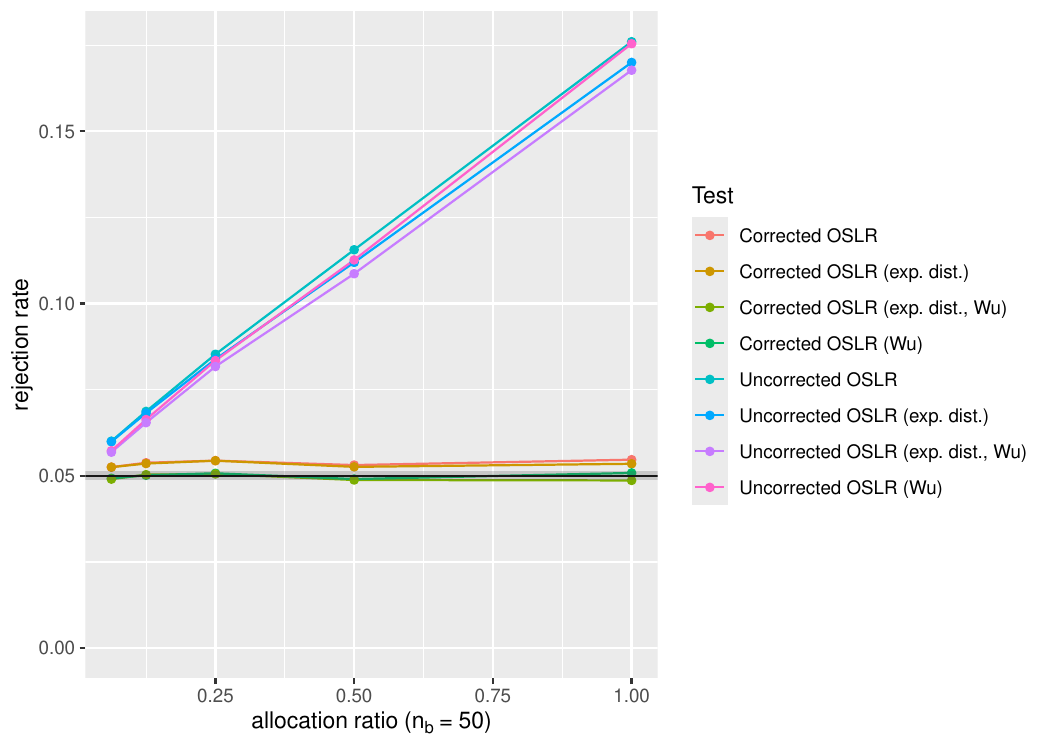"} &   \includegraphics[width=.44\textwidth]{"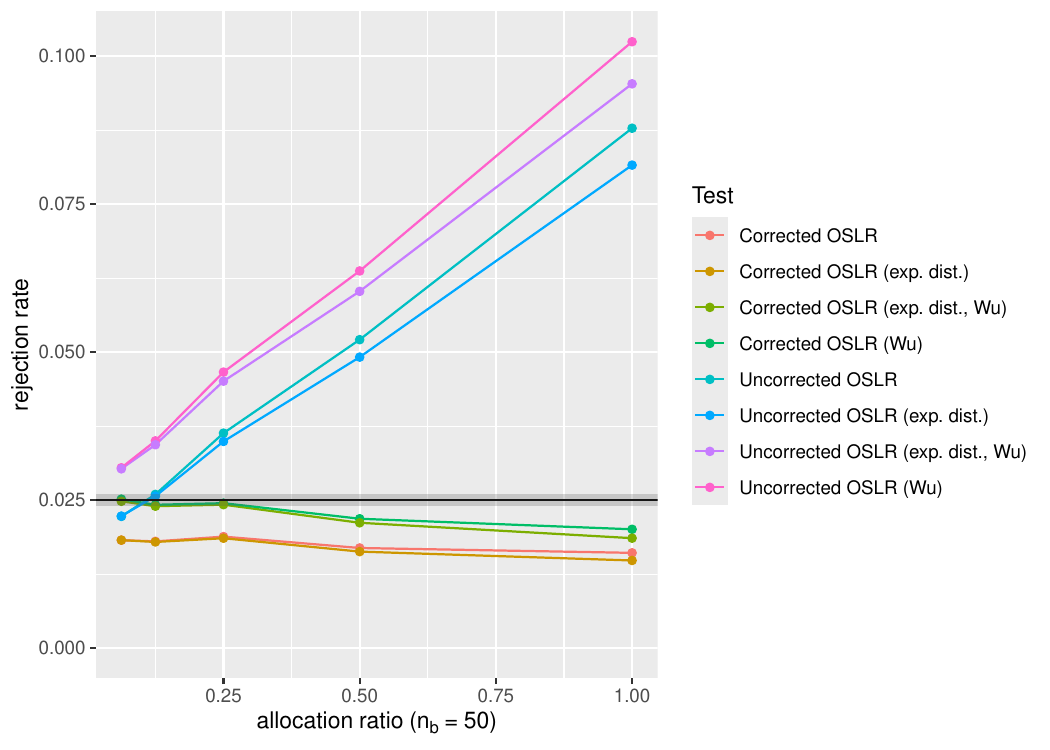"} \\
		\hline
		100 &\includegraphics[width=.44\textwidth]{"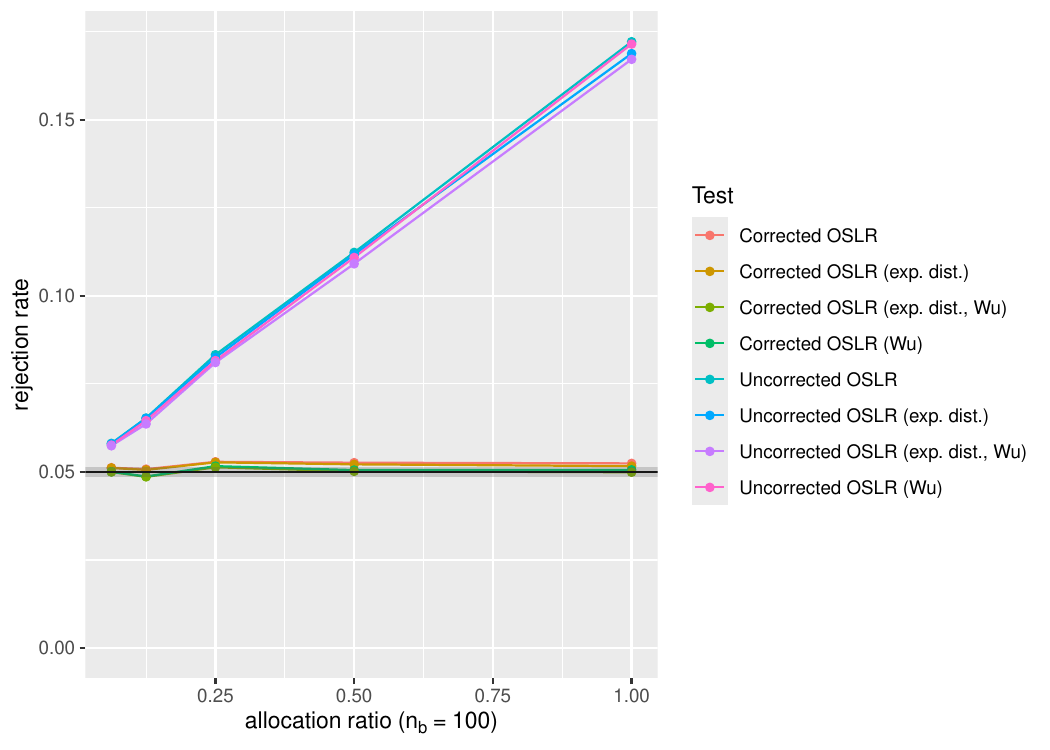"} &   \includegraphics[width=.44\textwidth]{"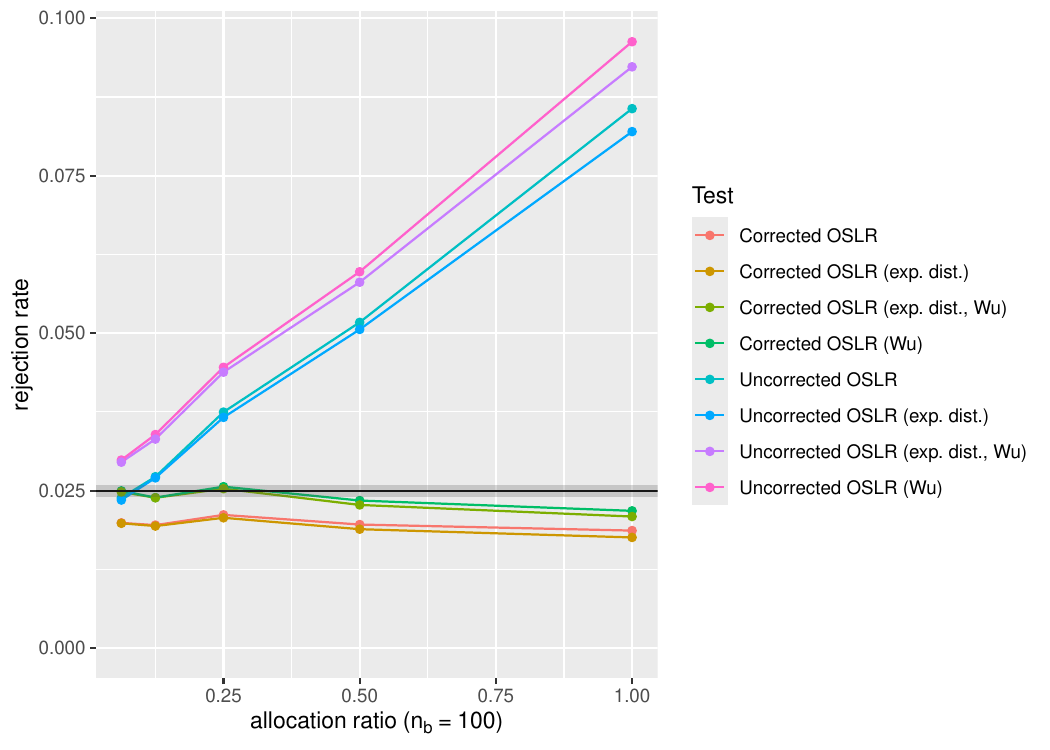"} \\
		\hline
		200 &\includegraphics[width=.44\textwidth]{"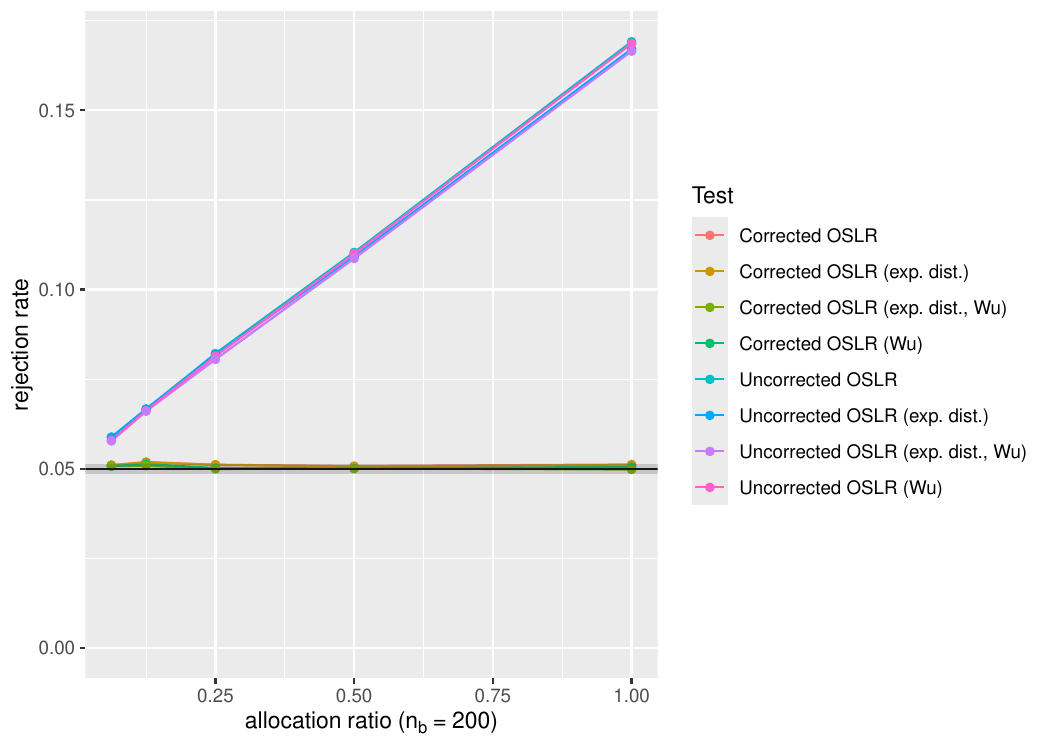"} &   \includegraphics[width=.44\textwidth]{"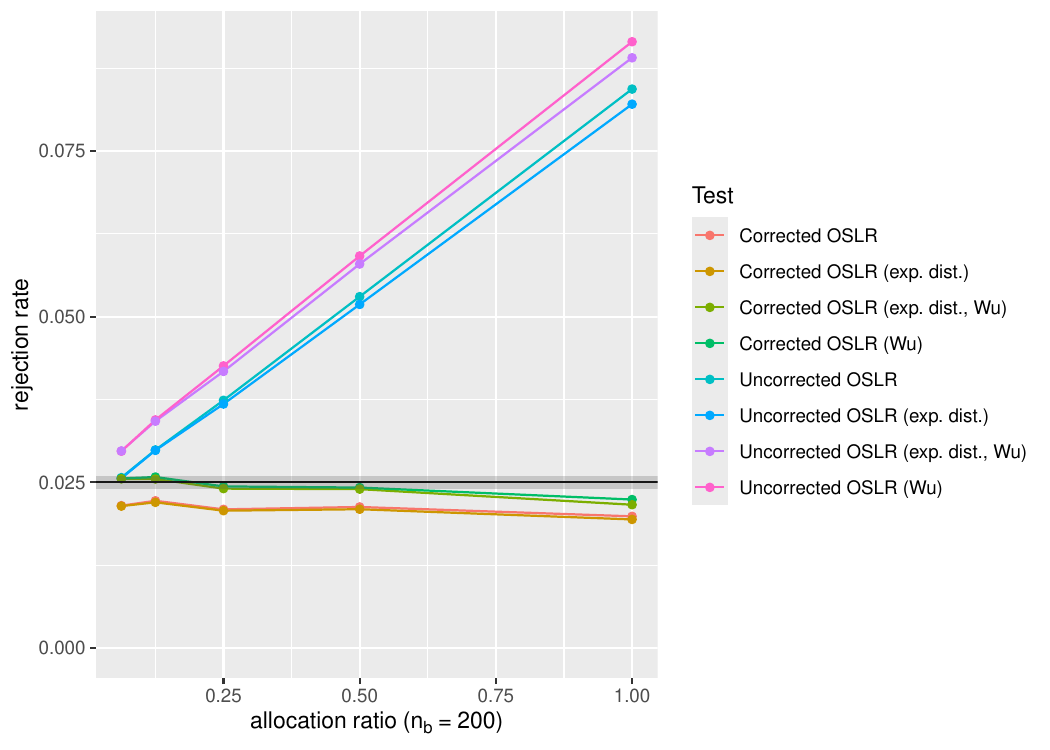"}
	\end{tabular}
	\caption{Two- and one-sided empirical rejection rates of the null hypothesis $H_0$ with exponentially distributed data for procedures based on exponential and Weibull MLEs for four different sample sizes in the experimental cohort in dependence of the allocation ratio.}
\end{figure}

\clearpage

\subsection{Power}

\begin{figure}[h!]
	\centering
	\begin{tabular}{c|c}
		$n_B$ & power \\
		\hline
		25 &\includegraphics[width=.44\textwidth]{"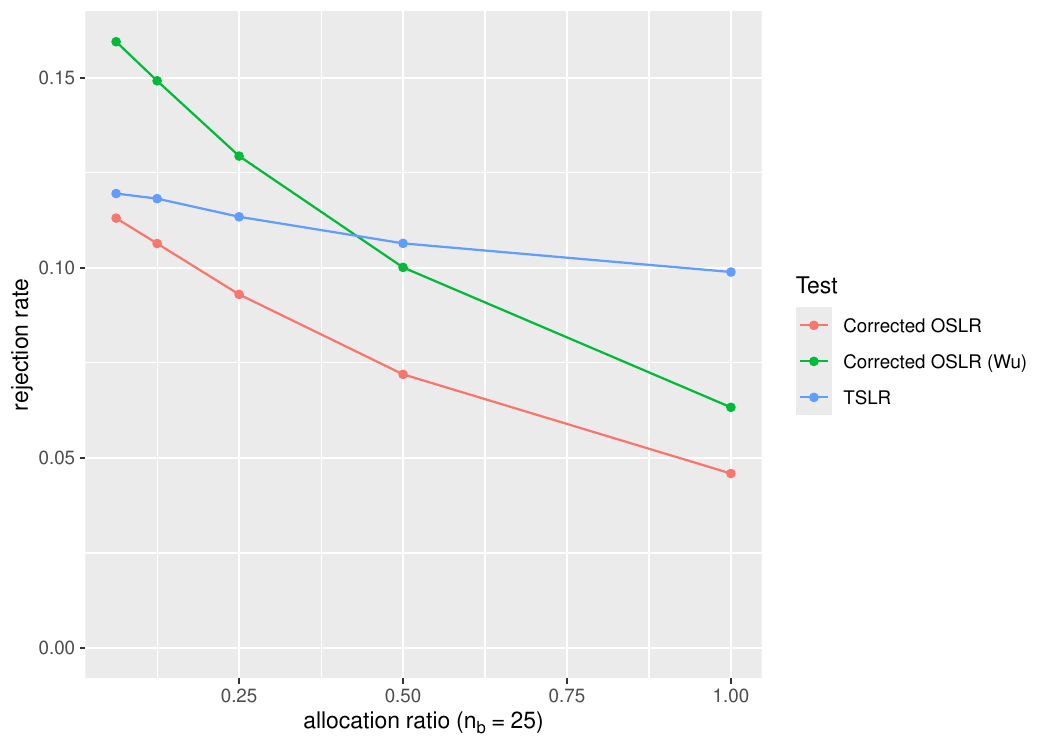"} \\
		\hline
		50 &\includegraphics[width=.44\textwidth]{"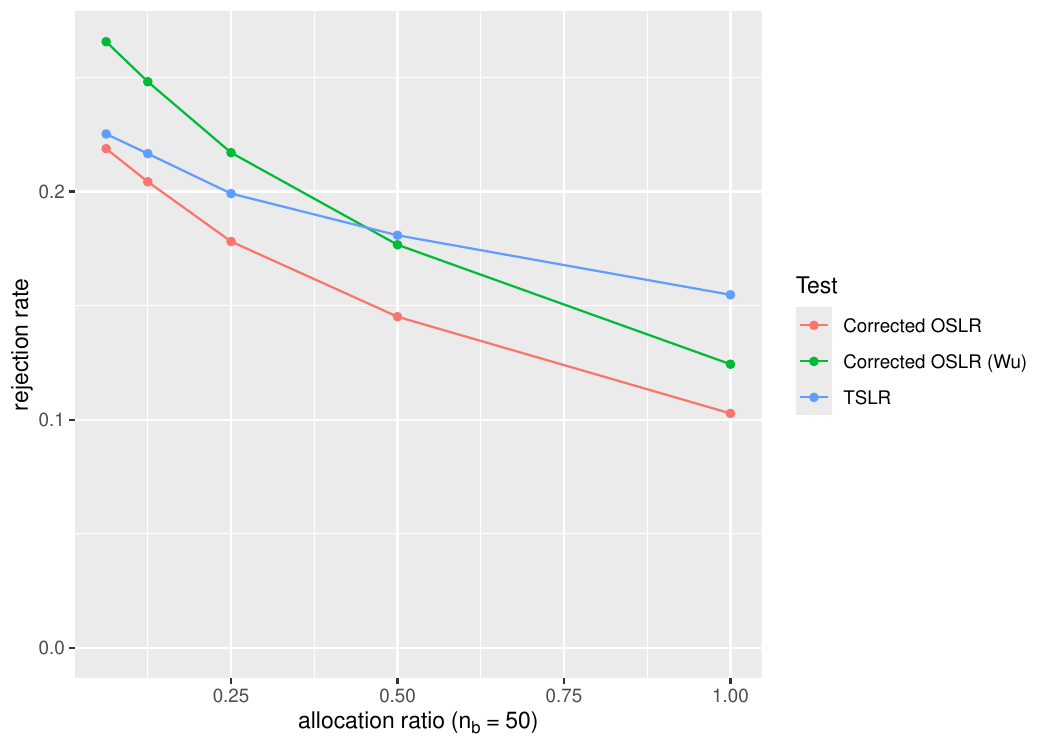"} \\
		\hline
		100 &\includegraphics[width=.44\textwidth]{"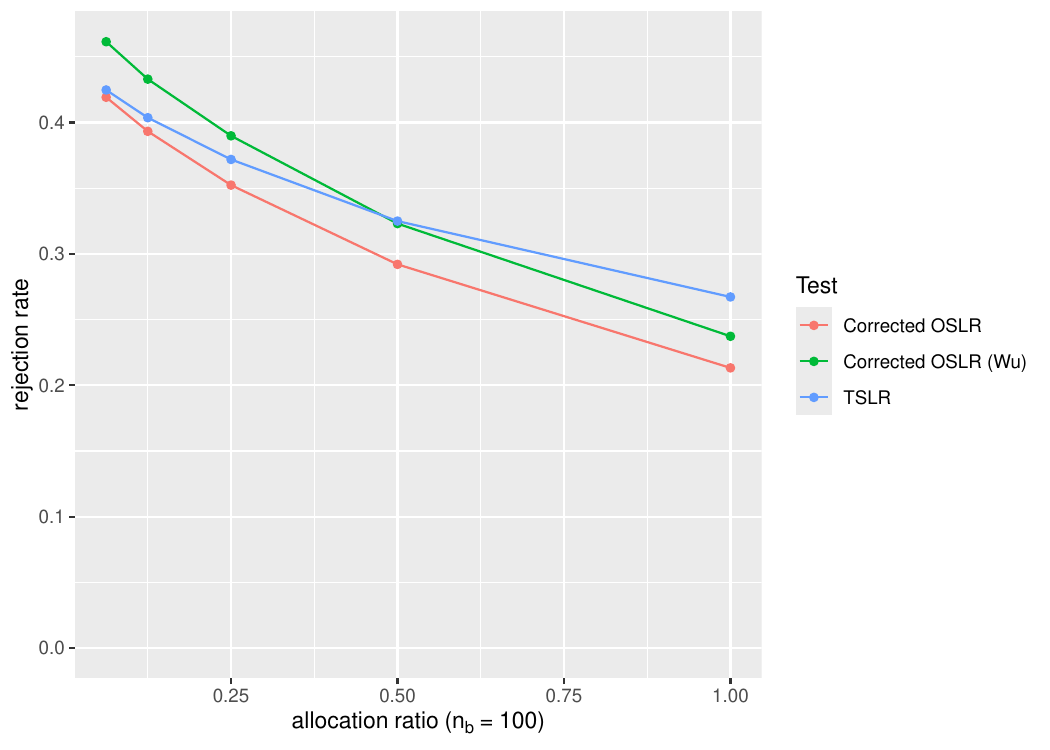"} \\
		\hline
		200 &\includegraphics[width=.44\textwidth]{"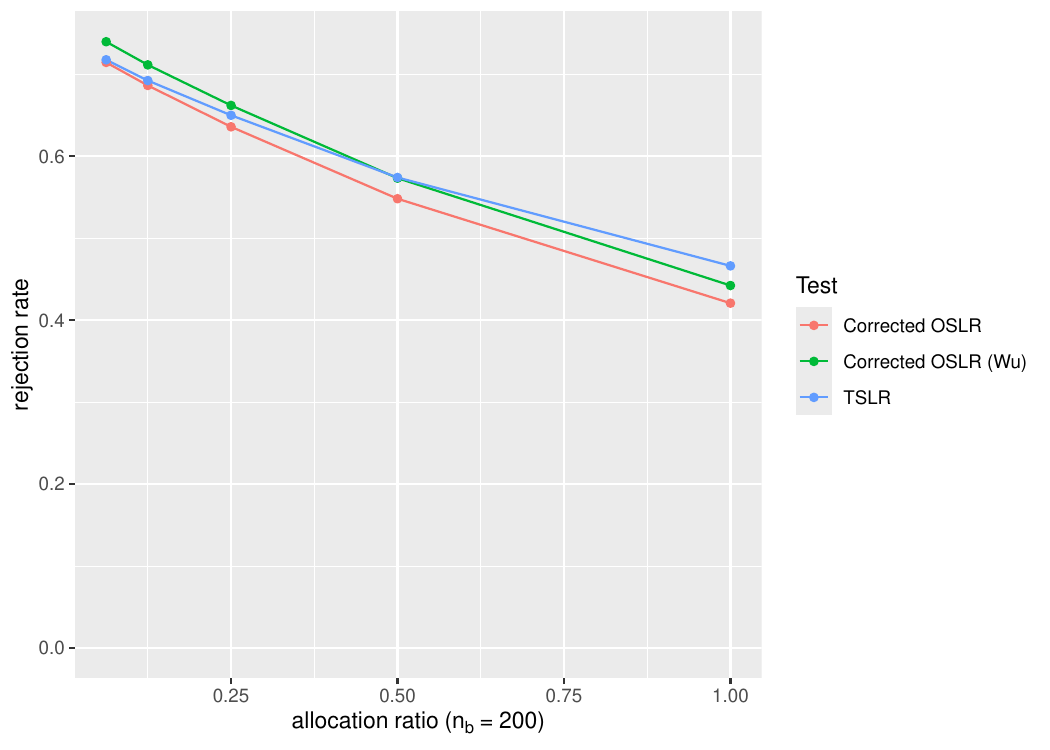"} 
	\end{tabular}
	\caption{Power to reject the null hypothesis $H_0$ for Weibull-distributed data with shape parameter $\kappa = 0.5$ and hazard ratio $\omega= 0.8$ for four different sample sizes in the experimental cohort in dependence of the allocation ratio.}
\end{figure}

\begin{figure}[h!]
	\centering
	\begin{tabular}{c|c|c}
		$n_B$ & power (excl. procedures with exponential MLEs) & power (incl. procedures with exponential MLEs) \\
		\hline
		25 &\includegraphics[width=.44\textwidth]{"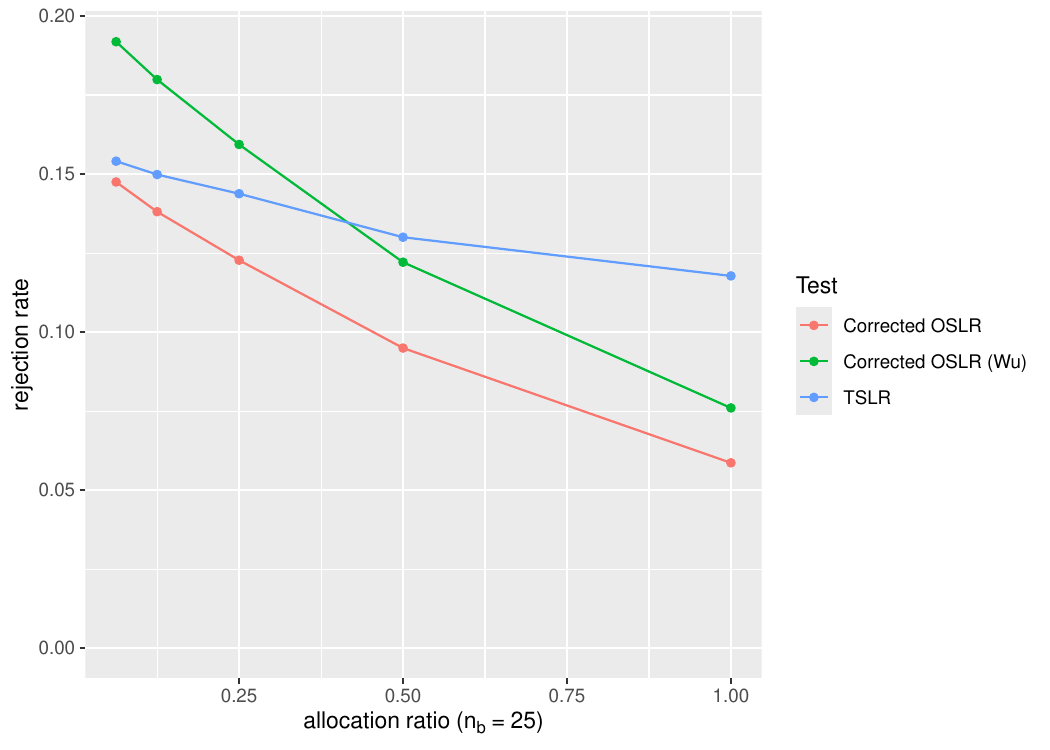"} &\includegraphics[width=.44\textwidth]{"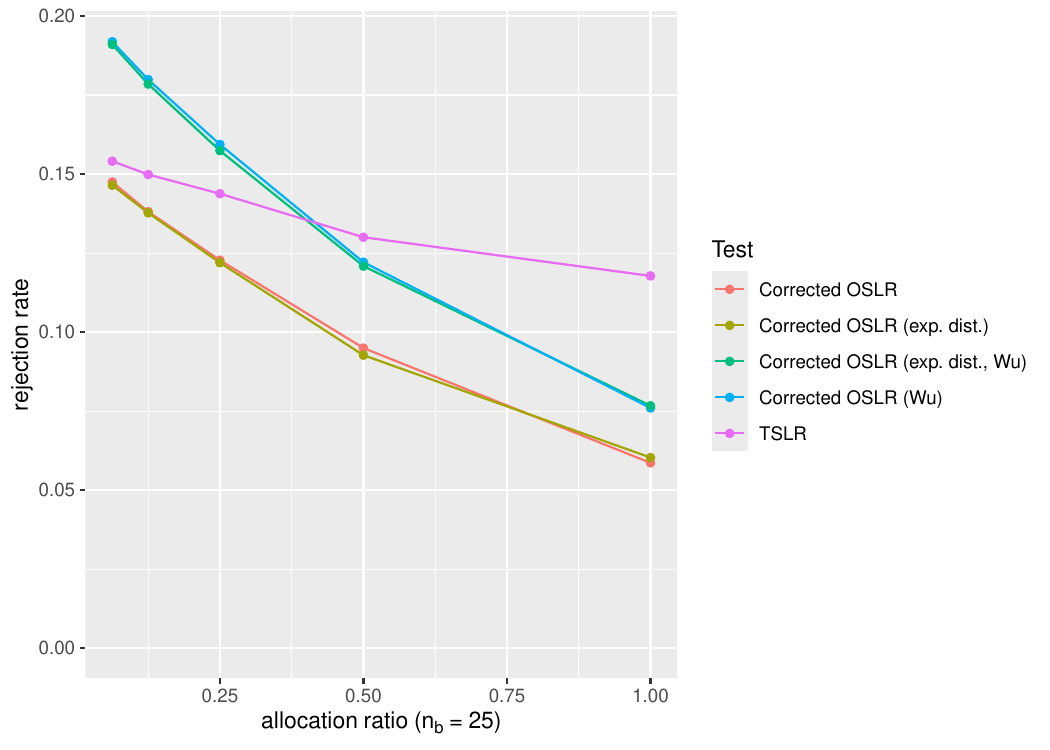"} \\
		\hline
		50 &\includegraphics[width=.44\textwidth]{"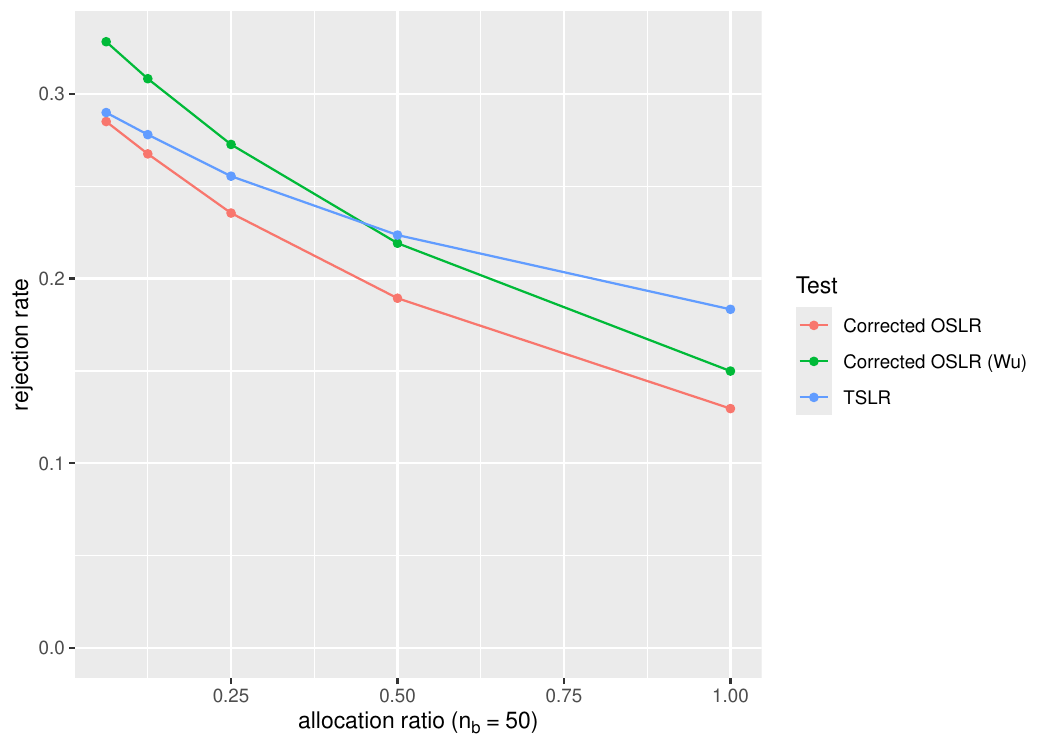"} &\includegraphics[width=.44\textwidth]{"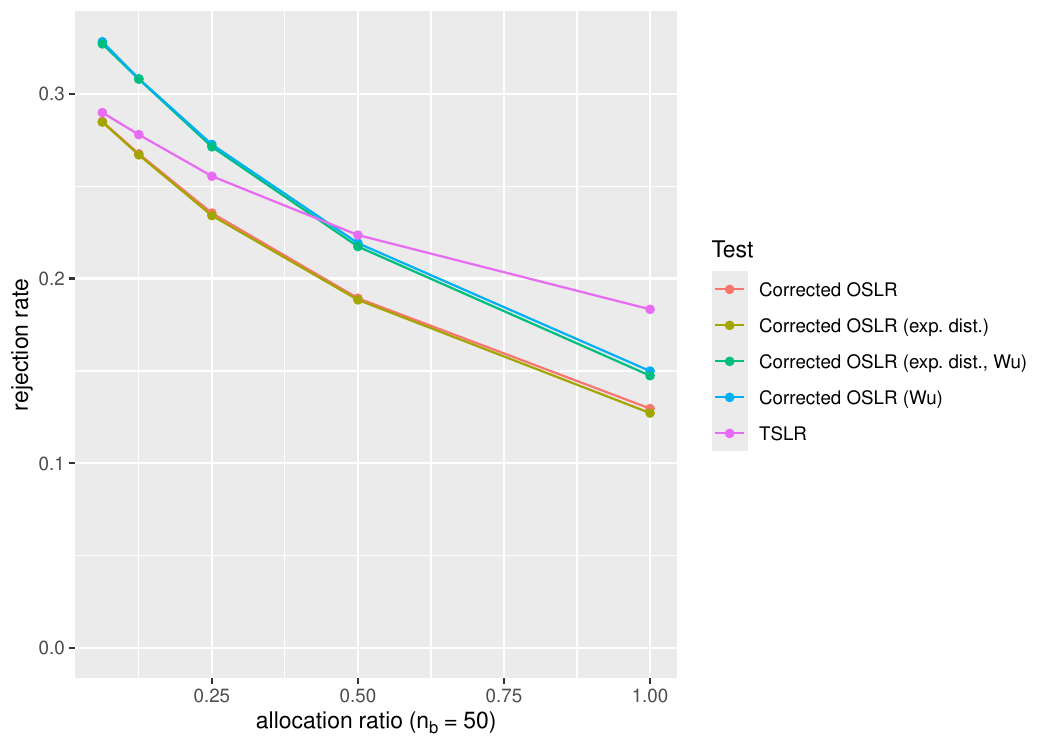"} \\
		\hline
		100 &\includegraphics[width=.44\textwidth]{"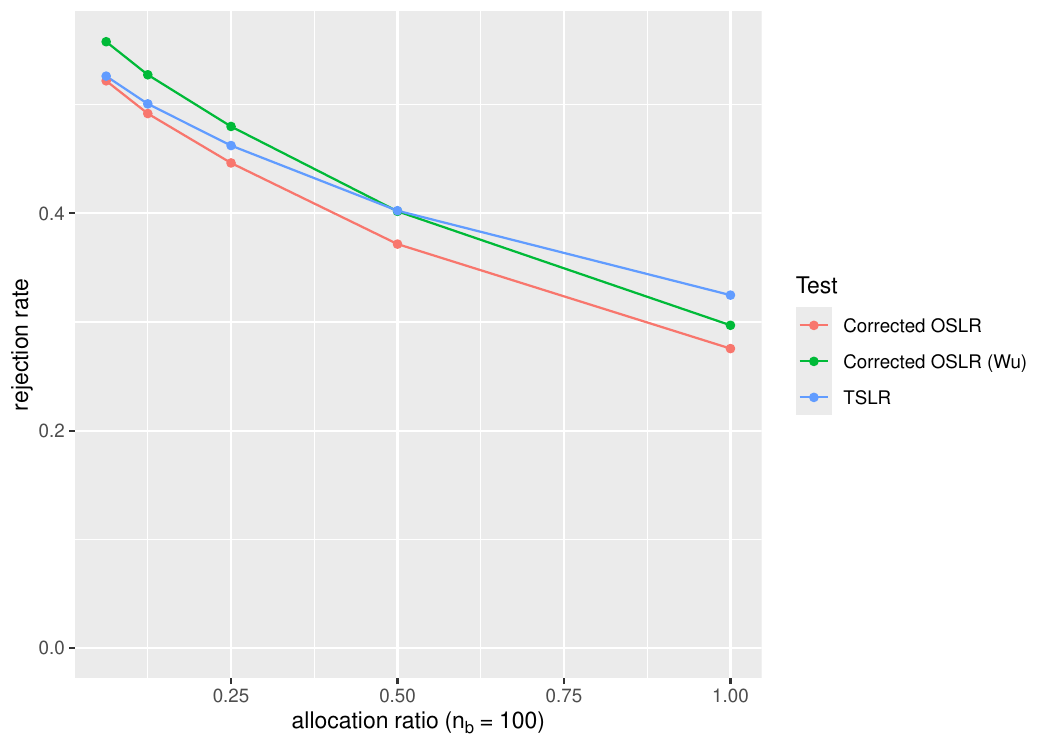"} &\includegraphics[width=.44\textwidth]{"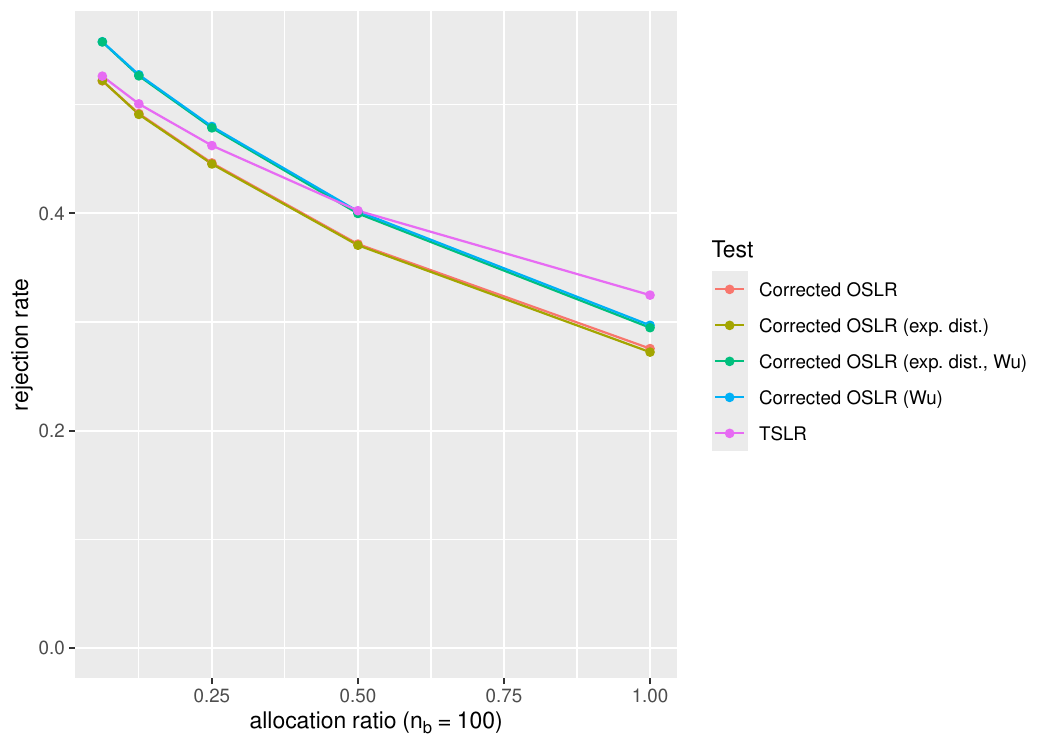"} \\
		\hline
		200 &\includegraphics[width=.44\textwidth]{"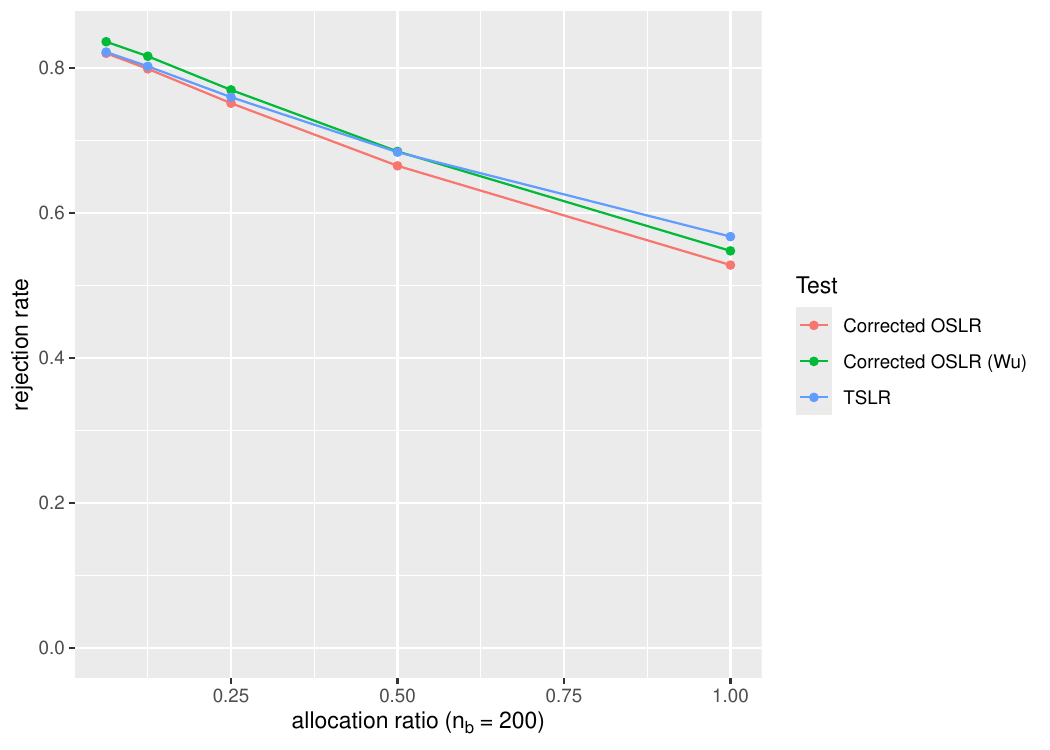"} &\includegraphics[width=.44\textwidth]{"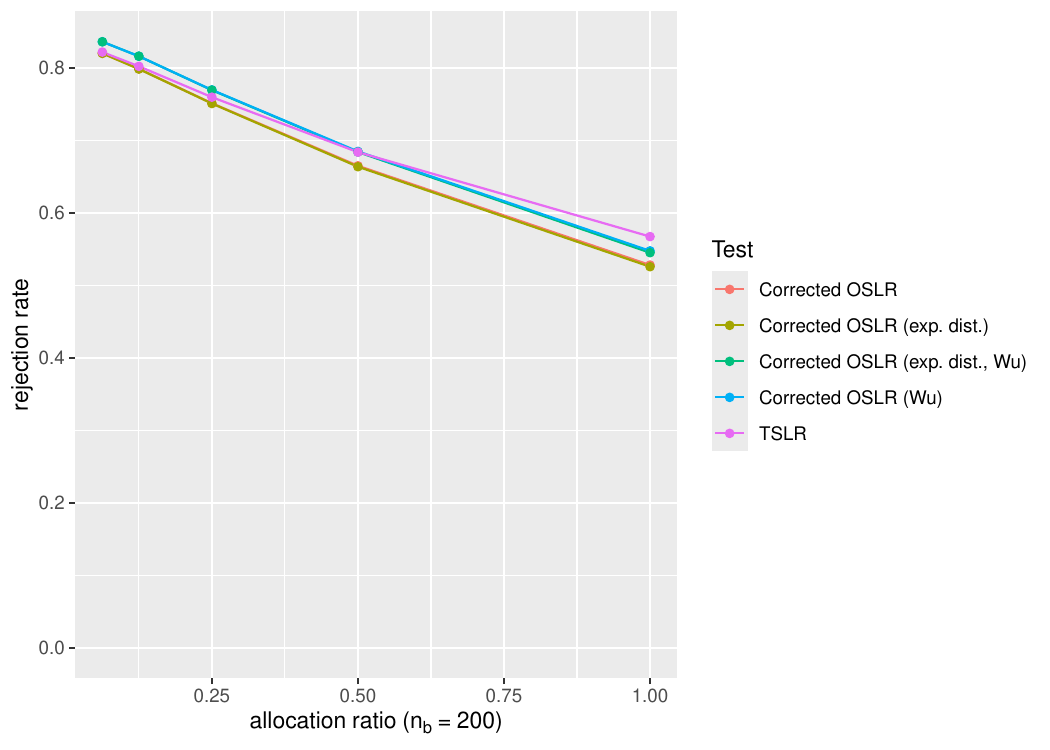"} 
	\end{tabular}
	\caption{Power to reject the null hypothesis $H_0$ for Weibull-distributed data with shape parameter $\kappa = 1$ and hazard ratio $\omega=0.8$ for four different sample sizes in the experimental cohort in dependence of the allocation ratio. The right column also includes procedures based on exponential MLEs}
\end{figure}

\begin{figure}[h!]
	\centering
	\begin{tabular}{c|c}
		$n_B$ & power \\
		\hline
		25 &\includegraphics[width=.44\textwidth]{"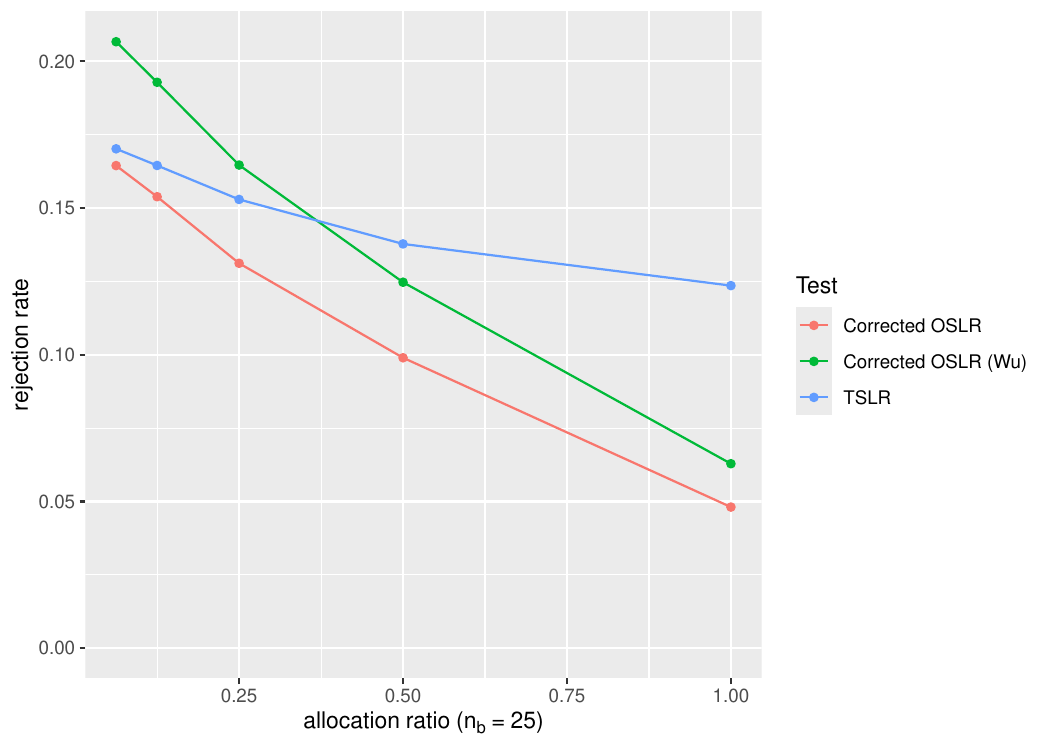"} \\
		\hline
		50 &\includegraphics[width=.44\textwidth]{"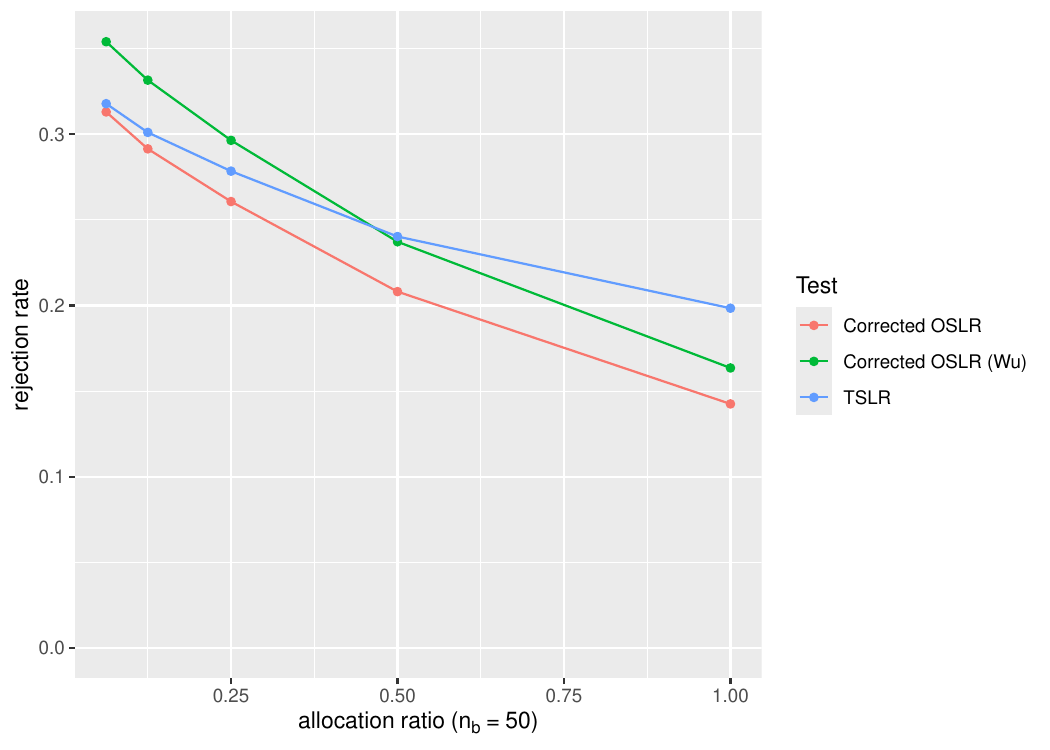"} \\
		\hline
		100 &\includegraphics[width=.44\textwidth]{"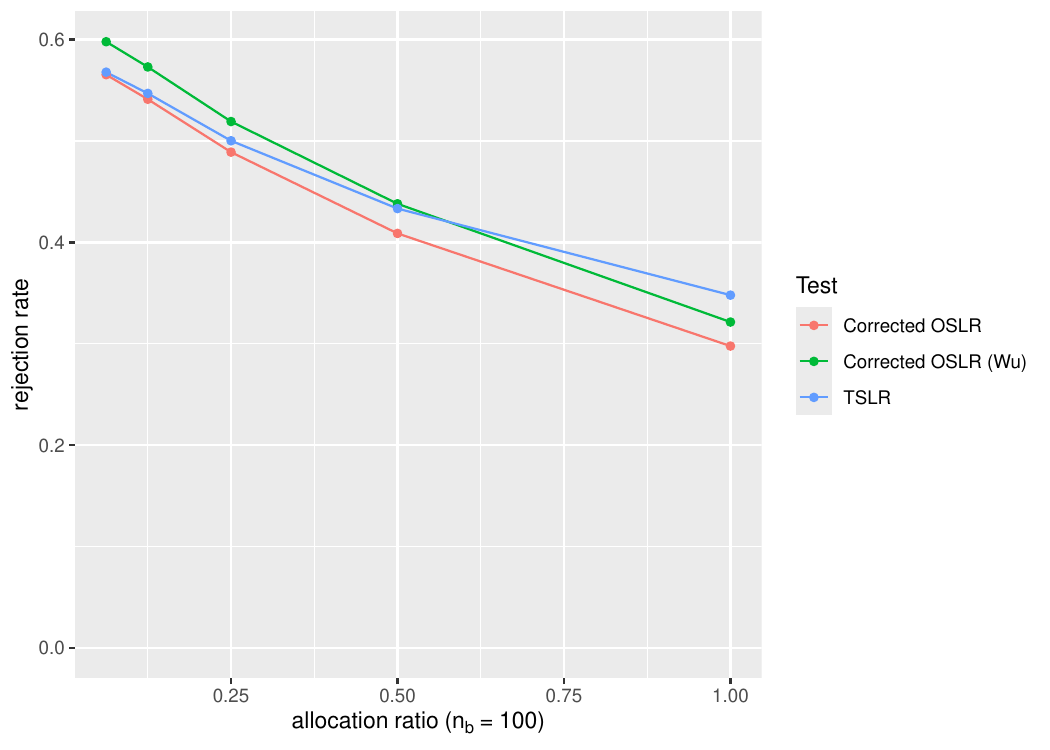"} \\
		\hline
		200 &\includegraphics[width=.44\textwidth]{"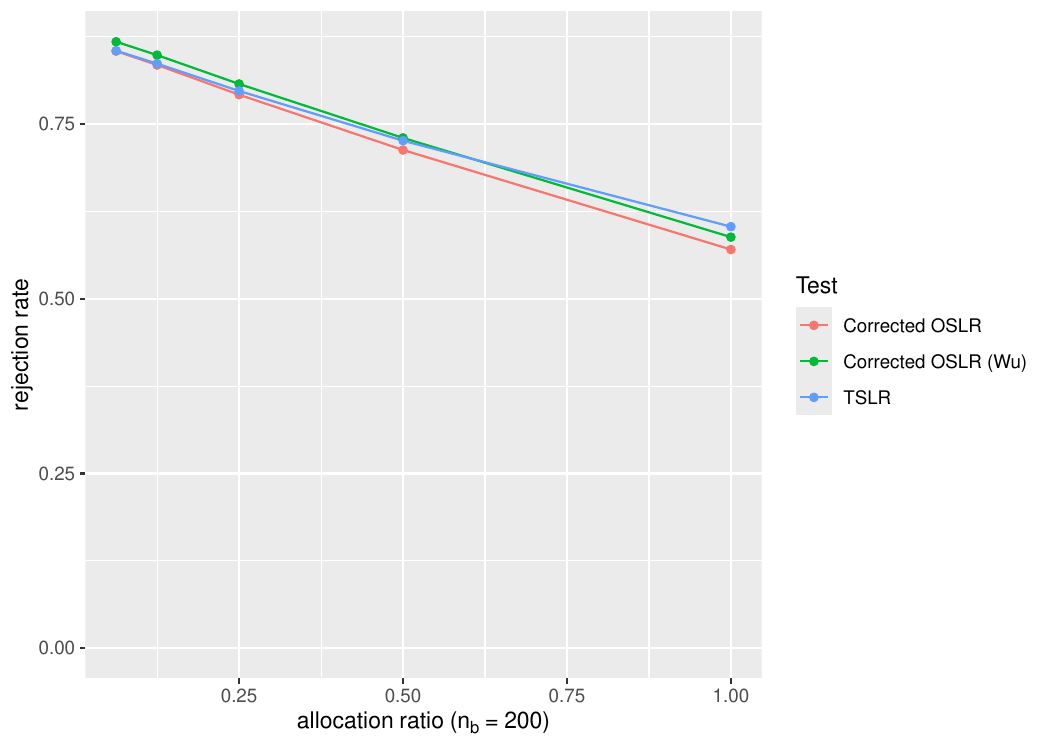"} 
	\end{tabular}
	\caption{Power to reject the null hypothesis $H_0$ for Weibull-distributed data with shape parameter $\kappa = 2$ and hazard ratio $\omega= 0.8$ for four different sample sizes in the experimental cohort in dependence of the allocation ratio.}
\end{figure}

\clearpage

\putbib
\end{bibunit}


\begin{thebibliography}{10}

\bibitem{Andersen:1993}
Per~K Andersen, Ornulf Borgan, Richard~D Gill, and Niels Keiding.
\newblock {\em Statistical models based on counting processes}.
\newblock Springer Science \& Business Media, New York, 2012.

\bibitem{Borgan:1984}
{\O}rnulf Borgan.
\newblock Maximum likelihood estimation in parametric counting process models,
  with applications to censored failure time data.
\newblock {\em Scandinavian Journal of Statistics}, pages 1--16, 1984.

\bibitem{Breslow:1975}
Norman~E Breslow.
\newblock Analysis of survival data under the proportional hazards model.
\newblock {\em International Statistical Review/Revue Internationale de
  Statistique}, pages 45--57, 1975.

\bibitem{Chu:2020}
Chenghao Chu, Shufang Liu, and Alan Rong.
\newblock Study design of single-arm phase ii immunotherapy trials with
  long-term survivors and random delayed treatment effect.
\newblock {\em Pharmaceutical Statistics}, 19(4):358--369, 2020.

\bibitem{Danzer:2023}
Moritz~Fabian Danzer, Andreas Faldum, and Rene~Schmidt and.
\newblock On variance estimation for the one-sample log-rank test.
\newblock {\em Statistics in Biopharmaceutical Research}, 15(2):433--443, 2023.

\bibitem{Danzer:2022}
Moritz~Fabian Danzer, Jannik Feld, Andreas Faldum, and Rene Schmidt.
\newblock Reference curve sampling variability in one--sample log--rank tests.
\newblock {\em Plos one}, 17(7):e0271094, 2022.

\bibitem{Feld:2024}
Jannik Feld, Moritz~Fabian Danzer, Andreas Faldum, Anastasia~Janina Hobbach,
  and Rene Schmidt.
\newblock Two-sample survival tests based on control arm summary statistics.
\newblock {\em Plos one}, 19(6):e0305434, 2024.

\bibitem{Fox:2014}
Elizabeth Fox, Yael~P. Mosse', Holly~M. Meany, James~G. Gurney, Geetika Khanna,
  Hollie~A. Jackson, Gary Gordon, Suzanne Shusterman, Julie~R. Park, Susan~L.
  Cohn, Peter~C. Adamson, Wendy~B. London, John~M. Maris, and Frank~M. Balis.
\newblock Time to disease progression in children with relapsed or refractory
  neuroblastoma treated with abt-751: A report from the children's oncology
  group (anbl0621).
\newblock {\em Pediatric Blood \& Cancer}, 61(6):990--996, 2014.

\bibitem{Ghadessi:2020}
Mercedeh Ghadessi, Rui Tang, Joey Zhou, Rong Liu, Chenkun Wang, Kiichiro
  Toyoizumi, Chaoqun Mei, Lixia Zhang, CQ~Deng, and Robert~A Beckman.
\newblock A roadmap to using historical controls in clinical trials--by drug
  information association adaptive design scientific working group (dia-adswg).
\newblock {\em Orphanet Journal of Rare Diseases}, 15(1):69, 2020.

\bibitem{Jackson:2016}
Christopher Jackson.
\newblock flexsurv: A platform for parametric survival modeling in r.
\newblock {\em Journal of Statistical Software}, 70(8):1–33, 2016.

\bibitem{Kellerer:1983}
Albrecht~M Kellerer and Danielle Chmelevsky.
\newblock Small-sample properties of censored-data rank tests.
\newblock {\em Biometrics}, pages 675--682, 1983.

\bibitem{Klein:2005}
J.P. Klein and M.L. Moeschberger.
\newblock {\em Survival Analysis: Techniques for Censored and Truncated Data}.
\newblock Statistics for Biology and Health. Springer New York, 2005.

\bibitem{Kwak:2014}
Minjung Kwak and Sin-Ho Jung.
\newblock Phase ii clinical trials with time-to-event endpoints: optimal
  two-stage designs with one-sample log-rank test.
\newblock {\em Statistics in medicine}, 33(12):2004--2016, 2014.

\bibitem{Liu:2021}
Na~Liu, Yanhong Zhou, and J~Jack Lee.
\newblock Ipdfromkm: reconstruct individual patient data from published
  kaplan-meier survival curves.
\newblock {\em BMC medical research methodology}, 21(1):111, 2021.

\bibitem{Pocock:1976}
Stuart~J Pocock.
\newblock The combination of randomized and historical controls in clinical
  trials.
\newblock {\em Journal of Chronic Diseases}, 29(3):175--188, 1976.

\bibitem{Pollard:1990}
D.~Pollard.
\newblock {\em Empirical Processes: Theory and Applications}.
\newblock Conference Board of the Mathematical Science: NSF-CBMS regional
  conference series in probability and statistics. Institute of Mathematical
  Statistics, 1990.

\bibitem{Puri:1984}
Madan~L. Puri, Carl~T. Russell, and Thomas Mathew.
\newblock Convergence of generalized inverses with applications to asymptotic
  hypothesis testing.
\newblock {\em Sankhyā: The Indian Journal of Statistics, Series A
  (1961-2002)}, 46(2):277--286, 1984.

\bibitem{Schmidt:2018}
Rene Schmidt, Andreas Faldum, and Robert Kwiecien.
\newblock Adaptive designs for the one-sample log-rank test.
\newblock {\em Biometrics}, 74(2):529--537, 2018.

\bibitem{Szurewsky:2025}
Chloé Szurewsky, Guosheng Yin, and Gwénaël~Le Teuff.
\newblock One-sample survival tests for non-proportional hazards in oncology
  clinical trial, 2025.

\bibitem{vanderVaart:1998}
A.~W. van~der Vaart.
\newblock {\em Asymptotic Statistics}.
\newblock Cambridge Series in Statistical and Probabilistic Mathematics.
  Cambridge University Press, 1998.

\bibitem{vanderVaart:1996}
Aad~W van~der Vaart and Jon~A Wellner.
\newblock {\em Weak Convergence and Empirical Processes: With Applications to
  Statistics}.
\newblock Springer Series in Statistics. Springer, 1996.

\bibitem{White:1982}
Halbert White.
\newblock Maximum likelihood estimation of misspecified models.
\newblock {\em Econometrica}, 50(1):1--25, 1982.

\bibitem{Wu:2014}
Jianrong Wu.
\newblock A new one-sample log-rank test.
\newblock {\em Journal of Biometrics and Biostatistics}, 5(4):1--5, 2014.

\end{thebibliography}


\begin{thebibliography}{1}

\bibitem{Andersen:1993}
Per~K Andersen, Ornulf Borgan, Richard~D Gill, and Niels Keiding.
\newblock {\em Statistical models based on counting processes}.
\newblock Springer Science \& Business Media, New York, 2012.

\bibitem{Danzer:2023}
Moritz~Fabian Danzer, Andreas Faldum, and Rene~Schmidt and.
\newblock On variance estimation for the one-sample log-rank test.
\newblock {\em Statistics in Biopharmaceutical Research}, 15(2):433--443, 2023.

\bibitem{Pollard:1990}
D.~Pollard.
\newblock {\em Empirical Processes: Theory and Applications}.
\newblock Conference Board of the Mathematical Science: NSF-CBMS regional
  conference series in probability and statistics. Institute of Mathematical
  Statistics, 1990.

\bibitem{Puri:1984}
Madan~L. Puri, Carl~T. Russell, and Thomas Mathew.
\newblock Convergence of generalized inverses with applications to asymptotic
  hypothesis testing.
\newblock {\em Sankhyā: The Indian Journal of Statistics, Series A
  (1961-2002)}, 46(2):277--286, 1984.

\bibitem{vanderVaart:1998}
A.~W. van~der Vaart.
\newblock {\em Asymptotic Statistics}.
\newblock Cambridge Series in Statistical and Probabilistic Mathematics.
  Cambridge University Press, 1998.

\bibitem{vanderVaart:1996}
Aad~W van~der Vaart and Jon~A Wellner.
\newblock {\em Weak Convergence and Empirical Processes: With Applications to
  Statistics}.
\newblock Springer Series in Statistics. Springer, 1996.

\end{thebibliography}
\end{document}